\newtheorem{proposition}{Proposition}
\newtheorem{definition}{Definition}
\newtheorem{example}{Example}
\newtheorem{remark}{Remark}
\newtheorem{theorem}{Theorem}
\newcommand{\forgetlongv}[1]{#1}
\newcommand{\forgetshortv}[1]{}
\newcommand{\class}[1]{[#1]}
\newcommand{\N}{{\cal N}}
\newcommand{\0}{{\cal O}}
\newcommand{\ra}{\rightarrow}
\newcommand{\grs}{\mathcal R}
\title{Parallel Graph Rewriting with Overlapping Rules}
\author{Rachid Echahed\inst{1} \and {Aude Maignan\inst{2}}}
\institute{LIG - CNRS and Universit\'e de Grenoble Alpes,  France\\
  \email{rachid.echahed@imag.fr}
\and
LJK - Universit\'e de Grenoble Alpes and CNRS,   France\\
  \email{aude.maignan@imag.fr}
}
\authorrunning{R. Echahed and A. Maignan} 
\titlerunning{Parallel Graph Rewriting}
\begin{document}

\maketitle

\newcommand {\nodes} {{\cal N}}
\newcommand {\ports} {{\cal P}}
\newcommand {\pn} {{\cal P \hspace{-3pt}N}}
\newcommand {\pp} {{\cal P \hspace{-3pt}P}}
\newcommand {\att} {\lambda}
\newcommand {\lhs} {l}
\newcommand {\rhs} {r}
\newcommand {\dom} {Dom}
\newcommand {\trs} {\mathcal R}
\newcommand {\topar} {\Rightarrow}
\newcommand {\pregraph} {pregraph}
\newcommand {\merge} {\equiv}
\newcommand {\admissiblecsgrs} {conflict free environment sensitive graph
rewrite system}
\newcommand {\fp} {full parallel}
\newcommand {\pa} {parallel up to automorphism}
\newcommand{\Att}{{\cal A}}
\newcommand{\esrr}{ESRR}
\newcommand{\esrs}{ESRS}

\begin{abstract}

We tackle the problem of simultaneous transformations of networks
represented as graphs. Roughly speaking, one may distinguish two kinds
of simultaneous or parallel rewrite relations over complex structures
such as graphs: (i) those which transform disjoint subgraphs in
parallel and hence can be simulated by successive mere sequential and
local transformations and (ii) those which transform overlapping
subgraphs simultaneously. In the latter situations, parallel
transformations cannot be simulated in general by means of successive
local rewrite steps.  We investigate this last problem in the
framework of overlapping graph transformation systems.  As parallel
transformation of a graph does not produce a graph in general, we
propose first some sufficient conditions that ensure the closure of
graphs by parallel rewrite relations. Then we mainly introduce and
discuss two parallel rewrite relations over graphs.  One relation is
functional and thus deterministic, the other one is not functional for
which we propose sufficient conditions which ensure its confluence.

\end{abstract}

\section{Introduction}
\label{sect:1}

Graph structures are fundamental tools that help modeling complex
  systems. In this paper, we are interested in the evolution of such structures
  whenever the dynamics is described by means of systems of rewrite
  rules. 
\forgetlongv{
 Rewriting techniques are being investigated for different
structures such as strings \cite{BO93}, trees \cite{baader98} or
graphs \cite{handbook1}. 
}
Roughly speaking, a rewrite rule can be
 defined as a pair $ l \to r$ where the left-hand and the right-hand
sides are of the same structure. A rewrite system, consisting of a set
of rewrite rules, induces a rewrite relation ($\to$) over the
considered structures. The rewrite relation corresponds to a
sequential application of the rules, that is to say, a structure $G$
rewrites into a structure $G'$ if there exits a rule $l \to r$ such
that $l$ occurs in $G$. Then $G'$ is obtained from $G$ by replacing
$l$ by $r$.

Besides this classical rewrite relation, one may think of a parallel
rewrite relation which rewrites a structure $G$ into a structure $G'$
by firing, \emph{simultaneously}, some rules whose left-hand sides
occur in $G$. Simultaneous or parallel rewriting of a structure $G$
into $G'$ can be used as a means to speed up the computations
performed by rewrite systems and, in such a case, parallel rewriting can
be simulated by successive sequential rewrite steps. However, there are situations where
parallel rewrite steps cannot be simulated by sequential steps as in
formal grammars \cite{KleijnR80}, cellular automata (CA)
\cite{Wolfram02} or L-systems \cite{lindenmayer96}.
This latter problem is of
interest in this paper in the case where structures are graphs.

Graph rewriting is a very active area where one may distinguish
two main stream approaches, namely (i) the algorithmic approaches
where transformations are defined by means of the actual actions one
has to perform in order to transform a graph, and (ii) the algebraic
approaches where graph transformations are defined in an abstract
level using tools borrowed from category theory such as pushouts,
pullbacks etc. \cite{handbook1}.  
In this paper, we
introduce a new class of graph rewrite systems following an algorithmic
approach where rewrite rules may overlap. That is to say, in the
process of graph transformation, it may happen that some occurrences
of left-hand sides of different rules can share parts of the graph to
be rewritten.  This overlapping of the left-hand sides, which can be
very appealing in some cases, turns out to
be a source of difficulty to define rigorously the notion of parallel
rewrite steps. In order to deal with such a difficulty we follow the
rewriting  modulo approach (see, e.g. \cite{PetersonS81}) where a
rewrite step can be composed with an equivalence relation. Another
complication comes from the fact that a graph can be reduced in parallel in
a structure which is not always a graph but rather a structure we call
\emph{pregraph}. Thus, we propose sufficient conditions
under which graphs are closed under parallel rewriting. The rewrite
systems we obtain generalize some known models of computation such as
CA, L-systems and more generally substitution systems
\cite{Wolfram02}.  
\forgetlongv{
 As a simple example illustrating this work, we may
 refer to \emph{mesh refinement} \cite{Cou} and \emph{adaptative
 mesh refinement} (AMR) which constitute a very usefull technique in
 physics \cite{Ber,Ple}, astrophysics \cite{Kle, Mai} or in biology
 \cite{Plu}.
}

\forgetlongv{The paper is organized as follows. The next section introduces the
notions of pregraphs and graphs in addition to some preliminary results
linking pregraphs to graphs. In Section~\ref{sect:3}, a class of rewrite
 systems, called \emph{environment sensitive rewrite systems} is
 introduced together with a parallel rewrite
 relation. We show that graphs are not closed under such
 rewrite relation and propose sufficient conditions under which the 
outcome of a rewrite step is always a  graph. Then, in
Section~\ref{sect:4}, we define two particular parallel rewrite relations, one
performs full parallel rewrite steps whereas the second relation uses the
possible symmetries that may occur in the rules and considers only
matches up to automorphisms of the left-hand
sides. 
Section~\ref{sect:5} illustrates our framework through some
examples. 
Concluding
remarks and related work are given in Section~\ref{sect:6}.
}

\forgetshortv{The paper is organized as follows. The next section introduces the
notions of pregraphs and graphs in addition to some preliminary definitions. In Section~\ref{sect:3}, a class of rewrite
 systems, called \emph{environment sensitive rewrite systems} is
 introduced together with a parallel rewrite
 relation. We show that graphs are not closed under such
 rewrite relation and propose sufficient conditions under which the 
outcome of a rewrite step is always a  graph. Then, in
Section~\ref{sect:4}, we define two particular parallel rewrite relations, one
performs full parallel rewrite steps whereas the second one uses the
possible symmetries that may occur in the rules and considers only
matches up to automorphisms of the left-hand
sides. 
Section~\ref{sect:5} illustrates our framework through some
examples. 
Concluding
remarks and related work are given in Section~\ref{sect:6}.
\emph{Due to lack of space the missing proofs are provided in
\cite{EM16}}. Two additional examples which have been implemented are given in the appendix.}

\section{Pregraphs and Graphs}
\label{sect:2}

In this section we first fix some notations and give preliminary
definitions and properties.
$2^A$ denotes the power set of A. $A \uplus B$ stands for the
disjoint union of two sets $A$ and $B$.
In the following, we introduce the notion of (attributed)
\emph{pregraphs}, which denotes a class of structures we use to define
parallel graph transformations.  Elements of a pregraph may be
attributed via a function $\att$ which assigns, to elements of a
pregraph, attributes in some sets which underly a considered
attributes' structure $\Att$. For instance $\Att$ may be a
$\Sigma$-algebra \cite{ST2012} or merely a set.

\begin{definition}[Pregraph] \ \\
A \pregraph\ $H$ is a tuple $H= (\nodes_H, \ports_H, \pn_H, \pp_H,
\Att_H, \att_H)$
such that~:
\begin{itemize}
\item $\nodes_H$ is a finite set of nodes and  $\ports_H$ is a finite set of ports,
\item $\pn_H$ is a relation $\pn_H \subseteq \ports_H \times \nodes_H$,
\item $\pp_H$ is a symmetric binary relation on ports, $ \pp_H
  \subseteq \ports_H \times \ports_H$, 
\item $\Att_H$ is a structure of attributes,
\item $\att_H$ is a function  $\att_H: \ports_H \uplus \nodes_H  \ra
  2^{{\Att_H}}$ such that $\forall x \in \nodes_H \uplus \ports_H,
  card(\att_H(x))$ is finite.
   
\end{itemize}
\end{definition}

An element $(p,n)$ in $\pn_H$ means that port $p$ is \emph{associated}
to node $n$. 
An element $(p_1, p_2)$ in $\pp_H$ means that port $p$ is
\emph{linked} to port $p_2$.  
In a \pregraph, a port can be associated (resp. linked) to several
nodes (resp. ports).

\begin{example}
Figure~\ref{pregraph1} shows an example of a pregraph where
\begin{figure}[t] \centering
\includegraphics[scale=0.35]{./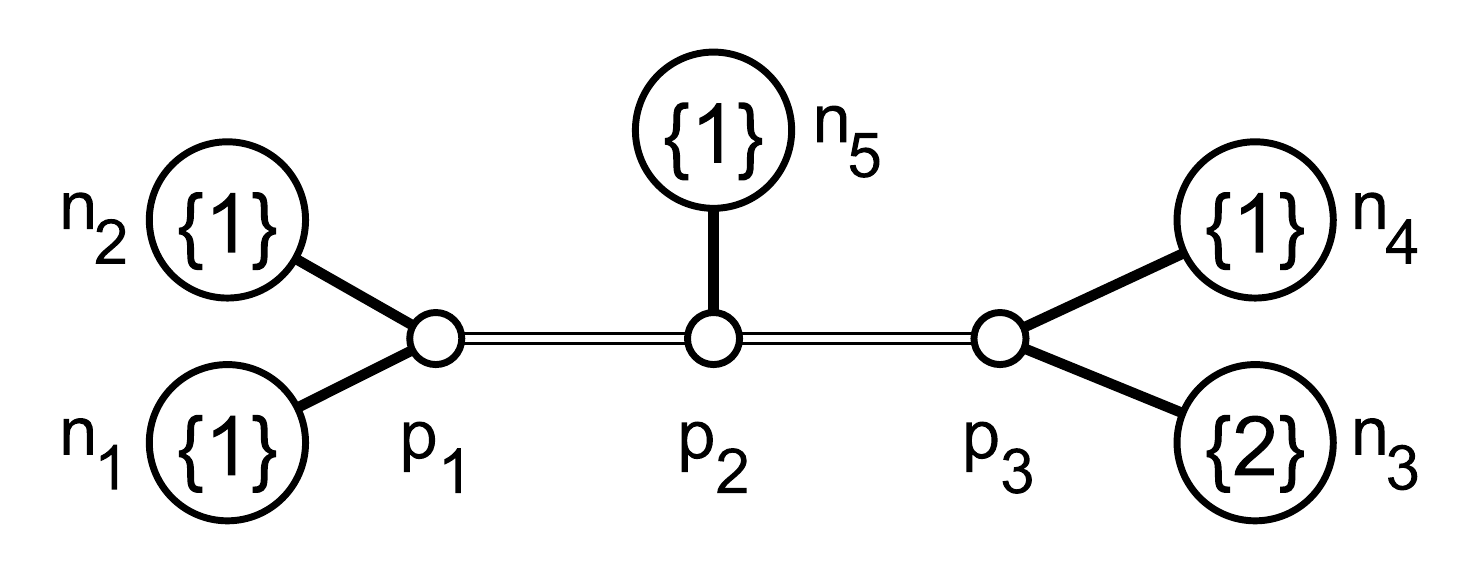} 

\caption{Example of a pregraph $H$ such that: $\Att_H = \mathbb{N}$, $\nodes_H=\{n_1,n_2,n_3,n_4,n_5\}$, $\ports_H=\{p_1,p_2,p_3\}$, 
$\pn_H=\{(p_1,n_1),(p_1,n_2),(p_2,n_5),(p_3,n_3),(p_3,n_4) \}$,
$\pp_H=\{(p_1,p_2),(p_2,p_3),(p_2,p_1),(p_3,p_2) \}$.  $\pp_H$ could
be reduced to its non symmetric port-port connection
$\{(p_1,p_2),(p_2,p_3)\}$. 
${\att_H}(n_i) = \{1\}$ for $i \in
\{1, 2, 4, 5\}$, ${\att_H}(n_3) = \{2\}$. ${\att_H}(p_j) = \emptyset$, for $j \in
\{1, 2, 3\}$. Port attributes ($\emptyset$) have not been displayed on the figure. 
}
\label{pregraph1}
\end{figure} 
the node attributes are natural numbers and Figure~\ref{pregraph1-var2} shows an example where attributes could be expressions such as $\frac{x+y}{2}$.
\begin{figure}[t] \centering
\includegraphics[scale=0.35]{./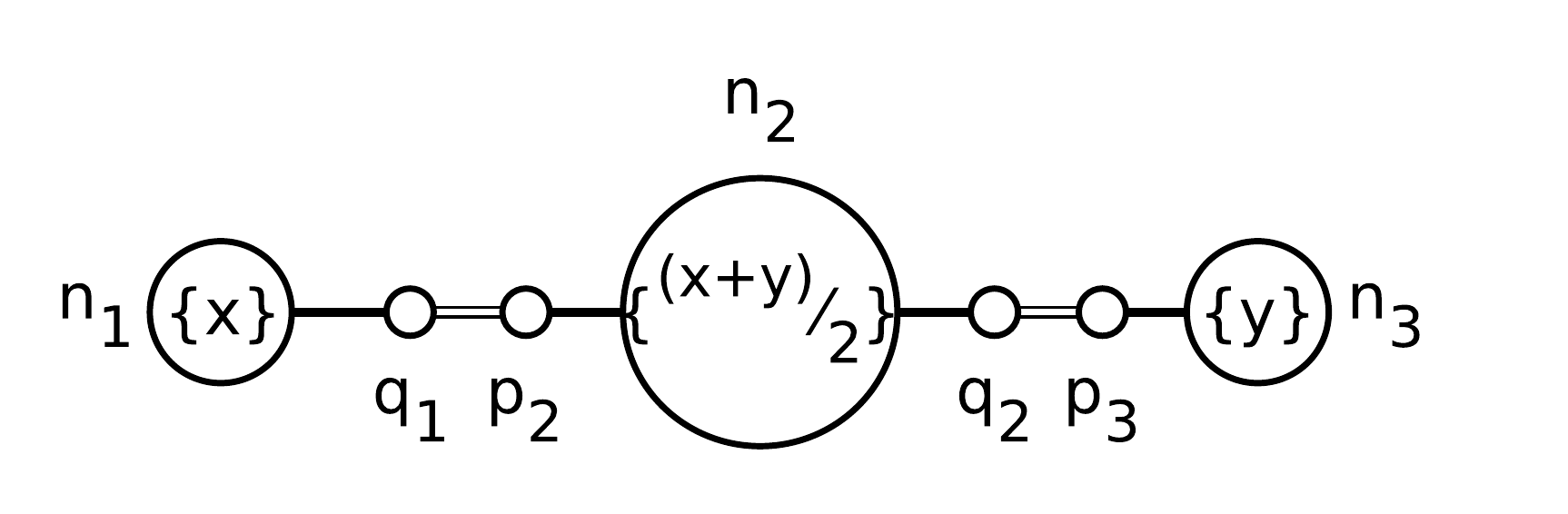} 
\caption{Example of a pregraph $H$ such that: $\Att_H =
  (\mathbb{Q}[x,y]; +, / )$, $\nodes_H=\{n_1,n_2,n_3\}$, $\ports_H=\{p_2,p_3,q_1,q_2\}$, 
$\pn_H=\{(q_1,n_1),(p_2,n_2),(q_2,n_2),(p_3,n_3) \}$,
$\pp_H$  reduced to its non symmetric port-port connection is $\pp_H=\{(p_2,q_1),(p_3,q_2) \}$.  
${\att_H}(n_1) = \{x\}$, ${\att_H}(n_2) = \{(x+y)/2\}$, ${\att_H}(n_3) = \{y \}$,  ${\att_H}(p_2) ={\att_H}(p_3) = \emptyset$, ${\att_H}(q_1) ={\att_H}(q_2) = \emptyset$. 
}
\label{pregraph1-var2}
\end{figure} 
\forgetlongv{ In Figure~\ref{pregraph1-var},
 node attributes are variables ranging over $\mathbb{N}$. The
 introduction of variables as attributes allows one to model node neighborhood-sensitive dynamics at the rewriting rule
level as it will be illustrated in  Section~5.  

\begin{figure}[t] \centering
\includegraphics[scale=0.35]{./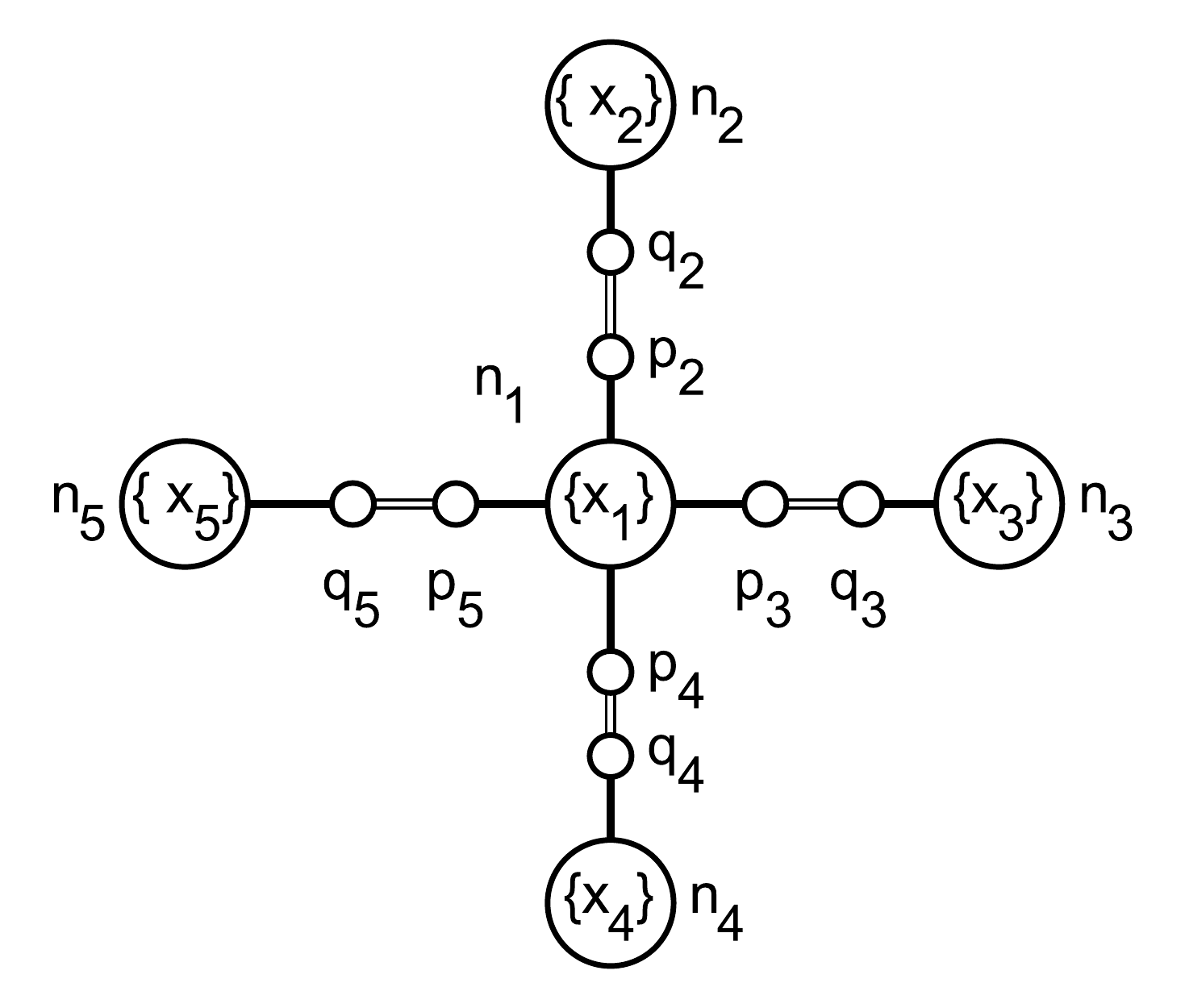}

\caption{Example of a pregraph $H$ such that: $\Att_H = \{\mathbb{N}\cup \{x_1,x_2,x_3,x_4,x_5\}, + , \times,=?= \}$, $\nodes_H=\{n_1,n_2,n_3,n_4,n_5\}$, $\ports_H=\{p_2,p_3,p_4,p_5,q_2,q_3,q_4,q_5\}$, 
$\pn_H=\{(p_2,n_1),(p_3,n_1),(p_4,n_1),(p_5,n_1),(q_2,n_2),(q_3,n_3),(q_4,n_4),(q_5,n_5) \}$,
$\pp_H$  reduced to its non symmetric port-port connection is $\pp_H=\{(p_2,q_2),(p_3,q_3),(p_4,q_4),(p_5,q_5) \}$.  
${\att_H}(n_i) = \{x_i\}$ for $i \in
\{1, 2, 3,4, 5\}$,  ${\att_H}(p_j) ={\att_H}(q_j) = \emptyset$, for $j \in
\{ 2, 3, 4,5\}$. 
}
\label{pregraph1-var}
\end{figure} 
}
\end{example}

Below we introduce the definition of graphs used in this paper. In
order to encode classical graph edges between nodes, restrictions over
port associations are introduced.
Intuitively, an edge $e$ between two nodes $n_1$ and $n_2$ will be
encoded as two semi-edges $(n_1, p_1)$ and $(n_2, p_2)$ with
$p_1$ and $p_2$ being ports which are linked via an association $(p_1, p_2)$.

\begin{definition}[Graph] \label{def_graph}
A graph, $G$, is a \pregraph\ $G=( \nodes, \ports, \pn, \pp, \Att, \att)$
such that~:
\begin{itemize}
\item[(i)]
$\pn$ is a relation $\subseteq \ports \times \nodes$ which associates
at most one node 
to every port\footnote{The relation $\pn$ could be seen as a partial 
  function $\pn : \ports \rightarrow \nodes$ which
associates to a given port $p$, a node $n$, $\pn(p) = n$ ;  thus building a semi-edge  ``port-node''.}. That is to say, $\forall p \in \ports, \forall n_1, n_2
\in \nodes, ( (p,n_1) \in \pn \; and \;
  (p, n_2) \in \pn) \implies n_1 = n_2 $.

\item[(ii)] $\pp$ is a symmetric binary relation\footnote{The relation
  $\pp$ could also be seen as an injective (partial) function from ports to
  ports such that $\forall p \in \ports, \pp(p) \neq p$ and
$\forall p_1, p_2 \in \ports, \pp(p_1)=p_2$ iff $\pp(p_2)=p_1$.} on ports, $ \pp \subseteq \ports \times
  \ports$, such that $\forall p_1, p_2, p_3 \in \ports, ( (p_1,p_2)
  \in \pp$  and 
  $(p_1,p_3) \in \pp) \implies p_2 = p_3 $ and $\forall p \in \ports, (p,p)
  \not\in \pp$. 


\end{itemize}
\end{definition}

The main idea of our proposal is based on the use of equivalence
relations over nodes and ports (merging certain nodes and ports under
some conditions) in order to perform parallel graph rewriting in
presence of overlapping rules. Thus, to a given \pregraph\ $H$, we
associate two equivalence relations on ports, $\merge^P$, and on
nodes, $\merge^N$, as defined  below. 

\begin{definition}[$\merge^P$, $\merge^N$]
\label{def:equivalence}
Let  $H= (\nodes_H, \ports_H, \pn_H, \pp_H, \Att_H,\att_H)$ be a
\pregraph. We define two relations $\merge^P$ and
$\merge^N$ respectively on ports ($\ports_H$) and nodes ($\nodes_H$) of $H$ as follows:
\begin{itemize}
\item $\merge^P$ is defined as $(\pp_H \bullet \pp_H)^*$
\item $\merge^N$ is defined as $(\pn_H^- \bullet \merge^P \bullet \pn)^*$
\end{itemize}
where $\bullet$ denotes relation composition, $^-$ the converse of a
relation and $^*$ the reflexive-transitive closure of a relation.
We write $\class{n}$ (respectively, $\class{p}$) the equivalence class of
  node $n$ (respectively, port $p$).
\end{definition}

Roughly speaking, relation $\merge^P$ is the closure of the first part of condition (ii)
in Definition~\ref{def_graph}. The base case says that if two ports
$p_1$ and $p_2$ are linked to a same port $p$, then $p_1$ and $p_2$
are considered to be equivalent. 
$\merge^N$ is almost the closure of condition (i)
in Definition~\ref{def_graph}. That is, two nodes $n_1$ and $n_2$,
which are associated to a same port (or two equivalent ports),
are considered as equivalent nodes. 

\forgetshortv{
\begin{proposition}
The relations $\merge^P$ 
and $\merge^N$ are equivalence relations.
\end{proposition}
}
\forgetlongv{
\begin{proposition}
Let  $H= (\nodes_H, \ports_H, \pn_H, \pp_H, \Att_H,\att_H)$ be a
\pregraph. The relations $\merge^P$ 
and $\merge^N$ are equivalence relations.
\end{proposition}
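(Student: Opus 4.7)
The plan is to verify the three properties (reflexivity, symmetry, transitivity) for both $\merge^P$ and $\merge^N$ separately, exploiting the fact that both are defined via the reflexive-transitive closure operator $^*$.

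\textbf{Reflexivity and transitivity.} These are immediate for any relation of the form $R^*$: by construction of the reflexive-transitive closure, $R^*$ contains the diagonal and is closed under relational composition. So for both $\merge^P$ and $\merge^N$ these two properties come for free and no further work is needed.

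\textbf{Symmetry.} This is the substantive point, and the argument I would give runs in two steps. First, for $\merge^P$, I would observe that $\pp_H$ is symmetric by the definition of pregraph. A routine calculation using $(A \bullet B)^- = B^- \bullet A^-$ shows that the composition of two symmetric relations is symmetric, hence $\pp_H \bullet \pp_H$ is symmetric. The reflexive-transitive closure of a symmetric relation is symmetric (if $(x_0, x_n) \in R^*$ via $x_0 R x_1 R \cdots R x_n$, then $x_n R x_{n-1} R \cdots R x_0$ witnesses $(x_n, x_0) \in R^*$), so $\merge^P$ is symmetric. Second, I would apply the same strategy to $\merge^N$: let $R = \pn_H^- \bullet \merge^P \bullet \pn_H$. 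Then
\[
R^- = \pn_H^- \bullet (\merge^P)^- \bullet (\pn_H^-)^- = \pn_H^- \bullet \merge^P \bullet \pn_H = R,
\]
using $(X^-)^- = X$ and the symmetry of $\merge^P$ established in the first step. So $R$ is symmetric, and again its reflexive-transitive closure $\merge^N$ is symmetric.

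\textbf{Main obstacle.} There is no genuine obstacle here; the proof is essentially bookkeeping with converses and closures. The only place where one has to be slightly careful is the order of application: symmetry of $\merge^N$ relies on the already established symmetry of $\merge^P$, so the two cases must be proved in that order. Everything else reduces to the two general facts that (a) composition of symmetric relations is symmetric, and (b) $(\cdot)^*$ preserves symmetry.
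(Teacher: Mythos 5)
Your proof is correct and follows essentially the same route as the paper's (much terser) argument: reflexivity and transitivity are built into the $^*$ operator, and symmetry of $\merge^P$ and $\merge^N$ is inherited from the symmetry of $\pp_H$. One small caution: your intermediate claim that ``the composition of two symmetric relations is symmetric'' is false in general (for symmetric $A,B$ one only gets $(A \bullet B)^- = B \bullet A$, which need not equal $A \bullet B$); what saves the argument here is that $\pp_H \bullet \pp_H$ is a symmetric relation composed with \emph{itself}, and $\pn_H^- \bullet \merge^P \bullet \pn_H$ has the palindromic shape $X \bullet Y \bullet X^-$ with $Y$ symmetric, so in both cases the converse computation does close up exactly as you wrote it.
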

 \begin{proof}

 The reflexivity and transitivity of  $\merge^P$ and  $\merge^N$ follow
 directly from their respective definitions.
 The symmetry of $\pp_H$ implies directly the symmetry of $\merge^P$ and $\merge^N$.

\end{proof}
}
\begin{remark}
The  relations $\merge^P$ and
$\merge^N$ 
can be computed incrementally as follows:

\noindent Base cases:
$\merge^P_0 =  \{(x,x) \mid x \in \ports_H \} $ and \;
$\merge^N_0 = \{(x,x) \mid x \in \nodes_H \} $

\noindent Inductive steps:

\noindent {\bf Rule I}: if $q,q' \in \ports_H$ such that, $q \merge^P_i
q'$, $(q,p_1) \in \pp_H$ and $(q', p_2) \in \pp_H$ 
then $p_1 \merge^P_{i+1} p_2 $.
\\
\noindent {\bf Rule II}: if $p_1 \in \ports_H$, $p_2 \in \ports_H$,
$(p_1,n_1) \in \pn_H$,  $(p_2,n_2) \in \pn_H$ and  $p_1 \merge^P_i p_2$ 
then $n_1 \merge^N_{i+1} n_2$. \\
\noindent {\bf Rule III}: If $n_1 \merge^N_i n'$ and $n'
\merge^N_i n_2$ then $n_1 \merge^N_{i+1} n_2$.

\end{remark}

\forgetshortv{
\begin{proposition}
\label{prop1}
The limits of the series $(\merge_i^P)_{i \geq 0}$ and
$(\merge_i^N)_{i \geq 0}$ are
respectively  $\merge^P$ and $\merge^N$.
\end{proposition}
}

\forgetlongv{
\begin{proposition}
\label{prop1}
The limit of the series $(\merge_i^P)_{i \geq 0}$ is $\merge^P$.
\end{proposition}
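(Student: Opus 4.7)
The plan is to prove the two inclusions $\bigcup_{i \geq 0} \merge_i^P \subseteq \merge^P$ and $\merge^P \subseteq \bigcup_{i \geq 0} \merge_i^P$ separately. The chain $(\merge_i^P)_i$ is monotone (each inductive stage preserves previously generated pairs), and since $\ports_H \times \ports_H$ is finite it stabilizes at some finite index; its limit therefore coincides with $\bigcup_{i \geq 0} \merge_i^P$.

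For the forward inclusion I would argue by induction on $i$. The base case $\merge_0^P \subseteq \merge^P$ is immediate from the reflexivity of the Kleene closure $(\pp_H \bullet \pp_H)^*$. For the inductive step, if $(p_1, p_2)$ is produced by Rule I at stage $i+1$ then there exist $q, q'$ with $q \merge_i^P q'$, $(q, p_1) \in \pp_H$ and $(q', p_2) \in \pp_H$. By the induction hypothesis $(q, q') \in (\pp_H \bullet \pp_H)^*$, and by symmetry of $\pp_H$ we also have $(p_1, q) \in \pp_H$. Composing yields $(p_1, p_2) \in \pp_H \bullet (\pp_H \bullet \pp_H)^* \bullet \pp_H \subseteq (\pp_H \bullet \pp_H)^*$.

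For the reverse inclusion I would unfold the definition: $(p, p') \in \merge^P$ iff there is a $\pp_H$-path $p = p_0, p_1, \ldots, p_{2k} = p'$ of even length $2k$, and then induct on $k$. For $k = 0$ we have $p = p'$ and $(p, p') \in \merge_0^P$. For the step, the sub-path $p_1, p_2, \ldots, p_{2k+1}$ has length $2k$, so by the induction hypothesis $(p_1, p_{2k+1}) \in \merge_j^P$ for some $j$. Since $(p_1, p_0) \in \pp_H$ (by symmetry) and $(p_{2k+1}, p_{2k+2}) \in \pp_H$, a single application of Rule I delivers $(p_0, p_{2k+2}) \in \merge_{j+1}^P$, closing the induction.

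The only delicate point is the parity invariant: $\merge^P$ connects precisely the pairs of ports reachable by an \emph{even} number of $\pp_H$-steps, and Rule I preserves this parity by extending an existing equivalence with exactly one $\pp_H$-edge on either side. No substantial obstacle beyond bookkeeping the path length and the stage index $i$ is expected.
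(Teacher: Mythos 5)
Your proposal is correct and follows essentially the same route as the paper: an induction on the stage index to show each $\merge_i^P$ is contained in $(\pp_H\bullet\pp_H)^*$, and an induction on the number of $\pp_H\bullet\pp_H$ compositions (equivalently, half the even path length) for the converse, using the symmetry of $\pp_H$ to orient the links when applying Rule~I. The paper leaves the reverse inclusion as ``straightforward''; you fill it in with exactly the argument intended, so there is nothing substantive to add.
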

\begin{proof}
Since the set of ports is finite then the limit of the series is reached within a finite number of steps.

$\Rightarrow$ : Let $p_1,p_2 \in \ports_H$, such that $p_1 \merge_k^P p_2$ for some $k$,
let us  prove by induction on $k$, that $(p_1,p_2) \in (\pp \bullet \pp)^k$.
\begin{itemize}
\item case $k=0$ : $p_1 \merge_0^P p_2$ thus $p_1=p_2$ and $(p_1,p_2) \in (\pp \bullet \pp)^0 $.
\item Induction step, case $k=k'+1$ : Let us assume $p_1 \merge_{k'+1}^P p_2$. In this case, from rule I, there exist $q,q' \in \ports_H$ such that, $q \merge^P_{k'}
q'$, $(q,p_1) \in \pp_H$ and $(q', p_2) \in \pp_H$.  $q \merge^P_{k'} q' $ implies by induction hypothesis that $(q,q') \in ( \pp \bullet \pp)^{k'} $. 
Thus $(p_1,p_2) \in ( \pp \bullet \pp)^{k'+1}$.
\item Therefore for all $k$, $p_1 \merge_k^P p_2 $ implies $(p_1,p_2) \in (\pp \bullet \pp)^k$, and thus, $p_1 \merge^P p_2 $.
\end{itemize}

$\Leftarrow$
Let $p_1 \merge^P p_2$. By definition of $\merge^P$, there exists a
natural number $k$ such that $(p_1,p_2) \in (\pp \bullet \pp)^k$. It
is then straightforward that
$p_1 \merge_k^P p_2$.
\end{proof}

Likewise, we can easily show the following proposition regarding
relation $\merge^N$. 
\begin{proposition}
\label{prop2}
The limit of the series $(\merge_i^N)_{i \geq 0}$ is $\merge^N$.
\end{proposition}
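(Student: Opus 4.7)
The plan is to mimic the proof of the preceding Proposition~\ref{prop1}, which handled $\merge^P$, but now with the added subtlety that the inductive construction of $\merge^N$ uses two generating rules (Rule II for a single ``$\pn^{-}\bullet\merge^P\bullet\pn$'' step, and Rule III for transitive composition), whereas the definition of $\merge^N$ already bakes the transitive closure into a single operator. I would first observe that since $\nodes_H$ is finite, the ascending chain $\merge_0^N \subseteq \merge_1^N \subseteq \cdots$ stabilizes at some index, so the limit is well-defined; call it $\merge_\infty^N$.

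For the inclusion $\merge_\infty^N \subseteq \merge^N$, I would proceed by induction on $k$ to show that $n_1 \merge_k^N n_2$ implies $(n_1,n_2) \in (\pn_H^{-}\bullet\merge^P\bullet\pn_H)^{*}$. The base case $k=0$ gives $n_1 = n_2$, which lies in the reflexive closure. For the step, a pair added at stage $k+1$ arises either via Rule II, from ports $p_1,p_2$ with $(p_1,n_1),(p_2,n_2)\in\pn_H$ and $p_1\merge^P p_2$ (so already invoking Proposition~\ref{prop1}, the triple $(n_1,p_1,p_2,n_2)$ witnesses one application of $\pn_H^{-}\bullet\merge^P\bullet\pn_H$), or via Rule III from $n_1\merge_k^N n'$ and $n'\merge_k^N n_2$, in which case the induction hypothesis and transitivity of $(\,\cdot\,)^{*}$ finish the job.

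For the reverse inclusion $\merge^N \subseteq \merge_\infty^N$, I would again induct, this time on the exponent $k$ such that $(n_1,n_2)\in (\pn_H^{-}\bullet\merge^P\bullet\pn_H)^{k}$. The case $k=0$ gives $n_1 = n_2$ and so $n_1\merge_0^N n_2$. For $k = k'+1$, peel off a final atomic step: there exist ports $p_1,p_2$ and a node $n'$ with $(n_1,n')\in(\pn_H^{-}\bullet\merge^P\bullet\pn_H)^{k'}$, $(p_1,n')\in\pn_H$, $p_1\merge^P p_2$, and $(p_2,n_2)\in\pn_H$. By Proposition~\ref{prop1} there is some $j$ with $p_1\merge_j^P p_2$, so Rule II yields $n'\merge_{j+1}^N n_2$; combining with the induction hypothesis $n_1\merge_{m}^N n'$ for some $m$ and an application of Rule III gives $n_1\merge_{\max(m,j+1)+1}^N n_2$, hence $n_1\merge_\infty^N n_2$.

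The only real obstacle is the interplay between Rule II and Rule III in the backward direction: one must carefully choose the indices so that a single step of the relation $\pn_H^{-}\bullet\merge^P\bullet\pn_H$ (which internally may require many $\merge^P$-steps) is actually realized in the incremental $\merge^N$-construction. This is resolved by first invoking Proposition~\ref{prop1} to materialize the needed $\merge_j^P$, and then stitching consecutive $\merge^N$-steps together via Rule III; both operations keep us inside some $\merge_i^N$ with finite $i$, which suffices for membership in the limit $\merge_\infty^N$.
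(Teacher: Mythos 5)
Your proof is correct and follows essentially the same route as the paper's: an induction in each direction, with the forward step split on whether Rule~II or Rule~III generated the pair, and Proposition~\ref{prop1} invoked to pass between $\merge_j^P$ and $\merge^P$. The only difference is that you spell out the index bookkeeping in the backward direction (lifting both sides to $\max(m,j+1)$ before applying Rule~III, which implicitly uses the monotonicity of the chain $(\merge_i^N)$), where the paper simply asserts that the existence of a suitable $k'$ follows from Rule~III.
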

\begin{proof}
Since the sets of nodes and ports are finite then the limit of the
series is reached within a finite number of steps.
$\Rightarrow$ : Let $n_1,n_2 \in \nodes_H$, such that $n_1 \merge_k^N n_2$ for some $k$ ,
let us  prove by induction on $k$, that $(n_1,n_2) \in \merge^N$.
\begin{itemize}
\item case $k=0$ : obvious.
\item Induction step, case $k=k'+1$ : Let us assume $n_1
  \merge_{k'+1}^N n_2$. We distinguish two sub-cases according to the
  used rules, i.e. Rule II or Rule III.

    \begin{itemize}
    \item[Rule III.]
According to Rule III, there exists a node $n'$ such that $n_1
\merge_{k'}^N n'$ and $n'
\merge_{k'}^N n_2$. From the induction hypothesis, we have $n_1
\merge_{k'}^N n' \implies n_1 \merge^N n'$ and  $n'\merge_{k'}^N n_2
\implies n' \merge^N n_2$. Then by transitivity of $\merge^N$ we have
$n_1 \merge^N n_2$.

    \item[Rule II.]
According to Rule II, there exist two ports $p_1$ and $p_2$ in
$\ports_H$ such that
$(p_1,n_1) \in \pn_H$,  $(p_2,n_2) \in \pn_H$ and  $p_1 \merge^P_{k'}
p_2$.
From Proposition~\ref{prop1},  $p_1 \merge^P_{k'}
p_2 \implies p_1 \merge^P p_2$, and thus, there exists an index $i$
such that $(p_1, p_2) \in (\pp_H \bullet \pp_H)^i$. Since $(p_1, n_1) \in
\pn_H$ and $(p_2, n_2) \in \pn_H$ we conclude that $(n_1, n_2) \in
(\pn_H^{-} \bullet \merge^P \bullet \pn_H)$ 
     \end{itemize}

\end{itemize}

$\Leftarrow$
Let $n_1 \merge^N n_2$. Then by definition of $\merge^N$ there exists
a natural number $k$ such that $(n_1,n_2) \in (\pn_H^{-} \bullet
\merge^P \bullet \pn_H)^k$. 
This means that there is a chain of connections consisting of tuples of
the form $(m_i, p_i).p_i \merge^P p'_i.(p'_i , m_{i+1})$ for $i \in \{0, ..., k\}$
such that $m_0 = n_1$ and $m_k = n_2$.
From rule III, it is easy to deduce the existence of $k'$, such that $n_1
\merge^N_{k'} n_2$.

\end{proof}
}
The equivalence relations  $\merge^P$ 
and $\merge^N$ are used to introduce the notion of quotient \pregraph\
as defined below.

\begin{definition} [Quotient Pregraph]
\label{def_pregraph}
Let  $H= (\nodes_H, \ports_H, \pn_H, \pp_H, \Att_H, \att_H)$ be a \pregraph\ and
$\merge^P$ and  $\merge^N$ two equivalence relations over ports and
nodes respectively.
We write $\overline{H}$ the \pregraph\  $\overline{H}= (\nodes_{\overline{H}},
\ports_{\overline{H}}, \pn_{\overline{H}}, \pp_{\overline{H}}, \Att_{\overline{H}}, \att_{\overline{H}})$
where
$\nodes_{\overline{H}}=\{\class n \; | \; n \in \nodes_H\}$,  
$\ports_{\overline{H}}=\{\class p \; | \; p \in \ports_H\}$,  
$\pn_{\overline{H}}=\{(\class p, \class n) \; | \; (p,n) \in \pn_H \}$, 
$\pp_{\overline{H}}=\{(\class p, \class q) \; | \; (p,q) \in \pp_H \}$,
$\Att_{\overline{H}} = \Att_H $
 and
 $\att_{\overline{H}}(\class x)=\cup_{x' \in \class
   x}\att_{H}(x') $ where $ {\class x} \in \nodes_{\overline{H}} \uplus \ports_{\overline{H}}$
\end{definition}

\forgetlongv{
\begin{example}
Let  $H= (\nodes_H, \ports_H, \pn_H, \pp_H, \Att_H, \att_H)$ be a pregraph as
depicted on the left of Figure \ref{ex1}, with 
$\nodes_H=\{n_1,n_2,n_3,n_4 \}$, $\ports_H=\{p_1,p_2,p_3,p_4,p_5,p_6\}$, 
$\pn_H= \{(p_1,n_1), (p_5,n_1), $
$(p_2,n_2), (p_6,n_2),(p_4,n_3), (p_3,n_4) \}$,
$\pp_H= \{(p_1,p_2), (p_1,p_4), (p_2,p_3), (p_3,p_4),(p_5,p_6) \}$, \\
$\Att_H=\mathbb{N}$, $\att_{\nodes_{H}}(n_1)=\att_{\nodes_{H}}(n_4)=\{1\}$, $\att_{\nodes_{H}}(n_2)=\att_{\nodes_{H}}(n_3)=\{2\}$, $\forall i\in \{1,...,6\}, \att_{\ports_{H}}(p_i)= \emptyset$, 

We obtain $\overline{H}= (\nodes_{\overline{H}}, \ports_{\overline{H}}, \pn_{\overline{H}},
\pp_{\overline{H}}, \Att_{\overline H}, \att_{\overline{H}})$, as
depicted on the right of Figure \ref{ex1}, with
\begin{itemize}
\item $\nodes_{\overline{H}}=\{\class{n_1}, \class{n_2} \}$ with $ \class{n_1}= \{n_1,n_4\}$, $\class{n_2}= \{n_2,n_3\}$,
\item $\ports_{\overline{H}}=\{\class{p_1}, \class{p_2}, \class{p_5}, \class{p_6} \}$  with $ \class{p_1}= \{p_1,p_3\}$, $\class{p_2}= \{p_2,p_4\}$, $\class{p_5}= \{p_5\}$, $\class{p_6}= \{p_6\}$,
\item $\pn_{\overline{H}}=\{(\class{p_1},\class{n_1}), (\class{p_5}, \class{n_1}),(\class{p_2},\class{n_2}), (\class{p_6}, \class{n_2}) \}$,
\item $\pp_{\overline{H}}=\{(\class{p_1},\class{p_2}), (\class{p_5}, \class{p_6})\} $,
\item $\Att_{\overline H}=\Att_{ H}$
\item $\att_{\nodes_{\overline{H}}}(\class{n_1})=\{1\}$, $\att_{\nodes_{\overline{H}}}(\class{n_2})=\{2\}$, $\att_{\ports_{\overline{H}}}(\class{p_1})=\att_{\ports_{\overline{H}}}(\class{p_2})=\att_{\ports_{\overline{H}}}(\class{p_5})=\att_{\ports_{\overline{H}}}(\class{p_6})=\emptyset$.
\end{itemize}

\begin{figure}[t] \centering
\includegraphics[scale=0.32]{./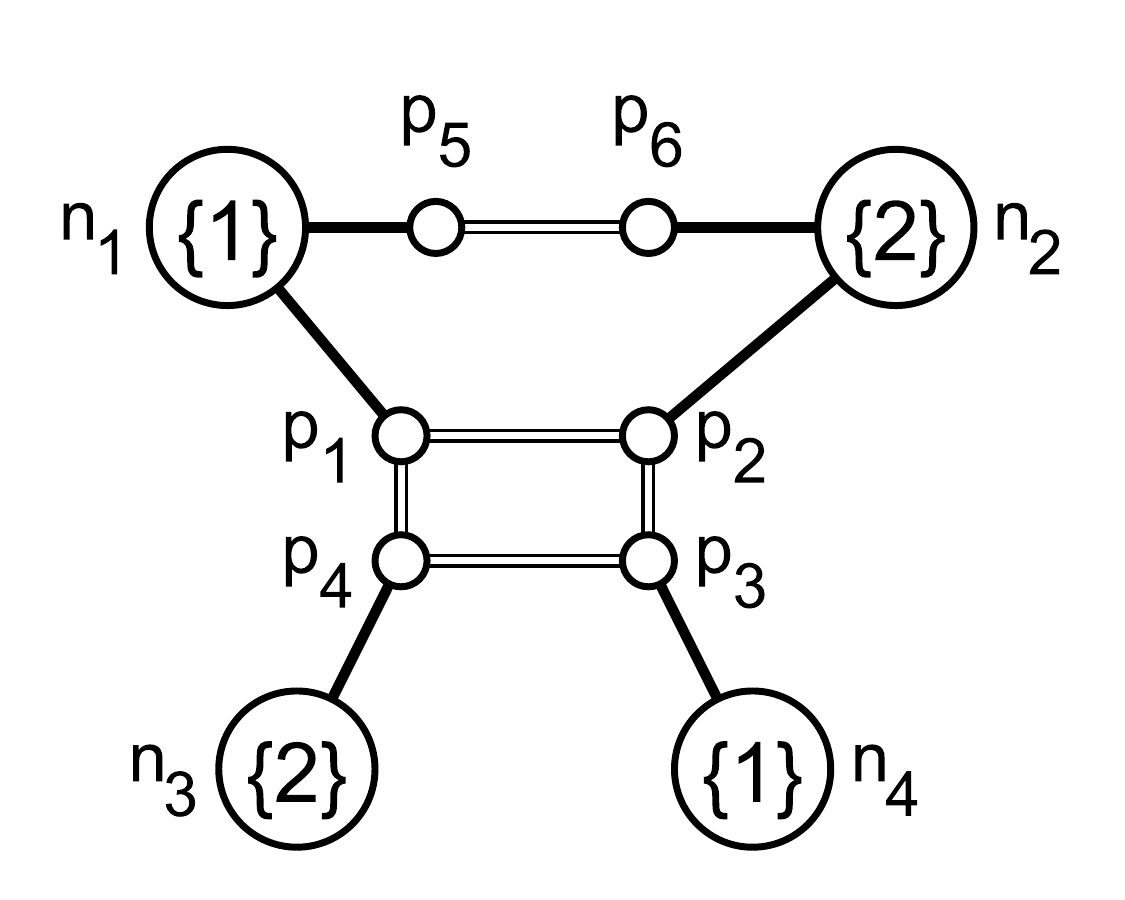}  \hspace{-0.2cm} \includegraphics[scale=0.32]{./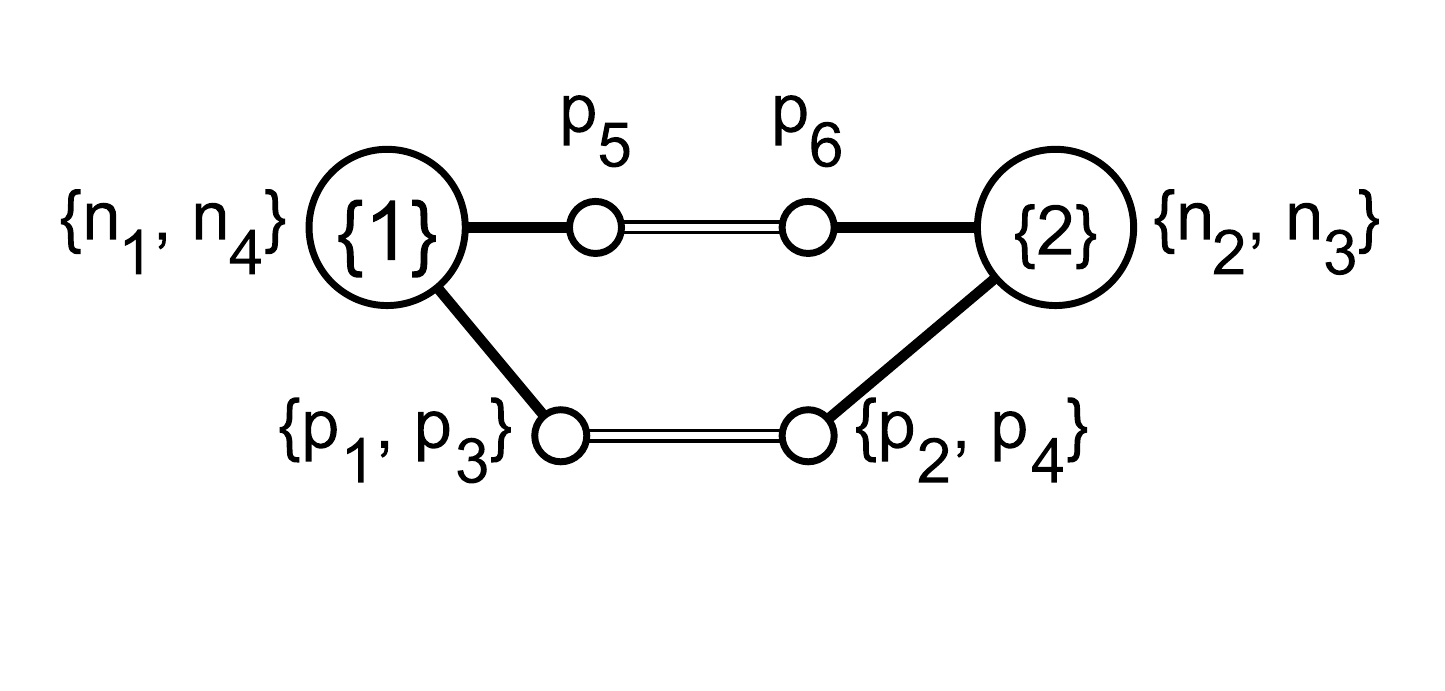}


\caption{(a) A pregraph $H$
 (b) and its corresponding quotient pregraph $\overline H$.}
\label{ex1}
\end{figure} 

\end{example}
}

\begin{example}
Figure~\ref{ex2} illustrates two computations of quotient pregraphs. 
\end{example}

\begin{remark} If $H$ is a graph, $\overline
  H$ and $H$ are isomorphic. Indeed,
in a graph, a port can be associated (resp. linked) to at most one
node (resp. one port).
\end{remark}

\forgetlongv{
The following definition introduces some vocabulary and notations.

\begin{definition}[Path, Loop]
\begin{itemize}
\item A path $\pi_H (p_1, p_k)$ between two  (possibly the same) nodes
  $p_1$ and $p_k$ in a pregraph $H$ is a sequence of ports of $H$
  written $\pi_H(p_1, p_k) = (p_1,p_2, \ldots, p_k)$ such that
  $\{(p_i,p_{i+1}) \; | \;
  i=1,2, \ldots,k-1\} \subseteq \pp_H$ and $k \in \mathbb{N}$ with  $k > 0$.
\item The length of a path $\pi_H(p_1, p_k) =(p_1,p_2, \ldots, p_k)$
  is $\sharp ({\pi_H}(p_1,p_k))=k-1$. 
\item  An even path (resp. odd path) is a path such that its length is even (resp. odd).
\item
A loop is a closed path, i.e., a path $\pi_H=(p_1,p_2, \ldots, p_k)$
such that $p_1=p_k$. An even loop  (resp. odd loop) is an even closed
path (resp. odd closed path). 
\end{itemize}
\end{definition}

From the definitions above, one can show the following statements.

\begin{proposition}\label{prop-even}
  Let $H= (\nodes_H, \ports_H, \pn_H, \pp_H, \Att_H, \att_H)$ be a \pregraph.
  Let $q, q'$ be two ports in $\ports_H$.  $q \merge^P q'$ iff there
  exists an even path between $q$ and $q'$ in $H$.
\end{proposition}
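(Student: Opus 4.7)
The plan is to prove both directions by a straightforward induction on the length/iteration count, unwinding the definition $\merge^P = (\pp_H \bullet \pp_H)^*$. The key observation is that a single application of $\pp_H \bullet \pp_H$ traverses exactly two port-to-port links, so the $k$-fold composition corresponds precisely to paths of length $2k$ in the sense of the definition of path. Thus ``even path'' and ``finite iterate of $\pp_H\bullet\pp_H$'' are really the same object viewed from two angles, and the proposition should reduce to making this correspondence precise.

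For the forward direction ($\Rightarrow$), I would assume $q\merge^P q'$ and use the definition of the reflexive-transitive closure to fix a $k\ge 0$ such that $(q,q')\in(\pp_H\bullet\pp_H)^k$. Then I would prove by induction on $k$ the statement: \emph{there exists a path of length $2k$ between $q$ and $q'$}. In the base case $k=0$, we have $q=q'$ and the trivial path $(q)$ of length $0$ works. In the inductive step, I would unpack $(q,q')\in(\pp_H\bullet\pp_H)^{k+1}$ as the existence of some $r$ with $(q,r)\in(\pp_H\bullet\pp_H)^k$ and $(r,q')\in\pp_H\bullet\pp_H$; the latter gives an intermediate port $s$ with $(r,s),(s,q')\in\pp_H$, and splicing the path of length $2k$ provided by the induction hypothesis with $(r,s,q')$ yields a path of length $2(k+1)$.

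For the backward direction ($\Leftarrow$), I would assume the existence of an even path $\pi_H(q,q')=(q=p_0,p_1,\ldots,p_{2k}=q')$ and prove by induction on $k$ that $(q,q')\in(\pp_H\bullet\pp_H)^k$, which then gives $q\merge^P q'$. The base case $k=0$ is immediate since the identity relation is contained in the reflexive-transitive closure. For the inductive step, I would use $(p_0,p_1),(p_1,p_2)\in\pp_H$ to get $(p_0,p_2)\in\pp_H\bullet\pp_H$, apply the induction hypothesis to the sub-path $(p_2,\ldots,p_{2k})$ of length $2(k-1)$ to obtain $(p_2,q')\in(\pp_H\bullet\pp_H)^{k-1}$, and compose the two.

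I do not expect a serious obstacle here: the statement is essentially a reformulation of the definition of $\merge^P$, and both inductions are routine. The only point requiring a little care is the base case $k=0$, where one must remember that a path of length $0$ is a single-port sequence $(q)$ and that $(\pp_H\bullet\pp_H)^0$ is the identity relation on $\ports_H$, so that the biconditional holds trivially when $q=q'$.
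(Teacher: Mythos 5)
Your proof is correct and takes essentially the same approach as the paper, which simply observes that $q \merge^P q'$ means $(q,q') \in (\pp_H \bullet \pp_H)^*$ and identifies this directly with the existence of an even path. You merely spell out the two routine inductions that the paper leaves implicit.
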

\forgetlongv{
 \begin{proof}
  If $q \merge^P q'$ then, by definition,
   $(q,q') \in (\pp_H \bullet \pp_H)^*$ hence there exists an even path
   between $q$ and $q'$.
 Conversely, if there exists an even path between $q$ and $q'$ in $H$
 then $(q,q') \in (\pp_H \bullet \pp_H)^*$ and thus $q \merge^P q'$.
 \end{proof}
}

\begin{proposition}\label{prop:nooddloop}
  Let $H= (\nodes_H, \ports_H, \pn_H, \pp_H, \Att_H, \att_H)$ be a
  \pregraph. $\overline{H}$ is a graph iff
$H$ has no odd loop.
\end{proposition}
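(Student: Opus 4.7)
The plan is to verify the three graph conditions of Definition~\ref{def_graph} one by one for $\overline{H}$, observing that only the absence of self-loops in $\pp_{\overline H}$ is sensitive to odd loops in $H$; the other two come for free.

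First, I would show that condition (i) (functionality of $\pn$) always holds in $\overline H$: if $([p],[n_1])$ and $([p],[n_2])$ both belong to $\pn_{\overline H}$, then by definition of the quotient there are representatives $(p_1,n'_1),(p_2,n'_2)\in\pn_H$ with $p_1\merge^P p\merge^P p_2$ and $n'_i\merge^N n_i$. Since $\merge^P$ is an equivalence, $p_1 \merge^P p_2$, and by the base case $\pn_H^{-}\bullet\merge^P\bullet\pn_H$ of $\merge^N$ we obtain $n'_1\merge^N n'_2$, hence $[n_1]=[n_2]$.

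Next, I would treat the uniqueness part of condition (ii). Suppose $([p_1],[p_2])$ and $([p_1],[p_3])$ lie in $\pp_{\overline H}$, witnessed by $(q_1,q_2),(q'_1,q_3)\in\pp_H$ with $q_1\merge^P p_1\merge^P q'_1$, $q_2\merge^P p_2$, $q_3\merge^P p_3$. By Proposition~\ref{prop-even}, concatenating the edge $(q_2,q_1)$, an even $\pp_H$-path from $q_1$ to $q'_1$, and the edge $(q'_1,q_3)$ yields an even path from $q_2$ to $q_3$, so $q_2\merge^P q_3$ and $[p_2]=[p_3]$. The symmetry of $\pp_{\overline H}$ is inherited directly from the symmetry of $\pp_H$.

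Finally, the crux is the no-self-loop requirement $([p],[p])\notin\pp_{\overline H}$, which I claim is exactly equivalent to $H$ having no odd loop. For the forward direction, assume $\overline H$ is a graph and suppose for contradiction there is an odd loop $p_1,p_2,\dots,p_{2k+1},p_1$ in $H$. Then the subpath $p_2,\dots,p_{2k+1},p_1$ has even length $2k$, so $p_2\merge^P p_1$ by Proposition~\ref{prop-even}; combined with $(p_1,p_2)\in\pp_H$ this gives $([p_1],[p_1])\in\pp_{\overline H}$, a contradiction. Conversely, if $H$ has no odd loop, suppose some $([p],[p])\in\pp_{\overline H}$; then there exist $(q_1,q_2)\in\pp_H$ with $q_1,q_2\in[p]$, so $q_1\merge^P q_2$, which by Proposition~\ref{prop-even} produces an even $\pp_H$-path from $q_1$ to $q_2$, and appending the edge $(q_2,q_1)$ produces an odd loop, contradicting the hypothesis. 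The main obstacle is really only the careful bookkeeping with representatives and parities of path lengths; the odd-loop characterization supplied by Proposition~\ref{prop-even} does all the substantive work.
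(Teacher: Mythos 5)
Your proof is correct and follows essentially the same route as the paper: both reduce the problem to showing that $\pp_{\overline H}$ has a self-loop iff $H$ has an odd loop, with Proposition~\ref{prop-even} (even paths characterize $\merge^P$) doing the parity bookkeeping. You are in fact somewhat more thorough than the paper, which dismisses the functionality of $\pn_{\overline H}$ and $\pp_{\overline H}$ as holding ``by construction,'' whereas you verify these two conditions explicitly via the definitions of $\merge^N$ and $\merge^P$.
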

\forgetlongv{
 \begin{proof}
 Let $\overline{H}= (\nodes_{\overline{H}},
 \ports_{\overline{H}}, \pn_{\overline{H}}, \pp_{\overline{H}}, \Att_{\overline{H}},, \att_{\overline{H}})$.
 The relations $\pn_{\overline{H}}$ and $\pp_{\overline{H}}$ are functional by construction.
 In order to show that $\overline{H}$ is indeed a graph, 
 It remains to prove that $\pp_{\overline H}$ is not anti-reflexive iff
 there is an odd loop in $H$.

 \begin{itemize}
 \item[$\Rightarrow$] 
   Assume that $\pp_{\overline H}$ is not anti-reflexive. Then, there exists
   $q \in \ports_H$ such that $(\class q, \class q) \in \pp_{\overline H}$.
   Thus, either $(q, q) \in \pp_H$ which constitute an odd loop of
   length one or there exists a port $q'$, different from $q$, such
   that $(q,q') \in \pp_H$ and $q' \merge^P q$. In this last case, from
   Proposition\ref{prop-even}, $q' \merge^P q$ implies the existence of
   an even path from $q'$ to $q$. Then adding the link $(q,q')$ to this
   path builds a loop from $q$ to $q$ in $H$ of odd length.

 \item[$\Leftarrow$] Assume there is an odd loop containing a port $q$
   in $H$. Then either the loop is of the form $(q,q)$ and thus
   $(q, q) \in \pp_H$ and in this case
   $(\class q, \class q) \in \pp_{\overline H}$, or there exists a port $q'$
   different from $q$ such that the loop is of the form
   $(q, q', \ldots, q)$. In this last case, $(q,q') \in \pp_H$ and the
   path $\pi_H(q',q)$ is even. Thus, $\class q= \class{q'}$ which
   implies that $(\class q, \class q) \in \pp_{\overline H}$.
 \end{itemize}
 \end{proof}
}

}
\begin{figure}[t] \centering
\includegraphics[scale=0.28]{./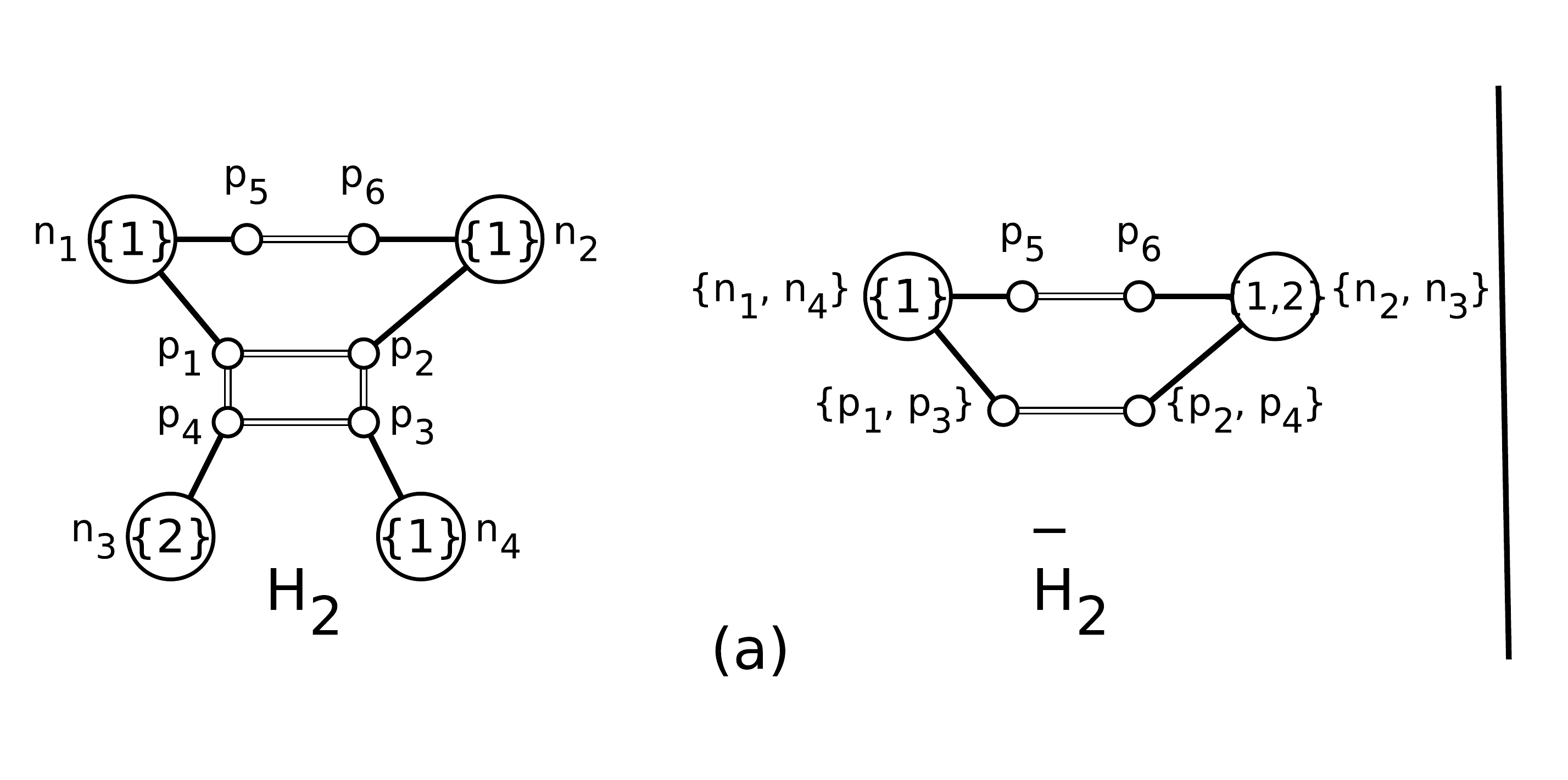} \hspace{-0.3cm}  \includegraphics[scale=0.28]{./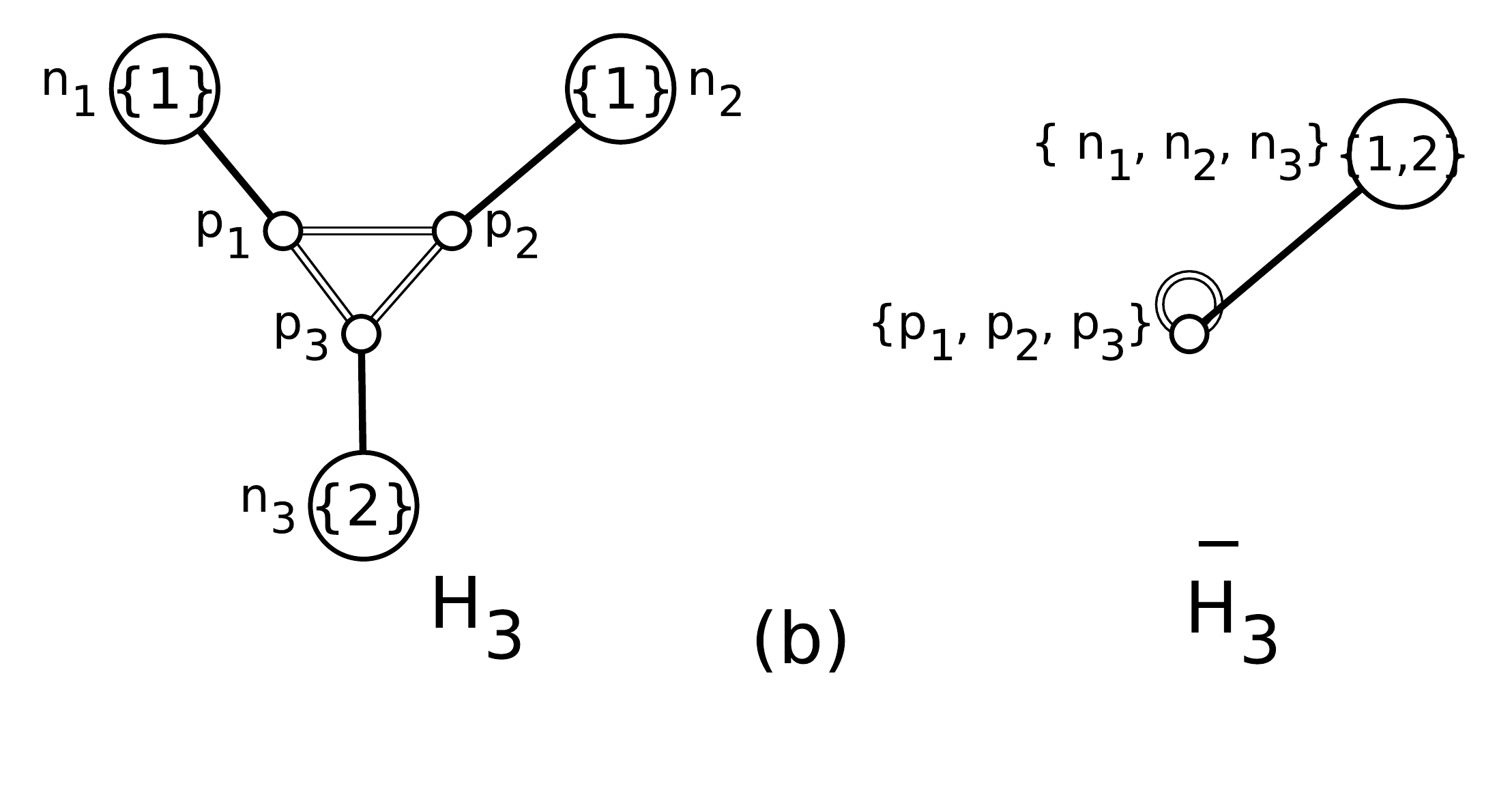}


\caption{(a) A pregraph $H_2$ and its corresponding quotient pregraph
  $\overline{H}_2$ which is a graph. (b) A pregraph $H_3$ and its
  corresponding quotient pregraph $\overline{H}_3$ which is not a
  graph.}
\label{ex2}
\end{figure} 

Below, we define the notion of homomorphisms of
pregraphs and graphs. This notion assumes the existence of
homomorphisms over attributes \cite{DuvalEPR14}.


\begin{definition}[Pregraph and Graph  Homomorphism] 
  Let $l=( \nodes_l, \ports_l, \pn_l, \pp_l, \Att_l, \att_l)$ and
  $g=( \nodes_g, \ports_g, \pn_g, \pp_g, \Att_g, \att_g)$ be two pregraphs.
 Let $a : \Att_l \to \Att_g$ be a homomorphism
over attributes.
A
  \emph{pregraph homomorphism}, $ h^a : l \ra g $, between $l$ and $g$,
  built over attribute homomorphism $a$,
  is defined by two functions $h^a_N : \nodes_l \ra \nodes_g$ and
  $h^a_P : \ports_l \ra \ports_g$ such that 
(i)  $\forall (p_1, n_1) \in \pn_l$, $(h^a_P(p_1), h^a_N(n_1)) \in \pn_g, $
(ii) $\forall (p_1,p_2) \in \pp_l$,  $ (h^a_P(p_1), h^a_P(p_2)) \in \pp_g$, 
(iii)
$\forall n_1 \in \nodes_l, a(\att_l(n_1)) \subseteq
\att_{g}(h^a_N(n_1))$
and (iv)
$\forall p_1 \in \ports_l, a(\att_{l}(p_1)) \subseteq
\att_{g}(h^a_P(p_1))$.

A graph homomorphism is a pregraph homomorphism between two graphs.
\end{definition}

\noindent
Notation: Let $E$ be a set of attributes, we
denote by $a(E)$ the set $a(E) = \{a(e) \;|\; e \in E\}$.

\begin{proposition} \label{iso-pregraph} Let $H$ and $H'$ be two
  isomorphic pregraphs. Then $\overline{H}$ and $\overline{H'}$ are isomorphic.
\end{proposition}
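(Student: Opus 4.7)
The plan is to lift a given pregraph isomorphism $\phi = (\phi_N, \phi_P): H \to H'$, built over an attribute isomorphism $a: \Att_H \to \Att_{H'}$, to an isomorphism $\overline{\phi}$ between the quotient pregraphs $\overline{H}$ and $\overline{H'}$, by setting $\overline{\phi}_N(\class{n}) = \class{\phi_N(n)}$ and $\overline{\phi}_P(\class{p}) = \class{\phi_P(p)}$. Everything will follow once I establish that $\phi$ transports the equivalences $\merge^P$ and $\merge^N$ from $H$ to $H'$.

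First I would show that for all $p_1, p_2 \in \ports_H$, $p_1 \merge^P_H p_2$ iff $\phi_P(p_1) \merge^P_{H'} \phi_P(p_2)$, and analogously for $\merge^N$. The forward direction is easy by induction on the inductive definition in the Remark: the base case uses reflexivity, Rule I uses the fact that $\phi$ is a homomorphism so carries $\pp_H$ into $\pp_{H'}$, Rule II uses the preservation of $\pn_H$, and Rule III is transitivity. The reverse direction uses exactly the same argument applied to the inverse homomorphism $\phi^{-1}$, which exists because $\phi_N, \phi_P, a$ are all bijections and the homomorphism conditions become equalities (inclusions in both directions), so $\phi^{-1}$ is a pregraph homomorphism built on $a^{-1}$.

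Next I would verify that $\overline{\phi}_N$ and $\overline{\phi}_P$ are well defined and bijective. Well-definedness is exactly the forward implication above; injectivity is the reverse implication combined with injectivity of $\phi_N, \phi_P$; surjectivity follows since every class $\class{n'}$ in $\overline{H'}$ has preimage $\class{\phi_N^{-1}(n')}$. The pregraph homomorphism conditions for $\overline{\phi}$ over the attribute homomorphism $a$ are then routine consequences of the definition of the quotient: $\pn_{\overline{H}}$, $\pp_{\overline{H}}$ are defined by taking classes of elements of $\pn_H$, $\pp_H$, which $\phi$ preserves by hypothesis, and the attribute sets satisfy $\att_{\overline{H}}(\class{x}) = \bigcup_{x' \in \class{x}} \att_H(x')$, so $a(\att_{\overline{H}}(\class{x})) = \bigcup_{x' \in \class{x}} a(\att_H(x')) \subseteq \bigcup_{x' \in \class{x}} \att_{H'}(\phi(x')) \subseteq \att_{\overline{H'}}(\class{\phi(x)})$, because every $x' \merge x$ maps to some element of $\class{\phi(x)}$ by the preservation result. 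Equality in fact holds when we apply the same reasoning to $\phi^{-1}$, confirming that $\overline{\phi^{-1}}$ is the inverse of $\overline{\phi}$.

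The main obstacle is the preservation of the equivalence relations in both directions; it is not literally hard, but it is the place where one must carefully use that $\phi$ is an isomorphism (not just a homomorphism) so that the generating relations $\pn_H$ and $\pp_H$ are transported bijectively, and hence the chains defining $\merge^P$ and $\merge^N$ correspond bijectively. Everything else is a bookkeeping check that the quotient construction is functorial on isomorphisms.
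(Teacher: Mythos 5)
Your proof is correct and follows essentially the same route as the paper: both lift the isomorphism to the quotients by acting on equivalence classes and reduce everything to showing that it transports $\merge^P$ and $\merge^N$ in both directions, after which the homomorphism and attribute conditions are routine. The only (harmless) difference is that you establish this transport by induction on the inductive rules of the Remark and by passing to the inverse homomorphism, whereas the paper argues via the even-path characterization of $\merge^P$ and a direct case analysis for $\merge^N$.
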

\forgetlongv{
\begin{proof}
 Let $ h^a: H \to H'$ be a pregraph isomorphism. 
 We define $ {\overline{h^a}}: \overline{H} \to \overline{H}'$ as follows: for all ports
 $p, p'$ in $H$, nodes $n$ in H,  ${\overline{h^a_N}}(\class n) = {\class{
     h^a_N(n)}}$, ${\overline{h^a_P}}(\class p) = {\class{h^a_P(p)}}$, ${\overline{h^a}}(\class p, \class
 n) =
 {(\class{h^a_P(p)}, \class {h^a_N(n)})}$, ${\overline{h^a}}(\class p, \class {p'}) =
 {(\class{h^a_P(p)}, \class {h^a_P(p')})}$.

 $\overline{h^a}$ is clearly a pregraph
 isomorphism between $\overline{H}$ and $\overline{H'}$. 
 $\overline{h}$ is well defined as illustrated
 in the following three items.
 \begin{itemize}
 \item 
 We show that for all
 ports  $p_1 , p_2$ in $H$,
   $p_1 \merge_H^P p_2$ iff $h^a_P(p_1) \merge_{H'}^P
 h^a_P(p_2)$ :

 $h^a_P(p_1) \merge_{H'}^p h^a_P(p_2)$ iff there exists a path $\pi_{H'}(h^a_P(p_1),h^a_P(p_2))=(q_1, q_2, \ldots q_{k-1}, q_k)$ such that $q_1=h^a_P(p_1)$, $q_k=h^a_P(p_2)$ and 
 $\sharp_{\pi_{H'}}(h^a_P(p_1),h^a_P(p_2))$ is even (see, Proposition~\ref{prop-even}).
 It is equivalent to say that $(p_1, (h^a_P)^{-1}(q_2), \ldots, (h^a_P)^{-1}(q_{k-1}),p_2)$ is an even path of $H$ because $h$ is an isomorphism.
 We conclude that $p_1 \merge_H^p p_2$.

 \item For all nodes $n_1 , n_2$ in $H$, we show that $n_1 \merge_H^n n_2$ iff
 $h^a_N(n_1) \merge_{H'}^n h^a_N(n_2)$.

 By definition, $h^a_N(n_1) \merge_{H'}^n h^a_N(n_2)$ iff (a) $h^a_N(n_1)=h^a_N(n_2)$ or (b) there exists $q,q' \in \ports_{H'}$, $q \merge_{H'}^pq'$, $(q, h^a_N(n_1)) \in \pn_{H'}$ and $(q', h^a_N(n_2)) \in \pn_{H'}$
 or (c) there exists $n'' \in \nodes_{H'}$, $h^a_N(n_1) \merge_{H'}^n n''$ and $h^a_N(n_2) \merge_{H'}^n n''$.

 $h^a$ is an isomorphism thus (a) is equivalent to $n_1=n_2$ and (b) is equivalent to there exists $(h^a_P)^{-1}(q),(h^a_P)^{-1}(q') \in \ports_{H}$, $(h^a_P)^{-1}(q) \merge_{H'}^p (h^a_P)^{-1}(q')$, $((h^a_P)^{-1}(q), n_1) \in \pn_{H}$ and $((h^a_P)^{-1}(q'), n_2) \in \pn_{H}$.
The cases (a) and (b) are straight foward.
 Let us focus our attention on the case (c) :
 $h^a_N(n_1) \merge_{H'}^n h^a_N(n_2)$ such that it exists $n'' \in
 \nodes_{H'}$ and $q,q' \in \ports_{H'}$ which verify the condition $\{
 (q,n''),(q, h(n_1)),( q',n'') ,(q',h^a_N(n_2))\} \subset \pn_{H'}$. This is
 equivalent to : $(h^a_N)^{-1}(n'') \in \nodes_{H}$ and $(h^a_P)^{-1}(q),(h^a_P)^{-1}(q')
 \in \ports_{H}$ which verify the condition $$\{
 ((h^a_P)^{-1}(q),(h^a_P)^{-1}(n'')),((h^a_P)^{-1}(q), n_1),( (h^a_P)^{-1}(q'),(h^a_N)^{-1}(n''))
 ,((h^a_P)^{-1}(q'), n_2)\} \subset \pn_{H}$$ and in that case
 $ n_1 \merge^N_H n_2$. Moreover because $\merge^N$ is transitive we obtain that (c)
 is equivalent to : there exists $(h^a_N)^{-1}(n'') \in \nodes_{H}$, $n_1 \merge_{H}^n (h^a_N)^{-1} (n'')$ and $n_2 \merge_{H}^n (h^a_N)^{-1}(n'')$.
 Thus, $n_1 \merge_H^n n_2$.
\item
The pregraph homorphism of  $H \to H'$ and  $\overline H \to \overline H'$ are 
built over the same attribute homomorphism $a$, thus by construction 
the points (iii) and (iv) of the previous definition
imply 
$\cup_{n_i \in [n]} a(\att_{H}(n_i)) \subset \cup_{n_i \in [n]} \att_{H'}( h^a_N(n_i))$ and 
$\cup_{p_i \in [p]} a(\att_{ H}(p_i)) \subset \cup_{p_i \in [p]} \att_{H'}( h^a_P(p_i))$ 
 
thus
$a(\att_{\overline H}([n])) \subset \att_{\overline H'}(\overline h^a_N([n]))$ and 
$a(\att_{\overline H}([p])) \subset \att_{\overline H'}(\overline h^a_P([p]))$ 
 and $\overline{h^a}$ is a pregraph homomorphism from $\overline H$ to $\overline H'$.

\end{itemize}

 \end{proof}
}
We end this section by defining an equivalence relation over
pregraphs.

\begin{definition}[Pregraph equivalence] \label{pregraph_equiv}
Let $G_1$ and $G_2$ be two pregraphs. We say that $G_1$ and $G_2$ are
equivalent and write $G_1 \merge G_2$ iff the quotient pregraphs
$\overline{G_1}$ and $\overline{G_2}$ are isomorphic.
\end{definition}

\forgetlongv{
The relation $\merge$ over pregraphs is obviously an equivalence relation.
}

\section{Graph Rewrite Systems}
\label{sect:3}
In this section, we define the considered  rewrite systems and provide
sufficient conditions ensuring the closure of graph structures under
the defined rewriting process.

\begin{definition}[Rewrite Rule, Rewrite System, Variant]
A rewrite rule is a pair $\lhs \ra \rhs$ where $\lhs$ and $\rhs$ are
graphs over the same sets of attributes. A rewrite system $\grs$ is a set
of rules. A \emph{variant} of a rule $\lhs \ra \rhs$ is a rule $\lhs'
\ra \rhs'$ where nodes, ports as well as the variables of the
attributes are renamed with \emph{fresh} names.
\end{definition}

Let $\lhs' \ra \rhs'$ be a variant of a rule $\lhs \ra \rhs$. Then
there is a \emph{renaming mapping} $h^a$, built over an attribute renaming
$ a: \Att_{\lhs} \to \Att_{\lhs'}$, and consisting of two maps
$h^a_N$ and $h^a_P$ over nodes and ports respectively :
$h^a_N : \nodes_{\lhs} \cup \nodes_{\rhs} \to \nodes_{\lhs'} \cup
\nodes_{\rhs'}$
 and
$h^a_P : \ports_{\lhs} \cup \ports_{\rhs} \to \ports_{\lhs'} \cup
\ports_{\rhs'} $
such that, the elements in $\nodes_{\lhs'}$ and $\ports_{\rhs'}$ are
new and the restrictions of $h^a$ to $\lhs \to \lhs'$ (respectively
$\rhs \to \rhs'$) are graph isomorphisms.

In general, parts of a left-hand side of a rule remain unchanged in
the rewriting process. This feature is taken into account in the
definition below which refines the above notion of rules by
decomposing the left-hand sides into an \emph{environmental} part,
intended to stay unchanged, and a \emph{cut} part which is
intended to be removed. As for the right-hand sides, they  are
partitioned into a \emph{new} part consisting of added items and an
environmental part (a subpart of the left-hand side) 
which is used to
specify how the new part is connected to the environment.

\begin{definition}[Environment Sensitive Rewrite Rule, Environment Sensitive Rewrite System]
An environment sensitive rewrite rule is a rewrite rule (\esrr \ for short)  $\lhs \ra \rhs$
where $\lhs$ and $\rhs$ are
graphs over the same attributes $\Att$
such that:

\vspace{-0.3cm}
\paragraph{-}$\lhs = (\nodes_\lhs, \ports_\lhs, \pn_\lhs, \pp_\lhs, \Att, 
  \att_\lhs)$
 where

\noindent
$ \nodes_{\lhs}= \nodes_{\lhs}^{cut} \uplus$ $\nodes_{\lhs}^{env},
  \ports_{\lhs}=\ports_{\lhs}^{cut} \uplus \ports_{\lhs}^{env},
  \pn_{\lhs}= \pn_{\lhs}^{cut} \uplus \pn_{\lhs}^{env},
  \pp_{\lhs} = \pp_{\lhs}^{cut} \uplus \pp_{\lhs}^{env} and  \ 
  \att_{\lhs}=\att_{\lhs}^{cut} \uplus$\footnote{Here, the function
    $\att_{\lhs}$ is considered as a set of pairs $(x,
    \att_{\lhs}(x))$, i.e. the graph of $\att_{\lhs}$.} $\att_{\lhs}^{env}$
with some additional constraints~:
     \begin{itemize}
       \item[(1)] {\bf on $\pn_{\lhs}$ :}
$\forall (p,n) \in \pn_\lhs, (n \in \nodes_\lhs^{cut}$ or $p \in \ports_\lhs^{cut}) \Rightarrow (p,n)
\in \pn_\lhs^{cut}$.


       \item[(2)] {\bf on $\pp_{\lhs}$ :}
$\forall (p,p') \in \pp_\lhs, p \in \ports_\lhs^{cut} \Rightarrow
(p,p') \in \pp_\lhs^{cut}$.

      \item[(3)] {\bf on $\att_{\lhs}$ :}
 $\forall n \in \nodes_\lhs^{cut}, (n, \att_{\lhs}(n)) \in
 \att_{\lhs}^{cut}$ \; and \;
 $\forall p \in \ports_\lhs^{cut}, (p, \att_{\lhs}(p)) \in
 \att_{\lhs}^{cut}$.

      \end{itemize}
\vspace{-0.3cm}
\paragraph{-} $\rhs = (\nodes_\rhs, \ports_\rhs, \pn_\rhs, \pp_\rhs, \Att, \att_\rhs)$ where

\noindent
$\nodes_{\rhs}= \nodes_{\rhs}^{new} \uplus
  \nodes_{\rhs}^{env},\ports_{\rhs}=\ports_{\rhs}^{new} \uplus
  \ports_{\rhs}^{env},
  \pn_{\rhs}= \pn_{\rhs}^{new} \uplus
  \pn_{\rhs}^{env},\pp_{\rhs} = \pp_{\rhs}^{new} \uplus
  \pp_{\rhs}^{env},\att_{\rhs}= \att_{\rhs}^{new} \uplus
  \att_{\rhs}^{env}$
such that $\nodes_{\rhs}^{env} \subseteq \nodes_{\lhs}^{env}$,
$\ports_{\rhs}^{env} \subseteq \ports_{\lhs}^{env}$,  $\nodes_\rhs^{new} \cap \nodes_\lhs^{env} = \emptyset$ and
$\ports_\rhs^{new} \cap \ports_\lhs^{env} = \emptyset$
  with some additional constraints~:
\begin{itemize}

\item[(4)] {\bf on $\pn_{\rhs}$ :}
$\forall  (p,n) \in \pn_{\rhs}, (p, n) \in \pn_{\rhs}^{env}$ iff $( p \in \ports_{\rhs}^{env}$ and $n \in
\nodes_{\rhs}^{env}$ and $(p, n) \in \pn_{\lhs}^{env})$.

\item[(5)] {\bf on $\pp_{\rhs}$ :}
$\forall (p,p') \in \pp_{\rhs},
 (p,p') \in \pp_{\rhs}^{env}$ iff $(p \in \ports_{\rhs}^{env}$ and $p' \in \ports_{\rhs}^{env}$   and $(p, p') \in \pp_{\lhs}^{env})$.

 \item[(6)] {\bf on $\att_{\rhs}$ :}
$\forall n \in \nodes_{\rhs}^{env}, (\exists y, (n, y) \in \att_{\rhs}^{env})$ iff $
 (\att_{\rhs}^{env}(n) = \att_{\lhs}^{env}(n))
 $ ~;~ \\
 $\forall y \in \ports_{\rhs}^{env}, (\exists y, (p, y) \in \att_{\rhs}^{env})$ iff $
 (\att_{\rhs}^{env}(p) = \att_{\lhs}^{env}(p)).
 $

       \end{itemize}
An environment sensitive rewrite system (\esrs \ for short) is a set of environment sensitive
rewrite rules.

\end{definition}

Roughly speaking, 
 constraints (1),
(2) and (3) ensure that if an item (node or port) is to be removed
(belonging to a ``cut'' component) then links involving that item
should be removed too as well as its attributes (constraint
(3)). Constraints (4) and (5) ensure that links, considered as new
(belonging to ``new'' components), of a given right-hand side of a
rule, should not appear in the left-hand side. Constraint (6) ensures
that an item (node or port) is newly attributed in the right-hand side
iff it is a new item or it was assigned by $\att_{\lhs}^{cut}$ in the
left-hand side.

 \begin{proposition}
 Let  $\lhs \ra \rhs$ be a an \esrr\ 
such that
 $\lhs = (\nodes_{\lhs}= \nodes_{\lhs}^{cut} \uplus
   \nodes_{\lhs}^{env},\ports_{\lhs}=\ports_{\lhs}^{cut} \uplus
   \ports_{\lhs}^{env},\pn_{\lhs}= \pn_{\lhs}^{cut} \uplus
   \pn_{\lhs}^{env},\pp_{\lhs} = \pp_{\lhs}^{cut} \uplus
   \pp_{\lhs}^{env},\Att, \att_{\lhs}=\att_{\lhs}^{cut} \uplus
   \att_{\lhs}^{env})$
 and 
 $\rhs = (\nodes_{\rhs}= \nodes_{\rhs}^{new} \uplus
   \nodes_{\rhs}^{env},\ports_{\rhs}=\ports_{\rhs}^{new} \uplus
   \ports_{\rhs}^{env},
   \pn_{\rhs}= \pn_{\rhs}^{new} \uplus
   \pn_{\rhs}^{env},\pp_{\rhs} = \pp_{\rhs}^{new} \uplus
   \pp_{\rhs}^{env},\Att, \att_{\rhs}=\att_{\rhs}^{new} \uplus
   \att_{\rhs}^{env})$.

 Then the following properties hold:
 \begin{itemize}
 \item 
 For all $  (p,n) \in \pn_{\rhs}$, $(p, n) \in  \pn_{\rhs}^{new}$ iff
   $ p \in \ports_{\rhs}^{new}$ or $n \in \nodes_{\rhs}^{new}$ or $(p
   \in \ports_{\rhs}^{env}$ and $n \in \nodes_{\rhs}^{env}$    and $
   (p, n) \not\in \pn_{\lhs}^{env})$
 
 \item
  For all $(p, p') \in \pp_{\rhs}$, $(p,p') \in \pp_{\rhs}^{new}$ iff $p\in
         \ports_{\rhs}^{new}$ or $p' \in \ports_{\rhs}^{new}$ or 
         $(p\in \ports_{\rhs}^{env}$ and $p' \in \ports_{\rhs}^{env}$
         and    $(p, p') \not\in \pp_{\lhs}^{env}(p))$ 

\item  For all $x \in \nodes_{\rhs} \cup \ports_{\rhs}$, $(x,
  \att_{\rhs} (x)) \in \att_{\rhs}^{new}$ iff $x \in
  \nodes_{\rhs}^{new} \cup \ports_{\rhs}^{new}$ or $(x, \att_{\lhs}(x)) \in \att _{\lhs}^{cut}$
 \end{itemize}

 \end{proposition}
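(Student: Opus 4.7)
My plan is to observe that each of the three bullet points is simply the logical complement of the corresponding \esrr\ constraint (namely (4), (5), (6)), phrased on the complementary summand of a disjoint decomposition. So the proof reduces to contrapositive reasoning plus the fact that $\nodes_\rhs = \nodes_\rhs^{new} \uplus \nodes_\rhs^{env}$, $\ports_\rhs = \ports_\rhs^{new} \uplus \ports_\rhs^{env}$, $\pn_\rhs = \pn_\rhs^{new} \uplus \pn_\rhs^{env}$, etc., are all genuine disjoint unions.

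For the first bullet I would argue as follows. Fix $(p,n)\in\pn_\rhs$. Since $\pn_\rhs = \pn_\rhs^{new}\uplus \pn_\rhs^{env}$, being in the ``new'' part is exactly failing to be in the ``env'' part. By constraint (4), $(p,n)\notin\pn_\rhs^{env}$ iff $p\notin\ports_\rhs^{env}$, or $n\notin\nodes_\rhs^{env}$, or $(p,n)\notin\pn_\lhs^{env}$. Using the disjoint splittings of $\ports_\rhs$ and $\nodes_\rhs$, the first two disjuncts become $p\in\ports_\rhs^{new}$ and $n\in\nodes_\rhs^{new}$ respectively, which yields the stated disjunction (the third disjunct can be safely constrained to the case where the first two fail, giving the ``env and env and not in $\pn_\lhs^{env}$'' form without changing the truth value). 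The second bullet follows by literally the same reasoning applied to $\pp_\rhs$ using constraint (5).

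For the third bullet I would split on whether $x$ is new or env. If $x\in\nodes_\rhs^{new}\cup\ports_\rhs^{new}$, then $x\notin\nodes_\lhs\cup\ports_\lhs$, so no pair of the form $(x,\cdot)$ can lie in $\att_\rhs^{env}$ (the env component is indexed by items carried over from $\lhs$); hence $(x,\att_\rhs(x))\in\att_\rhs^{new}$ and the right-hand side of the equivalence also holds via its first disjunct. If instead $x\in\nodes_\rhs^{env}\cup\ports_\rhs^{env}\subseteq\nodes_\lhs^{env}\cup\ports_\lhs^{env}$, then constraint (6) says that $(x,\att_\rhs(x))\in\att_\rhs^{env}$ iff $(x,\att_\lhs(x))\in\att_\lhs^{env}$ (with $\att_\rhs(x)=\att_\lhs(x)$). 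Negating this biconditional and using $\att_\rhs=\att_\rhs^{new}\uplus\att_\rhs^{env}$ together with $\att_\lhs=\att_\lhs^{cut}\uplus\att_\lhs^{env}$ gives $(x,\att_\rhs(x))\in\att_\rhs^{new}$ iff $(x,\att_\lhs(x))\in\att_\lhs^{cut}$, which is the second disjunct on the right-hand side.

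I do not expect a real obstacle here; the only mildly delicate point is bullet (3), where one must be explicit that $\att_\rhs^{env}$ contains only entries whose first coordinate belongs to $\nodes_\rhs^{env}\cup\ports_\rhs^{env}$ (so that a new item automatically lands in $\att_\rhs^{new}$). This is immediate from the definition of \esrr, since $\att_\rhs^{env}$ is required to agree, on its domain, with $\att_\lhs^{env}$, whose domain lies in $\nodes_\lhs^{env}\cup\ports_\lhs^{env}$.
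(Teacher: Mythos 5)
Your proof is correct and coincides with the paper's (implicit) justification: the proposition is stated there without proof precisely because, as you show, each bullet is obtained by taking complements of constraints (4)--(6) inside the disjoint decompositions $\pn_{\rhs}=\pn_{\rhs}^{new}\uplus\pn_{\rhs}^{env}$, $\pp_{\rhs}=\pp_{\rhs}^{new}\uplus\pp_{\rhs}^{env}$, $\att_{\rhs}=\att_{\rhs}^{new}\uplus\att_{\rhs}^{env}$ and $\att_{\lhs}=\att_{\lhs}^{cut}\uplus\att_{\lhs}^{env}$. One harmless imprecision: in the third bullet, for a new item $x$ you only need $x\notin\nodes_{\lhs}^{env}\cup\ports_{\lhs}^{env}$ (which the definition guarantees via $\nodes_\rhs^{new}\cap\nodes_\lhs^{env}=\emptyset$ and $\ports_\rhs^{new}\cap\ports_\lhs^{env}=\emptyset$), not the stronger claim $x\notin\nodes_{\lhs}\cup\ports_{\lhs}$, and that weaker fact already suffices for your conclusion.
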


\begin{example}
Let us consider a rule $R_T : l \rightarrow r$ which specifies a way to 
transform a triangle into four triangle graphs.
Figure~\ref{ex5} depicts the rule.  Black parts should be understood
as members of the \emph{cut} component of the left-hand side, yellow
items are in the \emph{environment} parts. The red items are \emph{new} in the
right-hand side.  More precisely, $l^{env}$ consists of
$\nodes_l^{env}= \{\alpha, \beta, \gamma \}$,
$\ports_l^{env}= \{\alpha_1,\alpha_2, \beta_1,\beta_2, \gamma_1,
\gamma_2 \}$,
$\pn_l^{env}=\{(\alpha_1,\alpha),(\alpha_2, \alpha), (\beta_1,\beta),
 (\beta_2,\beta),$ $(\gamma_1, \gamma), (\gamma_2, \gamma)\}$, and
$\pp_l^{env}=\emptyset$.  The cut component of the left-hand side
consists of three port-port connections and their corresponding
symmetric connections which will not be written
 : $\pp_{\lhs}^{cut}=\{(\alpha_2,\beta_1),(\beta_2,\gamma_1),
(\gamma_2,\alpha_1)\}$.
The environment component in the right-hand side allows to reconnect
the newly introduced items. 
$r^{env}$ consists of the ports
$\ports_r^{env}= \{\alpha_1,\alpha_2, \beta_1,\beta_2, \gamma_1,
\gamma_2 \}$.
 $r^{new}$ consists of $\nodes_r^{new}= \{U,V,W\}$, $\ports_r^{new}= \{
u_1,u_2, u_3, u_4, v_1, v_2, v_3, v_4, w_1,w_2,w_3,w_4 \}$, \\ $\pn_r^{new}= \{
(u_1,U),(u_2,U), (u_3,U),(u_4,U),(v_1,V), (v_2,V), (v_3,V),(v_4,V),
(w_1,W),(w_2,W),\\ (w_3,W), (w_4,W) \}$ and  $\pp_r^{new}= \{ (\alpha_1, w_2), (\alpha_2,u_1), (\beta_1,u_2), (\beta_2,v_1), (\gamma_1,v_2), (\gamma_2,w_1),
(u_3, w_3),\\ (u_4, v_4),   (w_4, v_3) \}$. The sets of attributes are empty in this example.

\begin{figure}[t] \centering
\includegraphics[scale=0.25]{./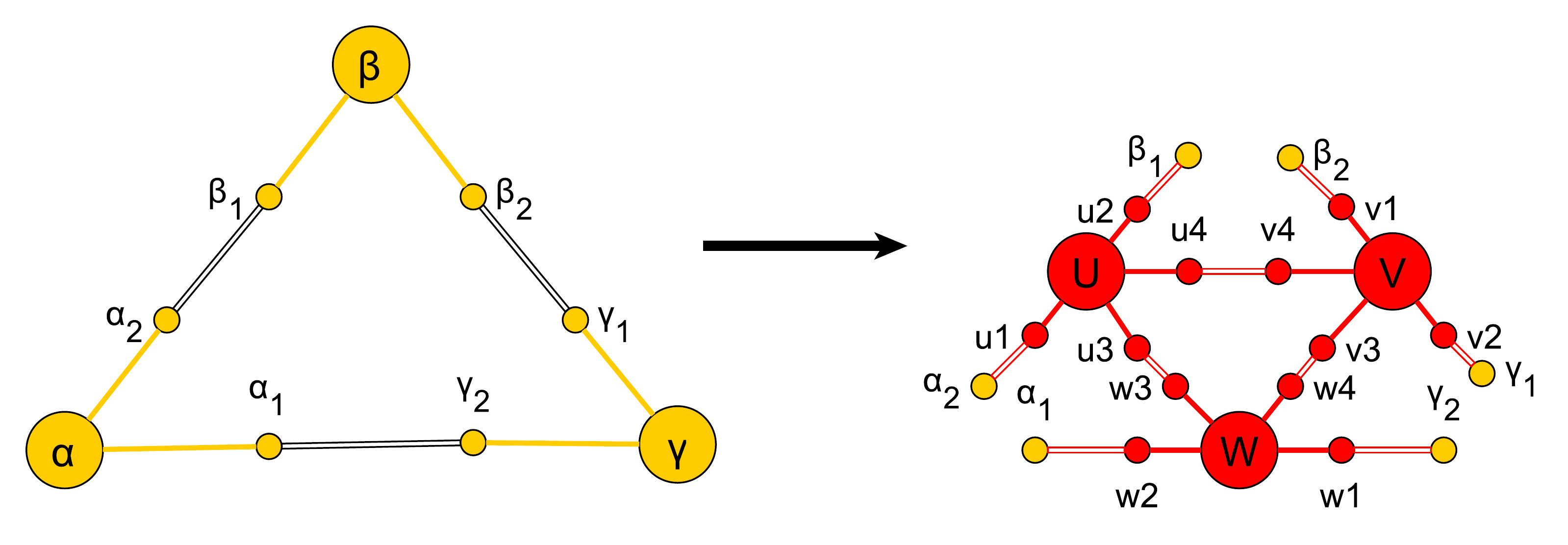}
\caption{ Rule $R_T$}
\label{ex5}
\end{figure}

\end{example}

\begin{remark}
From the definition of an environment sensitive rule,  the environment
components $\rhs^{env} = (\nodes_{\rhs}^{env},  \ports_{\rhs}^{env}, 
\pn_{\rhs}^{env},  \pp_{\rhs}^{env},  \Att, \att_{\rhs}^{env})$  and
$\lhs^{env} = (\nodes_{\lhs}^{env},  \ports_{\lhs}^{env},
\pn_{\lhs}^{env},  \pp_{\lhs}^{env},  \Att, \att_{\lhs}^{env})$ are graphs.
However, since $\pp_{\lhs}^{cut}$ may include  ports in
$\ports_{\lhs}^{env}$ and $\pn_{\lhs}^{cut}$ may include  nodes in
$\nodes_{\lhs}^{env}$ or ports in $\ports_{\lhs}^{env}$,
 the cut component $\lhs^{cut} = (\nodes_{\lhs}^{cut},  \ports_{\lhs}^{cut},
\pn_{\lhs}^{cut},  \pp_{\lhs}^{cut},  \Att, \att_{\lhs}^{cut})$ is in
general neither a graph  nor a pregraph.
For the same reasons $\rhs^{new} = (\nodes_{\rhs}^{new},  \ports_{\rhs}^{new},
\pn_{\rhs}^{new},  \pp_{\rhs}^{new}, \\ \Att, \att_{\rhs}^{new})$ is in
general neither a graph nor a pregraph.
\end{remark}

Finding an occurrence of a left-hand side of a rule within a graph to be
transformed  consists in finding a \emph{match}. This notion is
defined below. 

\begin{definition}[Match]
Let $l$ and $g$ be two graphs. A \emph{match} $m^{a} : l \to g$ is defined
as an injective graph homomorphism. $a :
\Att_l \to \Att_g$ being an injective homomorphism over attributes.
\end{definition}

\begin{example}

Figure~\ref{match1} gives a graph $l$ and a graph $g$.
\begin{figure}[t] \centering
\includegraphics[scale=0.25]{./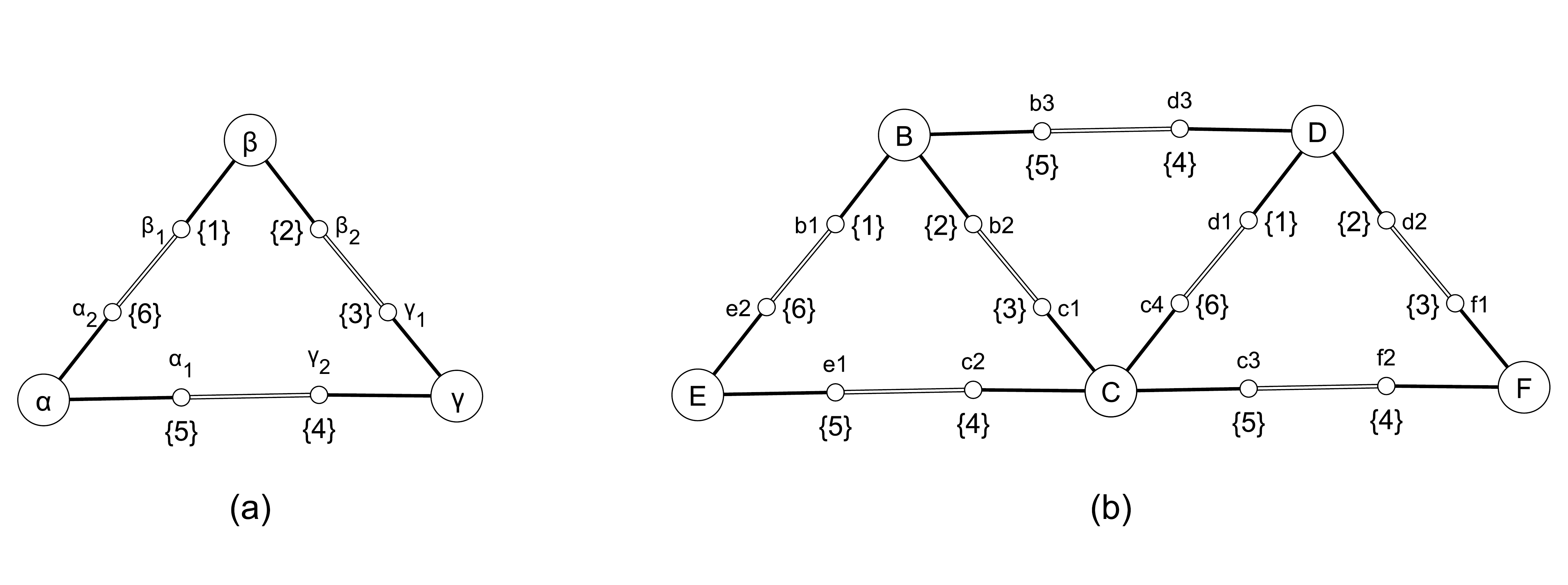} 


\caption{(a) A graph $l$. (b) A graph $g$.
 }
\label{match1}
\end{figure} 
Because of ports attributes, only two matches, $m_1^{id}$ and
$m_2^{id}$ can be defined from $l$ to $g$:
\begin{itemize}
\item $m_1^{id}$ : $m_1^{id}(\alpha)=E; m_1^{id}(\beta)= B$; $m_1^{id}(\gamma)=C$; $m_1^{id}(\alpha_1)=e_1; m_1^{id}(\alpha_2)=e_2; m_1^{id}(\beta_1)= b_1; m_1^{id}(\beta_2)= b_2$; $m_1^{id}(\gamma_1)=c_1$; $m_1^{id}(\gamma_2)=c_2$.
\item $m_2^{id}$ : $m_2^{id}(\alpha)=C; m_2^{id}(\beta)= D$; $m_2^{id}(\gamma)=F$; $m_2^{id}(\alpha_1)=c_3; m_2^{id}(\alpha_2)=c_4; m_2^{id}(\beta_1)= d_1; m_2^{id}(\beta_2)= d_2$; $m_2^{id}(\gamma_1)=f_1$; $m_2^{id}(\gamma_2)=f_2$.
\end{itemize}
Notice that the occurrences in $g$ of $m_1^{id}(l)$ and $m_2^{id}(l)$ overlap on node $C$.
\end{example}

\forgetlongv{
 
 \begin{definition}[Rewrite Step]

 Let $\lhs \ra \rhs$ be a rule,  $g$ a graph and
 $m^a : \lhs \ra g$ a match. 
 Let $\lhs = (\nodes_{\lhs}= \nodes_{\lhs}^{cut} \uplus
   \nodes_{\lhs}^{env},\ports_{\lhs}=\ports_{\lhs}^{cut} \uplus
   \ports_{\lhs}^{env},\pn_{\lhs}= \pn_{\lhs}^{cut} \uplus
   \pn_{\lhs}^{env},\pp_{\lhs} = \pp_{\lhs}^{cut} \uplus
   \pp_{\lhs}^{env},  \Att,\att_{\lhs}=\att_{\lhs}^{cut} \uplus
   \att_{\lhs}^{env})$
 and 
 $\rhs = (\nodes_{\rhs}= \nodes_{\rhs}^{new} \uplus
   \nodes_{\rhs}^{env},\ports_{\rhs}=\ports_{\rhs}^{new} \uplus
   \ports_{\rhs}^{env},
   \pn_{\rhs}= \pn_{\rhs}^{new} \uplus
   \pn_{\rhs}^{env},\pp_{\rhs} = \pp_{\rhs}^{new} \uplus
   \pp_{\rhs}^{env}, \Att,\att_{\rhs}=\att_{\rhs}^{new} \uplus
   \att_{\rhs}^{env})$.
 A graph $g$ rewrites to $g'$ using a match $m^a$,  written $g \ra g'$ or
 $g \ra_{\lhs \ra \rhs, m^a} g'$ with $g'$ being a pregraph defined as follows:
 $g'= (\nodes_{g'}, \ports_{g'}, \pn_{g'}, \pp_{g'}, \Att_{g'},\att_{g'})$ such
 that
 \begin{itemize}
 \item $\nodes_{g'} = (\nodes_g - \nodes^{cut}_{m^a(\lhs)}) \uplus \nodes^{new}_{\rhs}$
 \item $\ports_{g'} = (\ports_g - \ports^{cut}_{m^a(\lhs)}) \uplus \ports^{new}_{\rhs}$ 
 \item $\pn_{g'} = (\pn_g - \pn_{m^a(\lhs)}^{cut}) \uplus  \pn^{new}_{m^a(\rhs)}$
 \item $\pp_{g'}  = (\pp_g -  \pp^{cut}_{m^a(\lhs)}) \uplus \pp^{new}_{m^a(\rhs)} $
 \item $\Att_{g'} = \Att_g $ and $\att_{g'} = (\att_g -  \att^{cut}_{m^a(\lhs)}) \uplus \att^{new}_{m^a(\rhs)}$

 \end{itemize}

 Notation: Let $p, p'$ be ports and $n$ be a node, in notation $m^a(r)$
 above, $m^a(p,p') = (m^a(p),m^a(p'))$, $ m^a(p, n) = (m^a(p), m^a(n))$, $m^a(p) = p$ if $p \in
 \ports_r^{new}$ and $m^a(n) = n$ if $n \in \nodes_r^{new}$.  
 \end{definition}
  \newcommand{\good}{well behaved}

  It is easy to see that graphs are not closed under the rewrite
  relation defined above. That is to say, when a graph $g$ rewrites
  into $g'$, $g'$ is a pregraph. To ensure that $g'$ is a graph we
  provide the following conditions.

 \begin{theorem}
  Let $\lhs \ra \rhs$ be an environment sensitive rewrite rule,  $g$ a graph and
 $m^a : \lhs \ra g$ a match. 
 Let $g \ra_{\lhs \ra \rhs,m^a} g'$.
 $g'$ is a graph iff the two following constraints are verified :
 \begin{enumerate}
 \item
 If $p \in \ports_{\lhs}^{env} $, $(p,q) \in \pp_{\rhs}^{new}$ for some
 port $q$ and there is no $q'$ such that $(p,q') \in \pp_{\lhs}^{cut}$,
 then
 there is no $q''\in \ports_g$ such that $(m^a(p),q'') \in \pp_g$. 
 \item
 If $p \in \ports_{\lhs}^{env} $ ,  $(p,n) \in \pn_{\rhs}^{new}$ and there is no $n'$ such that  $(p,n') \in \pn_{\lhs}^{cut}$, then
 there is no $n''\in \nodes_g$ such that $(m^a(p),n'') \in \pn_g$. 

 \end{enumerate}

 \end{theorem}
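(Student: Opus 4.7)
The plan is to verify the two defining conditions of Definition~\ref{def_graph} on the resulting pregraph $g'$: functionality of $\pn_{g'}$ in the port argument, and symmetry, functionality and irreflexivity of $\pp_{g'}$. The natural case analysis is driven by the decomposition $\pp_{g'} = (\pp_g - \pp^{cut}_{m^a(\lhs)}) \uplus \pp^{new}_{m^a(\rhs)}$ (and symmetrically for $\pn_{g'}$), splitting any potentially conflicting pair of edges according to whether each one comes from the ``surviving'' $g$-part or the ``freshly introduced'' $\rhs$-part.

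For the direction $(\Leftarrow)$, assume constraints 1 and 2. Symmetry of $\pp_{g'}$ reduces to symmetry of its two components: $\pp^{new}_{m^a(\rhs)}$ is symmetric because $\pp_\rhs$ is, and $\pp^{cut}_{m^a(\lhs)}$ is symmetric by \esrr\ constraint (2) together with symmetry of $\pp_\lhs$, so its removal from the symmetric $\pp_g$ preserves symmetry. Irreflexivity follows by pulling any hypothetical self-loop in $\pp^{new}_{m^a(\rhs)}$ back through the injective $m^a$ to a self-loop in $\rhs$, which is impossible. The core argument is functionality. Given $(p_1, p_2), (p_1, p_3) \in \pp_{g'}$, the subcases ``both edges in $\pp_g$'' and ``both in $\pp^{new}_{m^a(\rhs)}$'' follow from the graph property of $g$ and of $\rhs$ respectively (using injectivity of $m^a$ in the latter). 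In the mixed subcase, with $(p_1, p_2) \in \pp_g - \pp^{cut}_{m^a(\lhs)}$ and $(p_1, p_3) \in \pp^{new}_{m^a(\rhs)}$, freshness of $\ports_\rhs^{new}$ forces $p_1 = m^a(p)$ for some $p \in \ports_\lhs^{env}$, and the second edge pulls back to some $(p, q) \in \pp_\rhs^{new}$. Constraint 1 then leaves only two possibilities: either $p$ has a cut partner $q'$ in $\lhs$, in which case functionality of $\pp_g$ yields $p_2 = m^a(q')$ and so $(p_1, p_2) \in \pp^{cut}_{m^a(\lhs)}$ (contradiction); or $m^a(p)$ has no $\pp_g$-partner at all (contradicting $(p_1, p_2) \in \pp_g$). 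Functionality of $\pn_{g'}$ is entirely analogous, invoking constraint 2.

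For the direction $(\Rightarrow)$ I would argue by contrapositive. If constraint 1 fails, pick witnesses $p, q, q''$ with $p \in \ports_\lhs^{env}$, $(p, q) \in \pp_\rhs^{new}$, no $(p, q') \in \pp_\lhs^{cut}$, and $(m^a(p), q'') \in \pp_g$. The edge $(m^a(p), q'')$ is not in $\pp^{cut}_{m^a(\lhs)}$ (otherwise injectivity of $m^a$ would supply the forbidden cut partner of $p$), so it survives into $\pp_{g'}$; meanwhile $(m^a(p), m^a(q)) \in \pp^{new}_{m^a(\rhs)} \subseteq \pp_{g'}$. If $q \in \ports_\rhs^{new}$, then $m^a(q) = q$ is fresh while $q'' \in \ports_g$, so the two edges are distinct, breaking functionality of $\pp_{g'}$. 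If $q \in \ports_\rhs^{env}$, then \esrr\ constraint (5) combined with the failure of constraint 1 gives $(p, q) \notin \pp_\lhs$, so the only way the two edges could coincide is $q'' = m^a(q)$, which places $(m^a(p), m^a(q))$ simultaneously in $\pp_g - \pp^{cut}_{m^a(\lhs)}$ and in $\pp^{new}_{m^a(\rhs)}$, contradicting the disjointness of the $\uplus$-decomposition of $\pp_{g'}$; either way $g'$ is not a graph. The failure of constraint 2 is treated symmetrically, producing two distinct nodes associated to the same port in $\pn_{g'}$.

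The hardest part will be the subcase $q \in \ports_\rhs^{env}$ in the $(\Rightarrow)$ direction: extracting a genuine graph-theoretic violation there requires combining \esrr\ constraint (5), the injectivity of the match, and the disjointness built into the $\uplus$ in the rewrite-step definition, rather than any direct cardinality argument on the neighbours of $m^a(p)$.
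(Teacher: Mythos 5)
Your proof is correct and follows essentially the same route as the paper's: the $(\Leftarrow)$ direction is the paper's case analysis (untouched ports, environment-image ports, fresh ports) merely reorganized around pairs of conflicting edges, and the $(\Rightarrow)$ direction is the same contrapositive the paper leaves as a one-line remark, which you carry out explicitly. Your treatment of the subcase $q\in\ports_\rhs^{env}$ with $q''=m^a(q)$ rests on reading the $\uplus$ in the rewrite-step definition as a genuine disjointness requirement rather than a plain union; that is a defensible reading and in any case more careful than the paper, which does not address this corner case at all.
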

 \forgetlongv{
 \begin{proof}
 $(\Leftarrow)$ Let $p$ be a port of $g'$.
 If the constraints  1. and 2. are verified then 
 \begin{itemize}
 \item If $p \in g'-m^a(r)$, $p$ has the same connections as in $g$. Since $g$ is a graph, $p$ is connected to at most one port and one node.
 \item If $p \in m^a(r^{env})$, thanks to constraints 1. and 2. $p$  has
   at most one connection to a node and one connection to a port in $g'$.
 \item If $p \in \ports_{r}^{new}$. Since $r$ is a graph,  $p$  has
   at most one connection to a node and one connection to a port in $g'$.
 \end{itemize}
 Thus, $g'$ is a graph.

 $(\Rightarrow)$
 It is easy to show, by contrapositive, that in case one of the constraints (1 and 2) is not verified, a
 counter example can be exhibited.

 \end{proof}
}
 Matches which fulfill the above two conditions are called \emph{\good}\ matches.

 \begin{example}
   Figure~\ref{badmatchrule}~(a) gives an example of toy rule.
   Figure~\ref{badmatchrule}~(b) is a graph $H$ such that
   the match $m_1^{id}$ as defined below is a \good\ match, whereas the match
   $m_2^{id}$ is not a \good\ match.
   $m_1^{id} : m_1^{id}(\alpha)=A, m_1^{id}(\beta)=B, m_1^{id}(\alpha_1)=a_1,
   m_1^{id}(\beta_1)=b_1 $
   and
   $m_2^{id}: m_2^{id}(\alpha)=C, m_2^{id}(\beta)=B, m_2^{id}(\alpha_1)=c_1,
   m_2^{id}(\beta_1)=b_1$.
 The application of the toy rule on nodes $B$ and $C$ and the ports
 $b_1$ and $c_1$ (according to match $m_2^{id}$) leads to  a pregraph  which is not a graph.
 \begin{figure}[t] \centering
 \includegraphics[scale=0.2]{./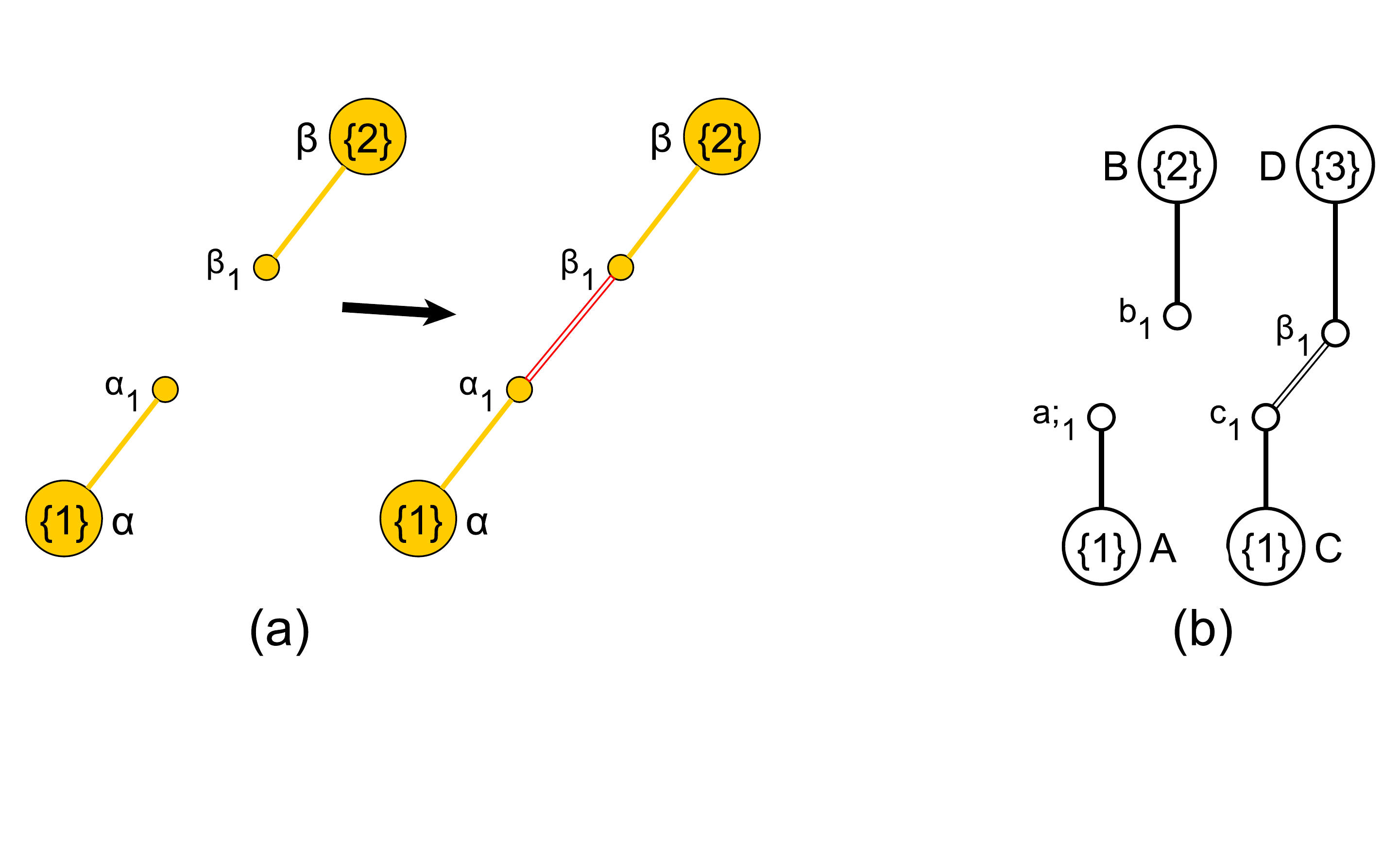}


 \caption{Example of a toy rule (a) and a graph $H$ (b)}
 \label{badmatchrule}
 \end{figure} 

 \end{example}

}
In order to define the notion of parallel rewrite step, we have to
restrict a bit the class of the considered rewrite systems. Indeed, 
let $l_1 \ra r_1$ and $l_2 \ra r_2$ be two \esrr.
Applying these two rules in parallel on a graph $g$ is possible
only if there is ``no conflict'' while firing the two rules
simultaneously. A conflict may occur if some element of the
environment of $r_1^{env}$ is part of $l_2^{cut}$ and vice versa.
To ensure conflict free rewriting, we introduce the notion of
\emph{conflict free} \esrs.
Let us first define the notion of compatible rules.
\begin{definition} [compatible rules]
Two \esrr's
$l_1 \ra r_1$ and $l_2 \ra r_2$ are said to be compatible
iff for all graphs $g$ and matches $m_1^{a_1} : l_1 \ra g$ and $m_2^{a_2} : l_2
\ra g$, (i) no element of $m_1^{a_1}(r_1^{env})$ is in $m_2^{a_2}(l_2^{cut})$ and
(ii)  no element of $m_2^{a_2}(r_2^{env})$ is in $m_1^{a_1}(l_1^{cut})$.

\end{definition}  

Conditions (i) and (ii) ensure that the constructions defined by
 $m_1^{a_1}(r_1)$ (respectively by $m_2^{a_2}(r_2)$) can actually be performed ;
 i.e, no element used in   $m_1^{a_1}(r_1)$ (respectively by $m_2^{a_2}(r_2)$) is
missing because of its inclusion in $m_2^{a_2}(l_2^{cut})$ (respectively in
$m_1^{a_1}(l_1^{cut})$). 
For instance, the reader can easily verify that two variants of the rule
\begin{center}
\includegraphics[scale=0.3]{./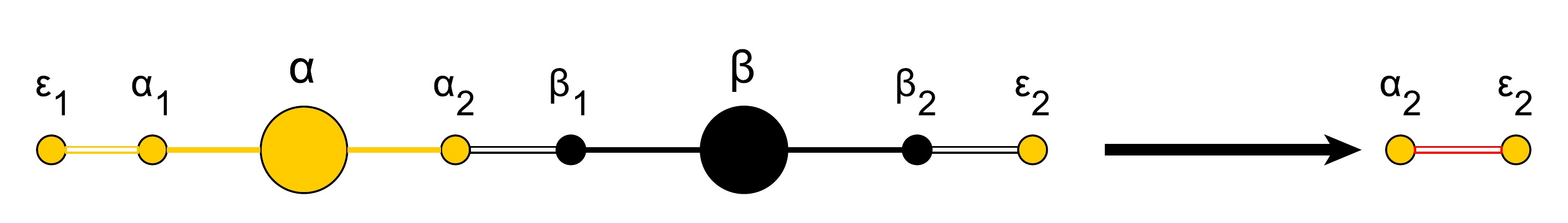}
\end{center}
are not compatible.
Verifying that two given rules are compatible is decidable and can be
checked on a finite number, less than $max(size(l_1),
size(l_2))$, of graphs where the $size$ of a graph stands for its number of
 nodes and ports. 

\begin{proposition}
\label{prop:compatibility}
 The problem of the verification 
of compatibility of two rules is decidable.
\end{proposition}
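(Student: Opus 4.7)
The plan is to reduce the universally quantified compatibility check (over all graphs $g$ and all pairs of matches) to a check over a finite, effectively constructible set of witness graphs. The key observation is that the conditions (i) and (ii) in the definition of compatibility only mention the images $m_1^{a_1}(l_1)$ and $m_2^{a_2}(l_2)$; hence whether a conflict occurs in $g$ depends solely on how these two images overlap inside $g$, and not at all on the ambient part of $g$ that lies outside $m_1^{a_1}(l_1) \cup m_2^{a_2}(l_2)$.

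Concretely, I would first formalise the notion of an overlap configuration as a pair of injective graph homomorphisms $m_1: l_1 \to w$ and $m_2: l_2 \to w$ whose combined image covers all of $w$. Up to isomorphism there are only finitely many such $w$, because $w$ is determined by a choice of: (a) which nodes of $l_1$ are identified with which nodes of $l_2$, (b) which ports of $l_1$ are identified with which ports of $l_2$, and (c) an attribute homomorphism on the (finite) attribute sets appearing in $l_1$ and $l_2$ that is compatible with the identifications. Since $\nodes_{l_i}$, $\ports_{l_i}$ and the attribute occurrences are finite, the set $W$ of these overlap configurations is finite and can be enumerated. The number of node/port identifications is bounded above by the quantity indicated in the statement, $\max(size(l_1), size(l_2))$ (in fact by a function of $size(l_1) + size(l_2)$).

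Next I would prove the reduction: rules $l_1 \to r_1$ and $l_2 \to r_2$ are compatible if and only if for every overlap configuration $(w, m_1, m_2) \in W$, conditions (i) and (ii) hold. The direction ($\Rightarrow$) is immediate since each overlap configuration is itself an instance of the universal quantification. For ($\Leftarrow$), suppose some $g$ and matches $\tilde m_1^{a_1}, \tilde m_2^{a_2}$ witness a conflict; then restricting $g$ to the union of the two images yields an overlap configuration in $W$ (with the induced attribute map), and the same conflict occurs there, contradicting the assumption. This uses the fact that membership of an element of $\tilde m_1^{a_1}(r_1^{env})$ in $\tilde m_2^{a_2}(l_2^{cut})$ (and symmetrically) only involves elements already in the union of the two images.

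Finally, decidability follows because $W$ is finite and constructible from $l_1$ and $l_2$, and for each $(w, m_1, m_2) \in W$ testing conditions (i) and (ii) is a finite set-membership test over finite sets. The main obstacle I expect is the bookkeeping for attributes: the attribute renamings $a_1, a_2$ are a priori arbitrary injective homomorphisms into $\Att_g$, so I must argue that only their restrictions to attributes actually occurring in $l_1, l_2$ matter, and that two renamings inducing the same identification pattern on these finite sets of attributes yield equivalent conflict behaviour. Once this reduction to a finite equivalence class of attribute behaviours is established, the enumeration of $W$ becomes effective and the decision procedure is complete.
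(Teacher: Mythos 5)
Your proposal is correct and follows essentially the same route as the paper: your ``overlap configurations'' $(w, m_1, m_2)$ are exactly the pushouts the paper builds from spans of injective homomorphisms $h : d \to l_1$, $h' : d \to l_2$ over a shared subgraph $d$ of size at most $\max(size(l_1), size(l_2))$, and both arguments conclude by checking the conflict condition on this finite, effectively enumerable family. Your extra care about quotienting the attribute renamings down to their finitely many identification patterns on the attributes actually occurring in $l_1, l_2$ is a welcome refinement of a point the paper handles only implicitly.
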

\forgetlongv{
 \begin{proof}
 Let $\rho_1= l_1 \ra r_1$ and $\rho_2 = l_2 \ra r_2$ be two rules.
 Assume that $\rho_1$ and $\rho_2$ are not compatible. Then there
 exists a graph $G$ such that:
 \begin{itemize}
 \item there exists a match $m_1^{a_1} : l_1 \to G$
 \item there exists a match $m_2^{a_2} : l_2 \to G$
 \item w.l.o.g, we assume that there exists an element, say $e$, in
 $m_1^{a_1}(r_1^{env})$ which belongs also to $m_2^{a_2}(l_2^{cut})$. 
 \end{itemize}
 Graph $G$ can be built as follows:
 Let $d$ be a graph such that there 
 exist two injective homomorphisms $h_1: d \ra m_1^{a_1}(l_1)$ and
 $h_2: d \ra m_2^{a_2}(l_2)$ such that $G$ is obtained as a pushout of $h_1$
 and $h_2$. That is to say, there exist two injective homomorphisms $h'_1 :
 m_1^{a_1}(l_1) \to G$ and $h'_2 : m_2^{a_2}(l_2) \to G$ such that  
  $h'_1(h_1(d)) = h'_2(h_2(d))$.  We consider subgraphs $d$ which
  contain at least $h_1^{-1}((m_1^{a_1})^{-1}(e))$ which is equal to
  $h_2^{-1}((m_2^{a_2})^{-1}(e))$. Notice that elements of graph $d$
  could be attributed by empty sets.

 Therefore, to check whether two rules  $\rho_1= l_1 \ra r_1$ and
 $\rho_2 = l_2 \ra r_2$  are compatible, one has to check whether there
 exist a subgraph $d$ and two injective homomorphisms
 $h: d \ra l_1$ and $h': d \ra l_2$ such that $d$ contains an item, $e$,
 such that $h(e) \in l_1^{cut}$ and $h'(e) \in r_2^{env}$ ($h(e) \in
 r_l^{env}$ and $h'(e) \in l_2^{cut}$) . Since homomorphisms $h$ and
 $h'$ are injective, the size (number of
 nodes and ports) of $d$ is less than $max(size(l_1), size(l_2))$.
 Obviously, $d$, $h$ and $h'$ exist iff the two rules are not
 compatible. 
 Indeed the graph $G'$ obtained as a pushout of homomorphisms $h$ and
 $h'$ contains at least one item which can be matched either by
 $l_i^{env}$ (and remains in $r_i^{env}$) and $l_j^{cut}$ with $(i,j)
 \in \{(1,2), (2,1)\}$. 

 Since the set of possible $d$'s is finite (up to isomorphism), 
 verifying whether two rules are compatible is decidable.
 \end{proof}
}

\begin{definition}
A \admissiblecsgrs\ is an \esrs\
consisting of pairwise compatible rules.
\end{definition}

\begin{definition}[parallel rewrite step] \label{defparallel}
Let $\trs$ be a \admissiblecsgrs\ 
$\trs = \{L_i \ra
  R_i \mid i= 1 \ldots n\}$. Let $G$ be a graph. Let  $I$ be a set of
variants of rules in $\trs$, $I = \{\lhs_i \ra \rhs_i \mid i = 1
\ldots k\}$ and $M$ a set of matches  $M=\{m_i^{a_i} : \lhs_i \ra G \mid i = 1
\dots k\}$. We say that
graph $G$ rewrites into a pregraph $G'$ using the rules in $I$ and matches in
$M$, written $ G \topar_{I,M} G'$, $ G \topar_{M} G'$ or simply $ G
\topar G'$ if  $G'$ is obtained following the two steps below:

\noindent
{\bf First step:} A \pregraph\ $H = (\nodes_H, \ports_H, \pn_H, \pp_H, \Att_H,
\att_H)$ is computed using the different matches and rules as follows:

\forgetlongv{
\begin{itemize}
\item $\nodes_H = (\nodes_G - \cup_{i=1}^k \nodes_{m_i^{a_i}(\lhs_i)}^{cut}) \uplus
  \cup_{i=1}^k \nodes_{\rhs_i}^{new}$
\item $\ports_H = (\ports_G - \cup_{i=1}^k \ports_{m_i^{a_i}(\lhs_i)}^{cut}) \uplus
  \cup_{i=1}^k \ports_{\rhs_i}^{new}$
\item 
$\pn_H = (\pn_G - \cup_{i=1}^k \pn_{m_i^{a_i}(\lhs_i)}^{cut}) \uplus \cup_{i=1}^k \pn_{m_i^{a_i}(\rhs_i)}^{new}$
 
\item 
$\pp_H  = (\pp_G - \cup_{i=1}^k \pp_{m_i^{a_i}(\lhs_i)}^{cut})\uplus \cup_{i=1}^k
\pp_{m_i^{a_i}(\rhs_i)}^{new} $

\item $\Att_H = \Att_G$ and $\att_H = (\att_G - \cup_{i=1}^k \att_{m_i^{a_i}(\lhs_i)}^{cut}) \cup
  \cup_{i=1}^n \att_{m_i^{a_i}(\rhs_i)}^{new}$
\end{itemize}
}
\forgetshortv{
$\bullet$ $\nodes_H = (\nodes_G - \cup_{i=1}^k \nodes_{m_i^{a_i}(\lhs_i)}^{cut}) \uplus
  \cup_{i=1}^k \nodes_{\rhs_i}^{new}$ 

\;\; $\bullet$ $\ports_H = (\ports_G - \cup_{i=1}^k \ports_{m_i^{a_i}(\lhs_i)}^{cut}) \uplus
  \cup_{i=1}^k \ports_{\rhs_i}^{new}$

$\bullet$ 
$\pn_H = (\pn_G - \cup_{i=1}^k \pn_{m_i^{a_i}(\lhs_i)}^{cut}) \uplus \cup_{i=1}^k \pn_{m_i^{a_i}(\rhs_i)}^{new}$

$\bullet$ 
$\pp_H  = (\pp_G - \cup_{i=1}^k \pp_{m_i^{a_i}(\lhs_i)}^{cut})\uplus \cup_{i=1}^k
\pp_{m_i^{a_i}(\rhs_i)}^{new} $

$\bullet$ \;\; $\Att_H = \Att_G$ and $\att_H = (\att_G - \cup_{i=1}^k \att_{m_i^{a_i}(\lhs_i)}^{cut}) \cup
  \cup_{i=1}^n \att_{m_i^{a_i}(\rhs_i)}^{new}$
}

\noindent
{\bf second step}:  $G' = \overline H$

\end{definition}

\forgetshortv{
 Notation: Let $p, p'$ be ports and $n$ a node, in notation $m^a(r)$
 above, $m^a(p,p') = (m^a(p),m^a(p'))$, $ m^a(p, n) = (m^a(p), m^a(n))$, $m^a(p) = p$ if $p \in
 \ports_r^{new}$ and $m^a(n) = n$ if $n \in \nodes_r^{new}$.  }

\forgetlongv{Notice that the rewrite step $G \topar G'$ is a rewrite
modulo step \cite{PetersonS81} of the form $G \to H \equiv \overline
H$.}

\begin{example}
Let us consider the graph $g$ depicted below and the following two
matches, $m_1$ and $m_2$,
of the rule $R_T$ depicted in  Figure~\ref{ex5}.
\vspace*{-0.4cm}
\begin{multicols}{2}
\includegraphics[scale=0.2]{./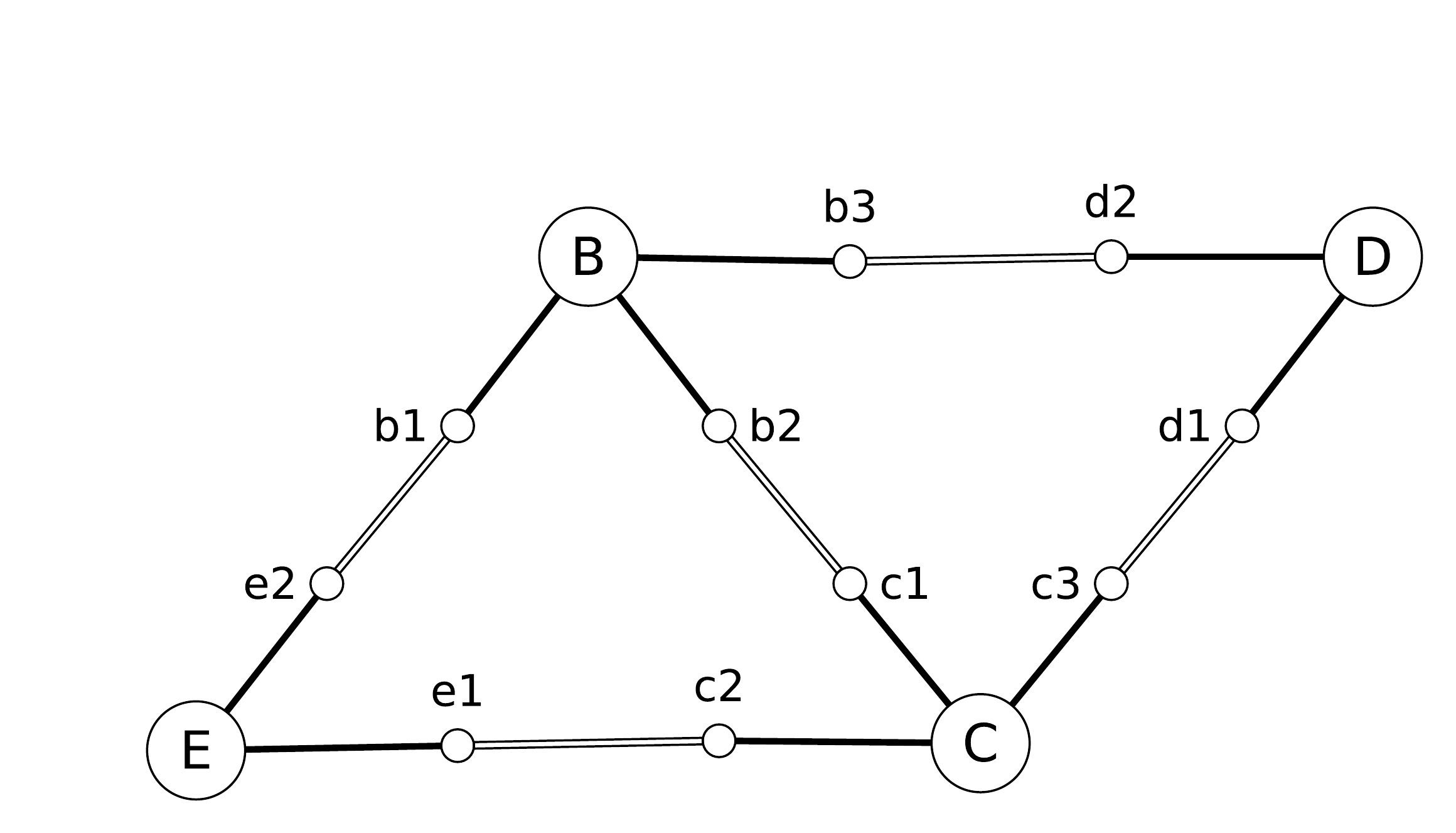}
\begin{itemize}
\item $m_1$ : $m_1(\alpha)=E; m_1(\beta)= B$; $m_1(\gamma)=C$; $m_1(\alpha_1)=e_1; m_1(\alpha_2)=e_2; m_1(\beta_1)= b_1; m_1(\beta_2)= b_2$; $m_1(\gamma_1)=c_1$; $m_1(\gamma_2)=c_2$.
The isomorphism of the port-node and port-port connections are easily deduced.
\item $m_2$ : $m_2(\alpha)=B; m_2(\beta)= D$; $m_2(\gamma)=C$; $m_2(\alpha_1)=b_2; m_2(\alpha_2)=b_3; m_2(\beta_1)= d_2; m_2(\beta_2)= d_1$; $m_2(\gamma_1)=c_3$; $m_2(\gamma_2)=c_1$.
\end{itemize}
The two matches overlap.
\end{multicols}



Figure~\ref{steps} shows the different steps of the application of two
matches of the rule defined in Figure~\ref{ex5}. The pregraph, $H$, in the
middle is obtained after the first step of Definition~\ref{defparallel}.
Its quotient pregraph, $G'$, is the graph on the right. $G'$ has been
obtained by merging the nodes $S$ and $Y$ and the ports $s_1$ and
$y_1$ as well as ports $s_2$ and $y_2$. These mergings are depicted by
the quotient sets $[S], [s_1]$ and $[s_2]$. For sake of readability,
the brackets have been omitted for quotient sets reduced to one
element.
 
\begin{figure}[t] \centering
\includegraphics[scale=0.22]{./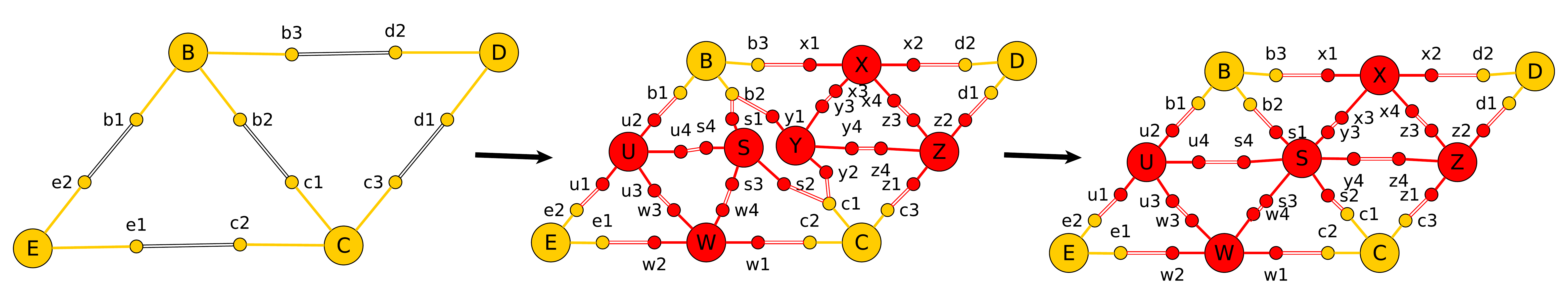}

$G$ \hspace*{3.7cm} $H$ \hspace*{3.7cm} $\overline H=G'$ 
\caption{A parallel rewrite step with overlapping between two triangles. Notice that two variants of $R_T$ with fresh new variables have been provided in order to produce the pregraph $H$. In the quotient graph $\overline H=G'$, $\class S= \{S,Y \}$, $\class{s_1}= \{s_1,y_1\}$, $\class{s_2}= \{s_2,y_2\}$.  }
\label{steps}
\end{figure} 
\end{example}

As  a quotient pregraph is not necessarily a
graph (see Figure~\ref{ex2}), the above definition of parallel rewrite
step does not warranty, in general, the production of graphs only. 
Hence, we propose hereafter a sufficient condition, which could be
verified syntactically, that ensures that the outcome of a parallel
rewrite step is still a graph.

\begin{theorem} \label{theo:CStobeagraph}
Let $\trs$ be a \admissiblecsgrs\ $\trs = \{L_i \ra
  R_i \mid i= 1 \ldots n\}$. Let $G$ be a graph. Let  $I$ be a set of
variants of rules in $\trs$, $I = \{\lhs_i \ra \rhs_i \mid i = 1
\ldots k\}$ and $M$ a set of matches  $M=\{m_i^{a_i} : \lhs_i \ra G \mid i = 1
\dots k\}$. Let $G'$ be the pregraph such that $ G \topar_{I,M} G'$.
If $\forall p, p' \in \ports_{L_i}^{env}$,  $(p,p') \notin
\pp_{R_i}^{new}$,
then $G'$ is a graph.

\end{theorem}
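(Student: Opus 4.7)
The plan is to invoke Proposition~\ref{prop:nooddloop} and reduce the claim to showing that the pregraph $H$ built in the first step of $G \topar_{I,M} G'$ contains no loop of odd length. One checks directly from Definition~\ref{def_pregraph} that $\pn_{\overline{H}}$ and $\pp_{\overline{H}}$ are functional and that $\pp_{\overline{H}}$ inherits symmetry from $\pp_H$, so the only condition of Definition~\ref{def_graph} that can fail for $\overline{H}$ is the anti-reflexivity of $\pp_{\overline{H}}$. A self-loop $(\class p, \class p) \in \pp_{\overline{H}}$ would yield, via Proposition~\ref{prop-even}, an even path between two $\pp_H$-adjacent ports, which together with their connecting edge forms an odd loop in $H$; conversely, any odd loop produces such a self-loop. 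Hence $G' = \overline{H}$ is a graph iff $H$ has no odd loop.

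The argument then rests on two consequences of the hypothesis and of the structure of an \esrr. First, every fresh new port $p \in \ports_{r_i}^{new}$ has at most one port-port neighbour in $\pp_H$: such $p$ does not belong to $\ports_G$ nor to $\ports_{r_j}^{new}$ for $j \neq i$, so all its links in $\pp_H$ originate from $\pp^{new}_{m_i^{a_i}(r_i)}$, and since $r_i$ is a graph it admits at most one such link. Second, the hypothesis forbids new edges between two environmental ports of the same rule; hence any edge of $\pp_H$ whose both endpoints lie in the surviving subset of $\ports_G$ must already be a G-edge inherited from $\pp_G$, since otherwise it would originate from some $\pp^{new}_{m_i^{a_i}(r_i)}$ with both endpoints in $m_i^{a_i}(\ports_{r_i}^{env}) \subseteq m_i^{a_i}(\ports_{l_i}^{env})$, contradicting the hypothesis.

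I would then argue by induction on the length of a loop $L = (q_1, \ldots, q_k)$ in $H$ that $k-1$ is even. If some $q_i$ is a new port, let $q$ denote its unique $\pp_H$-neighbour; every occurrence of $q_i$ in $L$ must be immediately preceded and followed by $q$, so after cyclically rotating $L$ to place $q_i$ at an interior position we may excise the back-track $(q, q_i, q)$ to obtain a loop of length $k-3$ in $H$, which is even by the induction hypothesis. If, on the other hand, $L$ visits only original ports, then by the second observation above every edge of $L$ is a G-edge, so $L$ is a loop in the graph $G$; since $\pp_G$ is an involution by Definition~\ref{def_graph}, every walk in $G$ alternates between a port and its unique partner and therefore closes only at even length. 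In either case $L$ has even length, so $H$ has no odd loop and $G'$ is a graph.

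The main delicacy lies in the excision step when the chosen new port sits at the seam of $L$ ($q_1 = q_k$): one must justify that a cyclic rotation of $L$ yields an equivalent loop in $H$ and that, after excision, $(q_{i-1}, q_{i+2})$ is indeed an edge of $\pp_H$, which follows because $q_{i+1} = q_{i-1}$ and $(q_{i+1}, q_{i+2}) \in \pp_H$. Apart from this small technical point, the reasoning is a direct case analysis driven by the syntactic restriction on new env--env edges.
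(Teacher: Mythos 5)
Your proof is correct and follows essentially the same route as the paper's: both reduce the claim via Proposition~\ref{prop:nooddloop} to showing $H$ has no odd loop, and both rest on the two facts that new ports have at most one port--port neighbour in $\pp_H$ and that the hypothesis forbids new edges between two environment ports, so surviving $G$-ports are linked only by inherited $G$-edges. Your explicit excision induction on loop length is in fact more careful than the paper's sketch (which loosely asserts that no port of $H$ ``is part of a loop''), but it is the same underlying argument.
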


 \forgetlongv{
 \begin{proof}
  We have to prove  that $H$ does not contain odd loops. Because of the previous constraint, it is enough to prove that 
  all ports of $H$ are not parts of a loop.
 \begin{itemize}
 \item If $p \in \cup_{i=1}^n \ports_{R_i}^{new}$, it is a new port contained in the graph $R_i$ thus $p$ has at most one connection port-port.

 \item If $p \in \cup_{i=1}^n \ports_{m_i^{a_i}(R_i)}^{env}$, $p $
   belongs to the graph $G$ and the only new port-port connections
   where $p$ is involved are those of $\cup_{i=1}^n \ports_{R_i}^{new}$.

 \item Else, if $p \in G/ \cup_{i=1}^n \ports_{R_i}^{new} \uplus
   \cup_{i=1}^n \ports_{m_i^{a_i}(R_i)}^{env}$,
 $p$ belongs to the non modified part of the graph. Its connections
 are unchanged and thus $p$ has at most one port-port connection.
 \item
 Finally, $p$ belongs to a path which is not a loop and $\overline H=G'$ is a graph.
 \end{itemize}
 \end{proof}
 }

\vspace{0.3cm}
\section{Two Parallel Rewrite Relations}
\label{sect:4}

\newcommand{\longhookrightarrow}{}
\DeclareRobustCommand{\longhookrightarrow}{\lhook\joinrel\relbar\joinrel\rightarrow}
\newcommand{\longtwoheadrightarrow}{}
\DeclareRobustCommand{\longtwoheadrightarrow}{\relbar\joinrel\twoheadrightarrow}

\newcommand{\fpr} {\displaystyle \rightrightarrows}
\newcommand{\ar} {\displaystyle \rightrightarrows_{auto}}
\newcommand{\da} {distinguishing attributes} 
\newcommand{\syc} {symmetry condition} 

The set of matches, $M$, in Definition~\ref{defparallel} is not
constrained and thus the induced parallel rewrite relation is too
nondeterministic since at each step one may choose several
sets of matches leading to different rewrite outcomes. In this
section, we are rather interested in two confluent parallel rewrite
relations which are realistic and can be good candidates for
implementations.
The first one performs all possible reductions (up to node and port
renaming) whereas the second relation is more involved 
and  performs reductions up to left-hand sides'
automorphisms.

\subsection{Full Parallel Rewrite Relation}

We start by a technical definition of an equivalence relation, $\approx$, over matches.
\begin{definition}[$\approx$]
  Let $ L \to R$ be a rule and $G$ a graph. Let $l_1 \to r_1$ and
  $l_2 \to r_2$ be two variants of the rule $L \to R$. We denote by
  $h^{a_1}_1$ (respect. $h^{a_2}_2$) the (node, port and attribute) renaming mapping such
  that the restriction of $h^{a_1}_1$ (respectively, $h^{a_2}_2$) to
  $L \to \lhs_1$ (respectively $L \to \lhs_2$) is a graph
  isomorphism. 
%
Let $m^{b_1}_1 : l_1 \to G$ and $m^{b_2}_2: l_2 \to G$ be two matches. We say that
$m^{b_1}_1$ and $m^{b_2}_2$ are equivalent and write  $m^{b_1}_1 \approx
m^{b_2}_2$ iff for all elements $x$ (in $\ports_{L}$, $\nodes_{L}$,
$\pp_{L}$ or $\pn_{L}$) of $L$,
$m^{b_1}_1(h^{a_1}_1(x))=m^{b_2}_2(h^{a_2}_2(x))$ and for all $x$ in
$\Att_{L}$, ${b_1}({a_1}(x)) = {b_2}({a_2}(x))$ .
\end{definition}

The relation $\approx$ is clearly an equivalence
relation. Intuitively, two matches $m^{b_1}_1 : l_1 \to G$ and
$m^{b_2}_2: l_2 \to G$ are equivalent, $m^{b_1}_1 \approx m^{b_2}_2$, whenever (i) $l_1$
and $l_2$ are left-hand sides of two variants of a same rule, say
$L \to R$, and (ii) $m^{b_1}_1$ and $m^{b_2}_2$ coincide on each element $x$ of
$L$.

\begin{definition}[full parallel matches]
Let  $\grs$ be a graph rewrite system and $G$ a graph.
Let ${\cal M}_{\grs}(G)=\{m^{a_i}_i
: l_i \to G \; | \; m^{a_i}_i \mbox{ is a match and } l_i \to r_i \mbox{ is a variant of a rule
in }  \grs  \}$. A set, $M$, of \emph{full parallel matches}, with
respect to a graph rewrite system $\grs$ and a graph $G$, is a
maximal set such that 
(i) $M \subset  {\cal M}_{{\cal R}}(G)$ and 
(ii) $\forall m^{a_1}_1,m^{a_2}_2 \in M, m^{a_1}_1 \not\approx m^{a_2}_2$.
\end{definition}

A set of full parallel matches $M$ is not unique because any rule in $\grs$ may have infinitely many
variants. However the number of non equivalent matches could be easily proven to be finite. 
\forgetlongv{
\begin{proposition}\label{finitude}
Let $M$ be a set of full parallel matches  with
respect to a graph rewrite system $\grs$ and a graph $G$. Then $M$ is finite.
\end{proposition}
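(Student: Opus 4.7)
The plan is to bound the cardinality of $M$ by counting, for each rule in $\grs$, the equivalence classes under $\approx$ that can arise from matches into the finite graph $G$, and then observing that this count is finite.

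Fix a rule $L \to R$ in $\grs$ and consider any match $m^{b}_i : l_i \to G$ coming from a variant $l_i \to r_i$ via a renaming $h^{a_i}_i$ whose restriction $L \to l_i$ is a graph isomorphism. The composite $\phi_i = m_i \circ (h_i)|_L$ is an \emph{injective} graph homomorphism from $L$ to $G$ (composition of two injective homomorphisms), and the composite $\beta_i = b_i \circ (a_i)|_{\Att_L}$ is an attribute homomorphism $\Att_L \to \Att_G$. By the very definition of $\approx$, two matches $m^{b_1}_1$ and $m^{b_2}_2$ derived from variants of the same rule $L \to R$ are equivalent exactly when $\phi_1 = \phi_2$ and $\beta_1 = \beta_2$. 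Hence the equivalence classes of matches from variants of a fixed rule $L \to R$ inject into the set of pairs $(\phi, \beta)$ consisting of an injective graph homomorphism $L \to G$ together with a compatible attribute homomorphism.

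The first component is finite because $L$ and $G$ have finitely many nodes and ports: the number of injective maps $\nodes_L \to \nodes_G$ and $\ports_L \to \ports_G$ is bounded by $|\nodes_G|^{|\nodes_L|} \cdot |\ports_G|^{|\ports_L|}$, and $\phi$ is determined by these two injections. For the attribute component, the condition $\beta(\att_L(x)) \subseteq \att_G(\phi(x))$ for each $x \in \nodes_L \uplus \ports_L$ only constrains $\beta$ on the finite union $\bigcup_x \att_L(x)$ of attribute values actually appearing in $L$, and each such value must be sent into the finite set $\att_G(\phi(x))$; so the values of $\beta$ relevant to $\approx$ range over a finite set. Consequently, for each rule $L \to R$, the number of $\approx$-classes of matches into $G$ is finite.

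Since $\grs$ is a graph rewrite system (assumed finite, as is standard and as is needed for a parallel step to be well-defined here), summing over the finitely many rules yields a finite total. By condition (ii) in the definition of full parallel matches, $M$ contains at most one representative per $\approx$-class, so $|M|$ is finite. The only subtle point, and really the main thing to check, is the reduction of the equivalence $\approx$ to the pair $(\phi, \beta)$, which amounts to an unfolding of the definitions of variants and renaming mappings; everything else is a counting argument on finite sets.
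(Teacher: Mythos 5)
Your reduction of the $\approx$-classes to pairs $(\phi,\beta)$ and your counting of the structural component $\phi$ are correct and essentially what the paper does (the paper gives an explicit combinatorial bound on the number of injective homomorphisms $l_i \to G$; your power bound serves the same purpose, and both rely on $G$ and $\grs$ being finite). The gap is in the attribute component. The definition of $\approx$ requires $b_1(a_1(x))=b_2(a_2(x))$ for \emph{all} $x\in\Att_L$, not only for the attribute values occurring in $\bigcup_{x}\att_L(x)$; accordingly, your first paragraph correctly concludes that the $\approx$-classes inject into pairs $(\phi,\beta)$ where $\beta$ is a \emph{full} attribute homomorphism $\Att_L\to\Att_G$, but your second paragraph only shows that the restriction of $\beta$ to the finitely many attribute values occurring in $L$ has finitely many possibilities. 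That does not bound the number of admissible $\beta$'s. If $\Att_L$ is, for instance, a term algebra containing variables that occur in no $\att_L(x)$ (the paper explicitly allows attribute structures such as $\mathbb{Q}[x,y]$), there can be infinitely many injective attribute homomorphisms that agree on $\bigcup_{x}\att_L(x)$ yet differ elsewhere on $\Att_L$; the corresponding matches are pairwise non-$\approx$-equivalent, so a maximal set $M$ of pairwise non-equivalent matches would be infinite. Your phrase ``the values of $\beta$ relevant to $\approx$'' silently replaces the definition of $\approx$ by a weaker one.

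The paper closes exactly this hole with an explicit extra hypothesis: it assumes that the matching problem over attributes is \emph{finitary}, i.e.\ that for each structural homomorphism $m$ there are only finitely many admissible attribute homomorphisms, and multiplies the structural bound by that finite number. To repair your proof you need either to import that assumption or to argue that $\approx$ should only compare attribute homomorphisms on the attributes actually occurring in $L$ (which would be a change to the stated definition, not an unfolding of it). As the definitions stand, the finiteness of the attribute component is an assumption, not a consequence.
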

}
 \forgetlongv{\begin{proof}
 We assume that $G$ has a finite number of nodes, ports and attributes and $\grs$
 has a finite number of rules. 
Let $l_i  \to r_i$ be a rule in
 $\grs$. Let us assume now that nodes and ports of the left-hand side
 $l_i $ are attributed with the empty set. In this case,  matching $l_i$
 with subgraphs in $G$ remains to find a (non attributed) graph
 homomorphism between $l_i$ and $G$. Therefore, in this case,
 the number of possible matches of the
 left-hand side $l_i$ in graph $G$ is at most
 $\binom{k_i}{n} \times k_i!$ where  $n= card(\nodes_G) +
 card(\ports_G)$ and $k_i= card(\nodes_{l_i}) + card(\ports_{l_i})$.
  Thus $card(M)$ is bounded by $ \Pi_{1 \leq i \leq  card(\grs)}
 \binom{k_i}{n}  \times k_i!$ which is finite since $n$ and the
 $k_i$'s are finite.

Let us consider now the case where $l_i$ is attributed (that is to
say, there exists at leat a node or port, say x, such that $\att_{l_i}
(x) \not= \emptyset$). Let $m^a:l_i \to G$ be a match. $m$ is a
non-attributed graph homomorphism and $a : \Att_{l_i} \to \Att_G$ is
an attribute homomorphism which corresponds to a match over attributes in
the case where attributes in $l_i$ contain variables. We assume that
the matching problem over attributes is finitary. Thus for every $m$
there is a finite number, say $C_m$, of possible matchings over
attributes $a$. Let $l'_i$ be the graph obtained from $l_i$ by
removing all attributes (or equivalently said, by setting the
attribute function $\att_{l'_i}$ to the empty set.
 Let $C_i = max({C_m | m \mbox{ is
a non-attributed graph homomorphism } m:l'_i \to G})$. $C_i$ exists
since we assume that the matching problem is finitary.
 Then
  $card(M)$ is bounded by $ \Pi_{1 \leq i \leq  card(\grs)}
 \binom{k_i}{n}  \times k_i! \times C_i$ which is finite since $n$, the
 $k_i$'s and the $C_i$'s are finite.

 \end{proof}
}
\begin{definition}[full parallel rewriting]
\label{fullparallel}
Let $\grs$ be an \esrs\ and $G$ a graph. Let $M$ be a set of
full parallel matches with respect to $\grs$ and $G$. We define the
\emph{full parallel} rewrite relation and write
$G \fpr_M G'$ or simply $G \fpr G'$, as the parallel rewrite step $ G \topar_{M} G'$.
\end{definition}

\begin{proposition}
\label{func}
Let $\grs$ be an \esrs. The rewrite relation $\fpr$ is
deterministic.  That is to say, for all graphs $g$,
$(g \fpr g_1 \; and \; g \fpr g_2)$ implies that $g_1$ and $g_2$ are
isomorphic.
\end{proposition}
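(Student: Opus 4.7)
The plan is to reduce determinism of $\fpr$ to two observations: (a) any two full parallel match sets $M_1,M_2$ for $g$ are canonically in bijection via $\approx$, and (b) applying $\approx$-equivalent matches produces isomorphic pregraphs, whose quotients are therefore isomorphic by Proposition~\ref{iso-pregraph}.

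First I would establish the bijection. Since $\approx$ is an equivalence relation on ${\cal M}_{\grs}(g)$ and each $M_i$ is by definition a maximal set of pairwise non-$\approx$-equivalent matches, $M_i$ is a transversal of the quotient ${\cal M}_{\grs}(g)/\approx$: it meets every $\approx$-class in exactly one element. Indeed, if some class $[m]$ had empty intersection with $M_2$, then $M_2 \cup \{m\}$ would still consist of pairwise non-equivalent matches, contradicting the maximality clause; and distinct elements of $M_2$ cannot share a class by that same clause. Pairing each $m \in M_1$ with the unique element of $M_2$ in the same $\approx$-class produces a bijection $\varphi : M_1 \to M_2$ with $m \approx \varphi(m)$.

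Next I would show that the pregraphs $H_1$ and $H_2$ produced from $M_1$ and $M_2$ by the first step of Definition~\ref{defparallel} are isomorphic. For paired matches $m \approx \varphi(m)$, say $m : l \to g$ and $\varphi(m) : l' \to g$ coming from variants $l \to r$ and $l' \to r'$ of a common rule $L \to R$ via renamings $h^a$ and ${h'}^{a'}$, the defining clause of $\approx$ forces $m(l^{cut}) = \varphi(m)({l'}^{cut})$ pointwise in $g$, so exactly the same items of $g$ are removed in both constructions. The composition ${h'}^{a'} \circ (h^a)^{-1}$ restricts to an isomorphism between $r^{new}$ and ${r'}^{new}$, and on the environment parts $\approx$ guarantees that $m$ and $\varphi(m)$ identify the same target items of $g$. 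Gluing these bijections on the freshly introduced parts with the identity on the untouched portion of $g$ yields a pregraph isomorphism $\psi : H_1 \to H_2$, preserving $\pn$, $\pp$ and $\att$ by the homomorphism conditions on the matches and on the renamings.

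Finally, since $H_1$ and $H_2$ are isomorphic, Proposition~\ref{iso-pregraph} yields $g_1 = \overline{H_1} \cong \overline{H_2} = g_2$. The main obstacle I expect is the bookkeeping in the second step: one must verify that $\psi$ is consistent on the ``new/env boundary'' links, i.e.\ on the port-port and port-node connections in $r_i^{new}$ that reach into $m_i(l_i^{env})$ rather than staying inside the fresh part, which is precisely where the $\approx$-equivalence of the two matches is used.
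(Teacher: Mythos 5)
Your proposal is correct and follows essentially the same route as the paper's proof: pair up the two maximal match sets via $\approx$, build a pregraph isomorphism $H_1 \to H_2$ that is the identity on the untouched part of $g$ and the composite renaming ${h'}^{a'}\circ(h^a)^{-1}$ on the fresh items, then conclude via Proposition~\ref{iso-pregraph}. Your explicit transversal argument for the bijection $M_1\to M_2$ and your remark about the new/env boundary links make precise two points the paper's proof leaves implicit, but the underlying argument is identical.
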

\forgetlongv{
 \begin{proof}
   The proof is quite direct.  Let $M_1$ and
   $M_2$ be two different sets of full parallel matches such that
   $g \fpr_{M_1} g_1$ and $ g \fpr_{M_2} g_2$. By definition of sets
   of full parallel matches, 
   for all matches $m^b\in M_1 $ there exists a match $m'^{b'}\in {M}_2$ such
   that $m^b \approx m'^{b'}$. Since $M_1$ and $M_2$ are finite (see
   Proposition~\ref{finitude}), there exists a natural number $k$ such
   that $M_1= \{ m_1^{b_1},m_2^{b_2},\ldots , m_k^{b_k} \}$ and
   $M_2= \{ m_1'^{b_1'},m_2'^{b'_2},\ldots , m_k'^{b'_k} \}$ such that for all
   $i \in \{1, \dots, k\}$, $m_i^{b_i} \approx m_i'^{b'_i}$.  Therefore, for every $i$
   such that $1 \leq i \leq k$, there exist a rule $L_i \to R_i$ in
   $\grs$ and two variants of it $l_i \to r_i$ and $l'_i \to r'_i$
   together with two renaming mappings $h_i^{a_i} : L_i \to l_i$ and
   $h_i'^{a'_i} : L_i \to l'_i$ such that for all elements $x \in L_i$,
   $m_i^{b_i}(h_i^{a_i}(x)) = m_i'^{b'_i}(h_i'^{a'_i}(x))$.

 By Definitions~\ref{defparallel} and \ref{fullparallel}, graphs $g_1$
 and $g_2$ are  quotient pregraphs of two pregraphs, respectively
 $H_1$ and $H_2$, obtained after the
 first step of parallel rewrite steps. The sets of nodes
 and ports of pregraphs $H_1$ and $H_2$ are defined as follows

 \begin{itemize}
 \item 
 $\nodes_{H_1} = (\nodes_g - \cup_{i=1}^k \nodes_{m_i^{b_i}(\lhs_i)}^{cut}) \uplus
   \cup_{i=1}^k \nodes_{\rhs_i}^{new}$
 \item  
 $\nodes_{H_2} = (\nodes_g - \cup_{i=1}^k \nodes_{{m'}_i^{b'_i}({\lhs'}_i)}^{cut}) \uplus
   \cup_{i=1}^k \nodes_{{\rhs'}_{i}}^{new}$
 \item 
 $\ports_{H_1} = (\ports_g - \cup_{i=1}^k \ports_{m_i^{b_i}(\lhs_i)}^{cut}) \uplus
   \cup_{i=1}^k \ports_{\rhs_i}^{new}$
 \item
 $\ports_{H_2} = (\ports_g  \cup_{i=1}^k \ports_{{m'}_i^{b'_i}({\lhs'}_i)}^{cut}) \uplus
   \cup_{i=1}^k \ports_{{\rhs'}_i}^{new}$

\item
$\att_{H_1} = (\att_{g} - \cup_{i=1}^k \att_{m_i^{b_i}(\lhs_i)}^{cut})\uplus
\cup_{i=1}^k \att_{\rhs_i}^{new}$

\item
$\att_{H_2} = (\att_{g} - \cup_{i=1}^k \att_{{m'}_i^{{b'}_i}(\lhs_i)}^{cut})\uplus
\cup_{i=1}^k \att_{\rhs_i}^{new}$
 \end{itemize}  

 Now, We define a map $f^c : H_1 \to H_2$ by means of three maps on nodes, ports and attributes
 $f^c_N : \nodes_{H_1} \to \nodes_{H_2}$,
 $f^c_P : \ports_{H_1} \to \ports_{H_2}$ and 
$c: \Att_{H_1} \to \Att_{H_2}$ as follows

 \[
     f^c_N(x)= 
 \begin{cases}
     x & \text{if } x \in (\nodes_g - \cup_{i=1}^k \nodes_{m_i^{b_i}(\lhs_i)}^{cut})\\
     {h'}_i^{a'_i}( (h^{a_i}_i)^{-1}(x))              & \text{if }  x \in  \nodes_{\rhs_i}^{new}, for \; 1\leq i \leq k 
 \end{cases}
 \]
 
 \[
     f^c_P(x)= 
 \begin{cases}
     x & \text{if } x \in (\ports_g - \cup_{i=1}^k \ports_{m_i^{b_i}(\lhs_i)}^{cut})\\
     {h'}_i^{a'_i}( (h^{a_i}_i)^{-1}(x))               & \text{if }  x \in  \ports_{\rhs_i}^{new},  for \; 1\leq i \leq k
 \end{cases}
 \]

 and  
 $
     c(x)= 
 \begin{cases}
     b_i'\circ a_i' \circ a_i^{-1} \circ b_i^{-1}(x) & \text{if } \exists i \in \{1, \ldots, k\}, \exists e \in
     (\nodes_{H_1} \cup \ports_{H_1}) , (e,x) \in \att^{new}_{\rhs_i}\\
     x              & \text{otherwise}
 \end{cases}
 $

 $f^{c}$ is clearly a pregraph isomorphism between $H_1$ and $H_2$. As
 $g_1$ and $g_2$ are obtained as quotient pregraphs of  $H_1$ and $H_2$
 respectively, we conclude by using
 Proposition~\ref{iso-pregraph}, that  $g_1$ and $g_2$ are isomorphic.

 \end{proof}
   }
\begin{example} \label{ex_match}
$\qquad$
\begin{multicols}{2}
\hspace{1.5cm}\includegraphics[scale=0.15]{./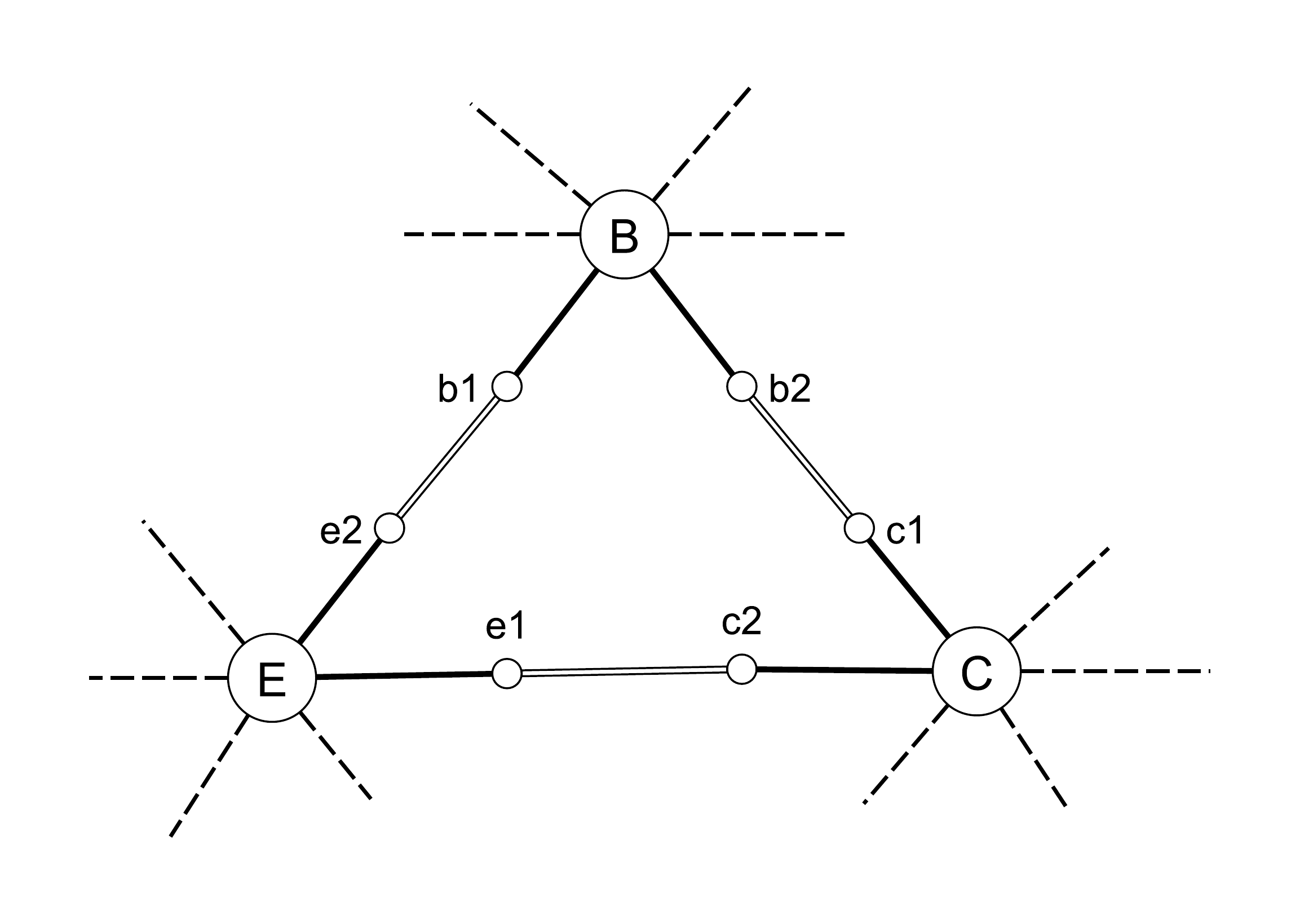}


Let us consider the rule $R_T$ defined in Figure~\ref{ex5} and the
subgraph $s$ depicted on the side. 
The reader can verify
that there are six different matches, $m_1 \ldots m_6$, between the
left-hand side of $R_T$ and graph $s$.
\end{multicols}
\forgetshortv{
\noindent
 Four of these matches are sketched below. Variants of
$R_T$ have been omitted for sake of readability. 


\begin{itemize}
\item $m_1$ : $m_1(\alpha)=E; m_1(\beta)= B$; $m_1(\gamma)=C$; $m_1(\alpha_1)=e_1; m_1(\alpha_2)=e_2; m_1(\beta_1)= b_1; m_1(\beta_2)= b_2$; $m_1(\gamma_1)=c_1$; $m_1(\gamma_2)=c_2$.
\item $m_2$ : $m_2(\alpha)=E; m_2(\beta)= C$; $m_2(\gamma)=B$; $m_2(\alpha_1)=e_2; m_2(\alpha_2)=e_1; m_2(\beta_1)= c_2; m_2(\beta_2)= c_1$; $m_2(\gamma_1)=b_2$; $m_2(\gamma_2)=b_1$.
\item $m_3$ : $m_3(\alpha)=B; m_3(\beta)= E$; $m_3(\gamma)=C$; $m_3(\alpha_1)=b_2; m_3(\alpha_2)=b_1; m_3(\beta_1)= e_2; m_3(\beta_2)= e_1$; $m_3(\gamma_1)=c_2$; $m_3(\gamma_2)=c_1$.
\item $m_4$ : $m_4(\alpha)=B; m_4(\beta)= C$; $m_4(\gamma)=E$; $m_4(\alpha_1)=b_1; m_4(\alpha_2)=b_2; m_4(\beta_1)= c_1; m_4(\beta_2)= c_2$; $m_3(\gamma_1)=e_1$; $m_4(\gamma_2)=e_2$.
\end{itemize}

Here, the homomorphisms over attributes are always the identity, that
is why they have been omitted.
By considering the six matches and the rule $R_T$, the reader may
check that the subgraph $s$ can be rewritten, using six different
variants of rule $R_T$, into a pregraph containing $3 \times 6$ new
nodes and $12 \times 6$ new ports.  The quotient pregraph has only $3$
new nodes but has $42$ new ports. Each pair of new nodes has $6$
connections.
}
\forgetlongv{
\noindent
 These matches are sketched below. Variants of
$R_T$ have been omitted for sake of readability. 


\begin{itemize}
\item $m_1$ : $m_1(\alpha)=E; m_1(\beta)= B$; $m_1(\gamma)=C$; $m_1(\alpha_1)=e_1; m_1(\alpha_2)=e_2; m_1(\beta_1)= b_1; m_1(\beta_2)= b_2$; $m_1(\gamma_1)=c_1$; $m_1(\gamma_2)=c_2$.
\item $m_2$ : $m_2(\alpha)=E; m_2(\beta)= C$; $m_2(\gamma)=B$; $m_2(\alpha_1)=e_2; m_2(\alpha_2)=e_1; m_2(\beta_1)= c_2; m_2(\beta_2)= c_1$; $m_2(\gamma_1)=b_2$; $m_2(\gamma_2)=b_1$.
\item $m_3$ : $m_3(\alpha)=B; m_3(\beta)= E$; $m_3(\gamma)=C$; $m_3(\alpha_1)=b_2; m_3(\alpha_2)=b_1; m_3(\beta_1)= e_2; m_3(\beta_2)= e_1$; $m_3(\gamma_1)=c_2$; $m_3(\gamma_2)=c_1$.
\item $m_4$ : $m_4(\alpha)=B; m_4(\beta)= C$; $m_4(\gamma)=E$; $m_4(\alpha_1)=b_1; m_4(\alpha_2)=b_2; m_4(\beta_1)= c_1; m_4(\beta_2)= c_2$; $m_3(\gamma_1)=e_1$; $m_4(\gamma_2)=e_2$.
\item $m_5$ : $m_5(\alpha)=C; m_5(\beta)= B$; $m_5(\gamma)=E$; $m_5(\alpha_1)=c_2; m_5(\alpha_2)=c_1; m_5(\beta_1)= b_2; m_5(\beta_2)= b_1$; $m_5(\gamma_1)=e_2$; $m_5(\gamma_2)=e_1$.
\item $m_6$ : $m_5(\alpha)=C; m_5(\beta)= E$; $m_5(\gamma)=B$; $m_5(\alpha_1)=c_1; m_5(\alpha_2)=c_2; m_5(\beta_1)= e_1; m_5(\beta_2)= e_2$; $m_5(\gamma_1)=b_1$; $m_5(\gamma_2)=b_2$.
\end{itemize}

Here, the homomorphisms over attributes are always the identity, that
is why they have been omitted.
Thanks to the six matches and the rule $R_T$, the reader may
check that the subgraph $s$ can be rewritten, by using six different
variants of rule $R_T$, into a pregraph containing $3 \times 6$ new
nodes and $12 \times 6$ new ports.  The quotient pregraph has only $3$
new nodes but has $42$ new ports. Each pair of new nodes has $6$
connections.
}
\end{example}

This example shows that the \fp\ rewriting has to be used carefully since it may produce non intended results due to overmatching the same subgraphs. To overcome this issue, one may use attributes in order to lower the possible matches.
We call such attributes \emph{\da}. 
\begin{figure}[t] \centering
\includegraphics[scale=0.2]{./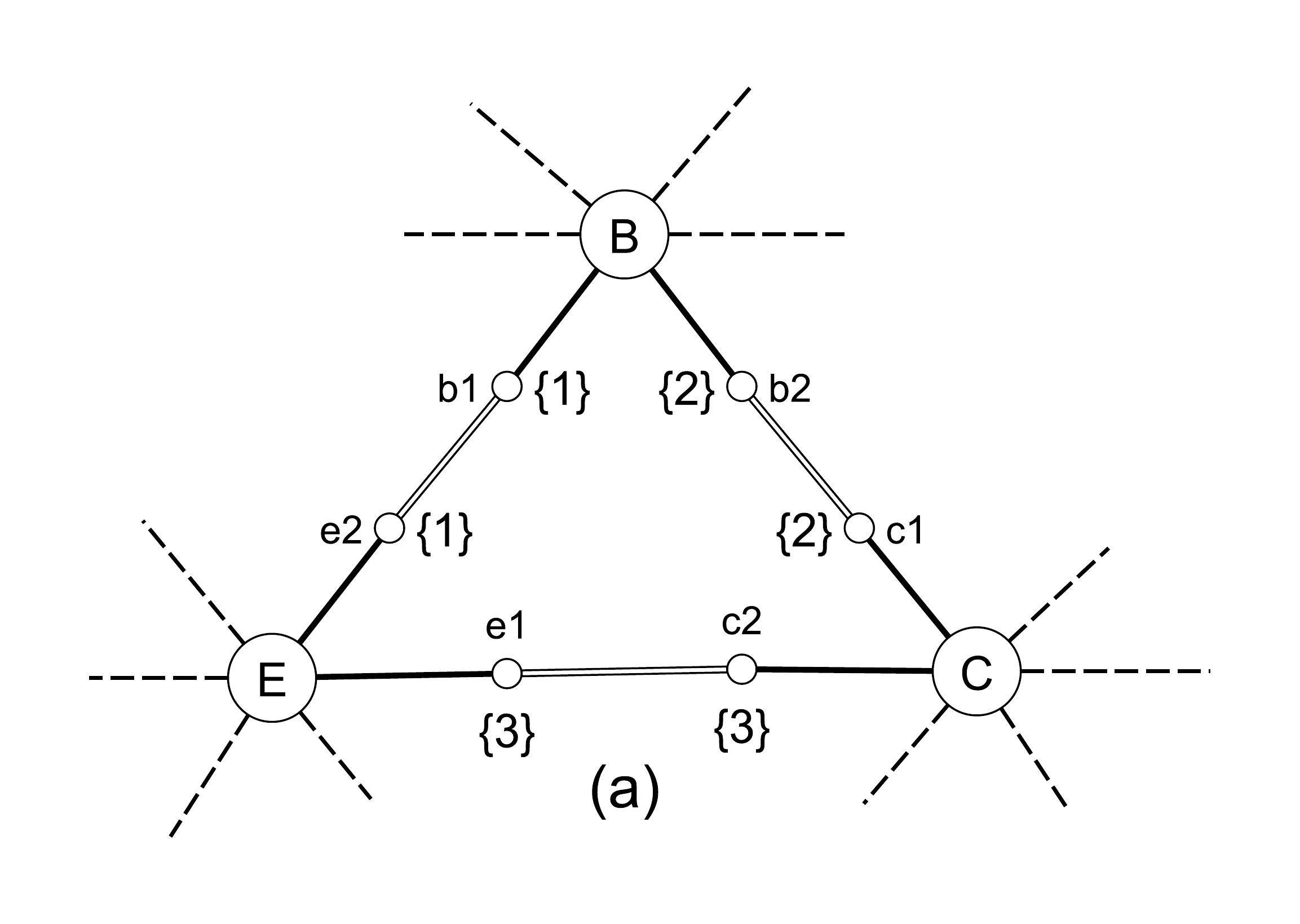}
\hspace*{-0.5cm} \includegraphics[scale=0.2]{./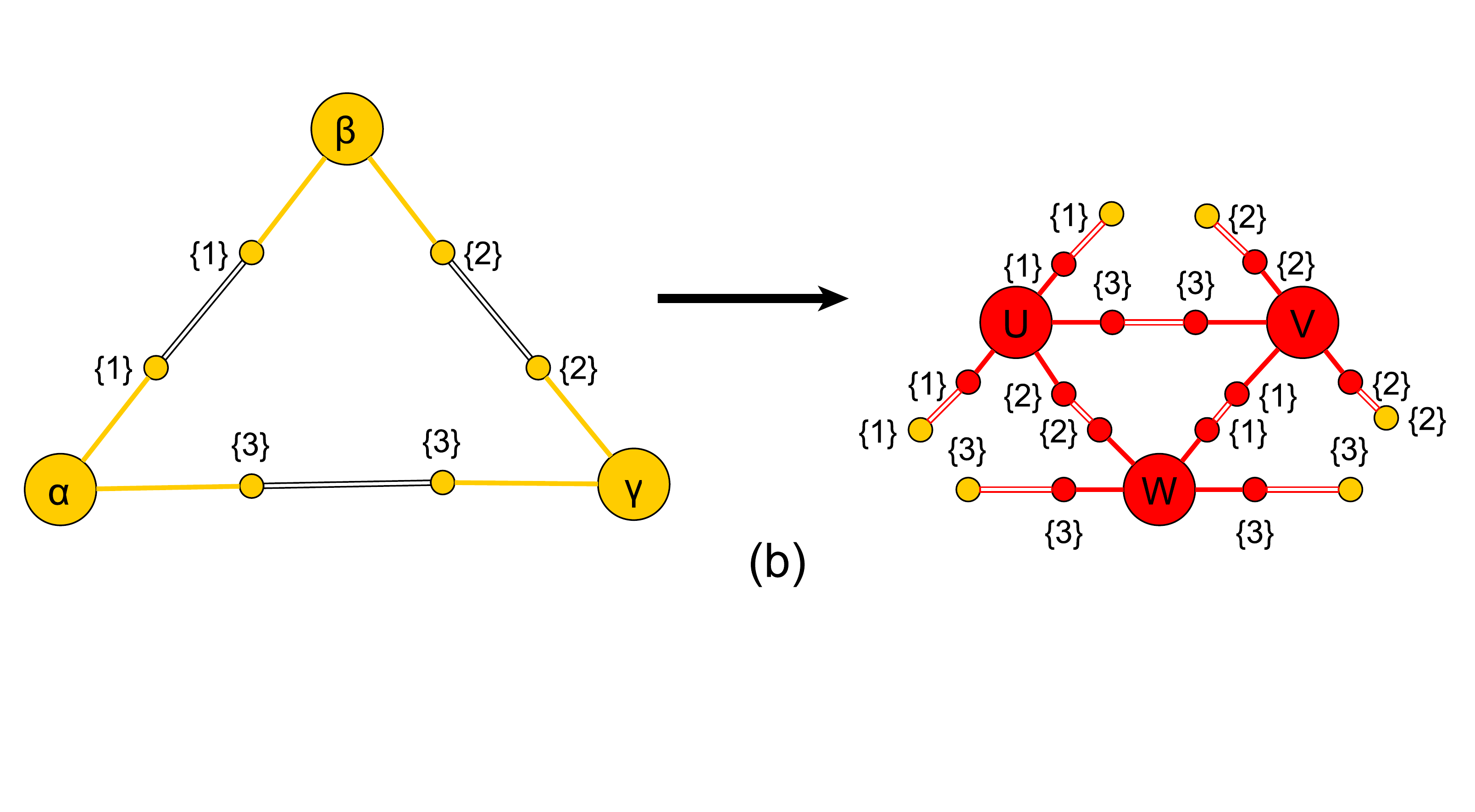}
\caption{(a) Subgraph $s$ with distinguishing attributes on ports. The attributes are $\{1, 2, 3 \}$. (b) Rule $R_T$ with distinguishing attributes.}
\label{subgraph_att}
\end{figure} 
%
  In order to consider only one match of the subgraph $s$ considered in
   Example~\ref{ex_match} by the rule $R_T$, one option is to apply 
  \fp\ rewrite relation with
  \da\ 
  on the subgraph depicted in Figure~\ref{subgraph_att} (a) and
   rule $R_T$ with distinguishing attributes given in
  Figure~\ref{subgraph_att} (b),
leading to a pregraph whose
quotient  is a graph with  $3$ new nodes and $12$ new ports.
This graph is the expected one.

Another way to mitigate the problems of overmatching subgraphs, in
addition to the use of distinguishing attributes, consists in taking
advantage of the symmetries that appear in the graphs of rewrite
rules.  This leads us to define a new rewrite relation which gets rid
of multiple matches of the same left-hand-side of a fixed
rule. 
We call this relation \emph{\pa s} and is defined
below.
\subsection{Parallel  Rewrite Relation up to Automorphisms} 
\newcommand{\M}{{\cal M}_{\lhs}}

Let us consider a graph $g$ which rewrites into $g_1$ and $g_2$ using
an \esrr\ $l \to r$. This means that there exist two matches
$ \mu^{\beta_i}_i: l \to g$ with $i \in \{1, 2\}$ such that
$g \Rightarrow_{l \rightarrow r, \mu^{\beta_i}_i} g_i$. One may wonder
whether $g_1$ and $g_2$ are the same (up to isomorphism) whenever
matches $ \mu^{\beta_1}_1$ and $\mu^{\beta_2}_2$ are linked by means of an
automorphism of $l$. That is to say, when there exists an automorphism
$h^{a}: \lhs \to \lhs$ with $\mu^{\beta_1}_1= \mu^{\beta_2}_2 \circ
h^{a}$.
Intuitively, matches $ \mu^{\beta_1}_1$ and $\mu^{\beta_2}_2$ could be considered
as the same up to a permutation of nodes. We show below that $g_1$ and
$g_2$ are actually isomorphic but under some syntactic condition we
call \emph{symmetry condition}.

Notation: Let $g$ be a graph with attributes in $\Att$. We write
$H(g)$ to denote the set of automorphisms of $g$, i.e. $H(g)$
is the set of isomorphisms $ h^{a}: g \to g$, with $a$ being an
isomorphism on the attributes of $g$, $a: \Att \to \Att$.

\begin{proposition} \label{prop-iso} Let $l \to r$ be an \esrr.  let
  $l_1 \to r_1$ and $l_2 \to r_2$ be two variants of the rule
  $l \to r$.  Let $v_1^{c_1}$, $v_1'^{c_1}$, $v_2^{c_2}$, $v_2'^{c_2}$
  be the isomorphisms reflecting the variant status of these two rules
  with $v_1^{c_1}: l \to l_1$, $v_1'^{c_1}: r \to r_1$,
  $v_2^{c_2}: l \to l_2$ and $v_2'^{c_2}: r \to r_2$ such that
  $l_i = v_i^{c_i}(l)$, $r_i = v_i'^{c_i}(r)$ and
  $v_i^{c_i}(r_i^{env}) = v_i'^{c_i}(r_i^{env})$ for $i \in \{1, 2\}$.
  Let $g$ be a graph and $g'_1$ and $g'_2$ be two pregraphs.  Let
  $g \Rightarrow_{l_1 \rightarrow r_1, m^{b_1}_1} g_1'$ and
  $g \Rightarrow_{l_2 \rightarrow r_2, m^{b_2}_2} g_2'$ be two rewrite
  steps such that there exist two automorphisms $h^{a}: \lhs \to \lhs$
   and $h'^{a}: \rhs \to \rhs$
  such that (i) with $m^{b_1}_1=m^{b_2}_2 \circ v_2^{c_2}\circ h^a \circ (v_1^{c_1})^{-1}$ and (ii) for all elements $x$ of $r^{env}$, $h'^{a}(x)=h^{a}(x)$.
  Then, $g_1'$ and $g_2'$ are isomorphic.

\end{proposition}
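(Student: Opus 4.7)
The plan is to work at the level of the intermediate pregraphs produced by Definition~\ref{defparallel}. Each rewrite step $g \Rightarrow_{l_i \to r_i, m_i^{b_i}} g'_i$ first builds a pregraph $H_i$ by deleting the image of the cut and grafting on $r_i^{new}$, and then sets $g'_i = \overline{H_i}$. By Proposition~\ref{iso-pregraph}, isomorphic pregraphs have isomorphic quotients, so it suffices to exhibit a pregraph isomorphism $\phi : H_1 \to H_2$.

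I would define $\phi$ piecewise: on the preserved subgraph $g \setminus m_1(l_1^{cut})$ it is the identity, while on $\nodes_{r_1}^{new} \uplus \ports_{r_1}^{new}$ I set $\phi(x) = v_2'^{c_2}(h'^a((v_1'^{c_1})^{-1}(x)))$. For the two pieces to fit together one must first check that $m_1(l_1^{cut}) = m_2(l_2^{cut})$ as subsets of $g$: hypothesis~(i) gives $m_1 \circ v_1^{c_1} = m_2 \circ v_2^{c_2} \circ h^a$, and since $h^a$ is an automorphism of the environment-sensitive rule $l$ it respects the partition $l = l^{cut} \uplus l^{env}$, so $m_1(v_1^{c_1}(l^{cut})) = m_2(v_2^{c_2}(h^a(l^{cut}))) = m_2(v_2^{c_2}(l^{cut})) = m_2(l_2^{cut})$. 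The cut attributes coincide analogously.

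The heart of the verification is that $\phi$ transports links. Edges wholly inside the preserved part are carried by the identity, and edges wholly inside $r_1^{new}$ are transported since $h'^a$ respects the new/env partition of $r$ and the $v_i'^{c_i}$ are graph isomorphisms. The delicate case is a \emph{glue} edge $(p, p') \in \pp_{r_1}^{new}$ (or in $\pn_{r_1}^{new}$) with $p \in \ports_{r_1}^{new}$ and $p' \in \ports_{r_1}^{env}$, which in $H_1$ appears as $(p, m_1(p'))$ because $m_1(p') \notin m_1(l_1^{cut})$ by injectivity of the match. Its image under $\phi$ must be a link $(\phi(p), m_2(q))$ of $H_2$, where $q = v_2'^{c_2}(h'^a((v_1'^{c_1})^{-1}(p')))$. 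The variant agreement $v_i^{c_i}|_{env} = v_i'^{c_i}|_{env}$ gives $(v_1'^{c_1})^{-1}(p') = (v_1^{c_1})^{-1}(p') \in r^{env}$; condition~(ii) then replaces $h'^a$ by $h^a$ on this element, and hypothesis~(i) yields $m_2(v_2^{c_2}(h^a((v_1^{c_1})^{-1}(p')))) = m_1(p')$, as required.

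The attribute component of $\phi$ is the composite $b_2 \circ c_2 \circ a \circ c_1^{-1} \circ b_1^{-1}$ on attributes introduced in the new part and the identity elsewhere; checking it preserves $\att_{H_i}$ is routine given that $v_i^{c_i}$, $v_i'^{c_i}$, $h^a$, $h'^a$ and $m_i^{b_i}$ are all built over attribute isomorphisms. The main obstacle is the bookkeeping across the cut/env/new partitions of the six maps $v_1^{c_1}, v_1'^{c_1}, v_2^{c_2}, v_2'^{c_2}, h^a, h'^a$, and seeing that hypotheses~(i) and~(ii) together with the variant agreement on env are exactly what makes the glue edges commute. Once $\phi$ is shown to be a pregraph isomorphism, Proposition~\ref{iso-pregraph} yields $g'_1 = \overline{H_1} \cong \overline{H_2} = g'_2$.
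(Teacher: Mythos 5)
Your proposal follows essentially the same route as the paper's proof: both construct an explicit pregraph isomorphism between the intermediate pregraphs that is the identity on the preserved part of $g$ and acts as $v_2'^{c_2}\circ h'^{a}\circ (v_1'^{c_1})^{-1}$ on the new items, both use conditions (i), (ii) and the agreement of $v_i^{c_i}$ and $v_i'^{c_i}$ on $r^{env}$ to make the glue links commute, and both conclude via Proposition~\ref{iso-pregraph}. Your explicit check that $m_1^{b_1}(l_1^{cut})=m_2^{b_2}(l_2^{cut})$ is a welcome detail the paper's sketch leaves implicit, but it does not change the argument.
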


 \begin{proof}[Sketch] 
   The sketch of the proof is depicted in Figure~\ref{iso}. The
   attributes structures used in the rule $l \to r$ (respectively,
   $l_1 \to r_1$ and $l_2 \to r_2$) are denoted $A$ (respectively,
   $A_1$ and $A_2$) whereas the attibutes structure of the transfomed
   graph $g$ is denoted $B$. 
   From the hypotheses, we can easily infer the exitence of two
   isomorphisms $h_v^{c_v}: l_1 \to l_2$ and $h_v'^{c_v}:r_1 \to r_2$ such that
   $h_v^{c_v} = v_2^{c_2}\circ h^a \circ (v_1^{c_1})^{-1} $ and
   $h_v'^{c_v} = v_2'^{c_v}\circ h'^{a} \circ (v_1'^{c_1})^{-1}$.
And we have $c_v=c_2 \circ a \circ c_1^{-1}$.

 Let $g \Rightarrow_{l_1 \rightarrow r_1, m_1^{b_1}} g_1'$ and
   $g \Rightarrow_{l_2 \rightarrow r_2, m_2^{b_2}} g_2'$ 
 such that $m_1^{b_1}(l_1) = m_2^{b_2}(l_2)$.
 By definition of a rewrite step, there exist a pregraph $g_1$ (respect.
 a pregraph $g_2$) and an injective
 homomorphism $m_1'^{b_1}: \rhs_1 \to g_1$ (respect. $m_2'^{b_2}: \rhs_2 \to g_2$)
 such that $g'_1 = \overline g_1$ (respect. $g'_2 = \overline g_2$). Moreover,
 since, by definition,  $r^{env}$ is included in $l^{env}$ for any \esrr\ $l \to r$, we have $m_1'^{b_1}(r_1^{env})=m_1^{b_1}(r_1^{env})$
 (respect. $m_2'^{b_2}(r_2^{env})=m_2^{b_2}(r_2^{env})$), where $
 m_i'^{b_i}$, for  $i \in \{1, 2\}$, are defined as follows:
\forgetshortv{
\noindent
 for $n \in \nodes_{r_i}, {m'}^{b_i}_i(n)= \left \{ \begin{array}{l} m^{b_i}_i(n) \;
                                             if \; n \in \nodes_{r_i}^{env}\\
 n \; otherwise  \end{array} \right.$
 for $p \in \ports_{r_i}, {m'}^{b_i}_i(p)= \left \{ \begin{array}{l} m^{b_i}_i(p) \;
                                             if \; p \in \ports_{r_i}^{env}\\
 p \; otherwise  \end{array} \right.$
}

\forgetlongv{

 for $n \in \nodes_{r_i}, {m'}^{b_i}_i(n)= \left \{ \begin{array}{l} m^{b_i}_i(n) \;
                                             if \; n \in \nodes_{r_i}^{env}\\
 n \; otherwise (n \in \nodes_{r_i}^{new}) \end{array} \right.$

 for $p \in \ports_{r_i}, {m'}^{b_i}_i(p)= \left \{ \begin{array}{l} m^{b_i}_i(p) \;
                                             if \; p \in \ports_{r_i}^{env}\\
 p \; otherwise (p \in \ports_{r_i}^{new}) \end{array} \right.$
}

 Now, let us define the isomorphism $h''^d: g_1 \to g_2$ with $d (x) =
 $ if $ x \in b_1(A_1)$ then $b_2 \circ c_v \circ b_1^{-1}(x)$ else $x$.
 Let us consider $x$
 such that $x$ is an element of $r^{env}$ (port or node).
 We have $m_1^{b_1}(v_1^{c_1}(x))=m_2^{b_2}(v_2^{c_2}(h^{a}(x)))$ is an element of $g$. 
 Moreover $m_1'^{b_1}(v_1'^{c_1}(x)) \in g_1 $ and $m_2'^{b_2}(v_2'^{c_2}({h'}^{a}(x)) \in g_2$.
 Let us denote $y=m_1^{b_1}(v_1^{c_1}(x))$.
 By construction $m_1'^{b_1}(v_1'^{c_1}(x))=m_1^{b_1}(v_1^{c_1}(x))=y$ because $x \in r^{env}$.
 From the hypothesis we have  $h^{a}(x)={h'}^{a}(x)$. Thus
  $m_2'^{b_2}(v_2'^{c_2}({h'}^{a}(x))=m_2'^{b_2}(v_2'^{c_2}(h^{a}(x))$ and then we have
 $m_2'^{b_2}(v_2'^{c_2}({h'}^{a}(x))=m_2^{b_2}(v_2^{c_2}(h^{a}(x))=y$.
 Then,
 for all elements  $z$ of
 the non-modified part of $g$ which is $g-m_1^{b_1}(v_1^{c_1}(l))$ ($z$ can be a port or a node if $y$ is not a node) such that $(z,y) \in g$,
  we have that $(z,y) \in g_1$ and $(z,y)\in g_2$ and ${h''}^d=Id^d$ on $g_1-m_1'^{b_1}(v_1'^{c_1}(r))$.
 Finally the definition of $h''^d$ is :

 For $y \in \nodes_{g_1} \cup\ports_{g_1}, h''^d(y)= \left \{ \begin{array}{l} m_2'^{b_2}(h_v'^{c_v}((m_1'^{b_1})^{-1}(y))) \; if \; y \in m_1'^{b_1}(r_1)\\
 y \; otherwise \end{array} \right.$

 For all types of existing connections $(y,z)$ of $g_1$ where $y$ and
 $z$ in $ \nodes_{g_1} \cup\ports_{g_1}$, $h''^d(y,z)=(h''^d(y),h''^d(z))$ is
 in $g_2$. By construction, the homomorphism conditions on attributes
 are fulfilled by $h''^d$. Thus,  $h''^d: g_1 \to g_2$ is a pregraph homomorphism.
 In addition, $h''^d$ is  bijective by construction.
 From $h''^d$ and 
 Proposition~\ref{iso-pregraph}, we infer the
 isomorphism $h^{(3)d}:  g'_1 \to  g'_2$.

 \begin{figure}[t] \centering
 \includegraphics[scale=0.3]{./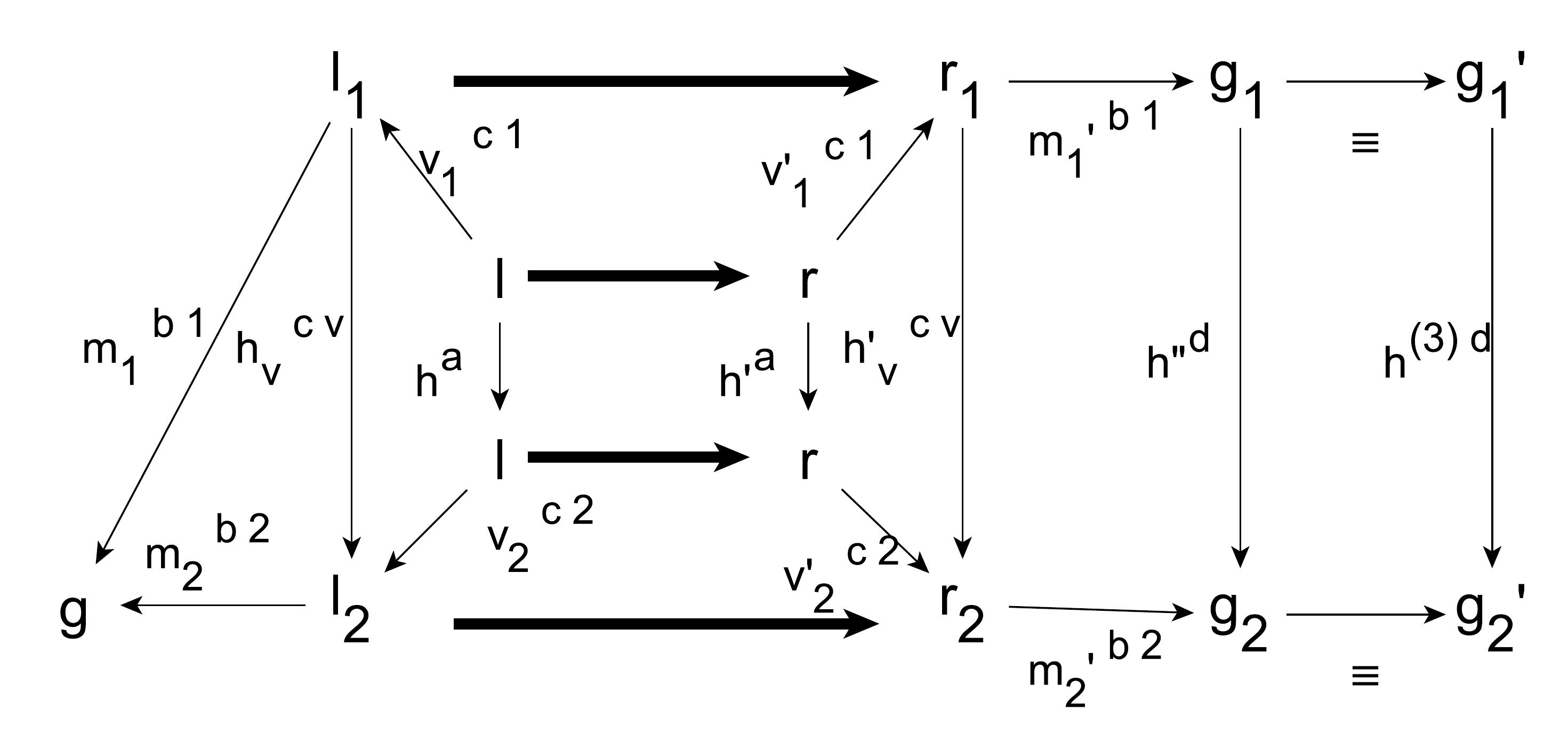} \hspace{1.2cm} \includegraphics[scale=0.3]{./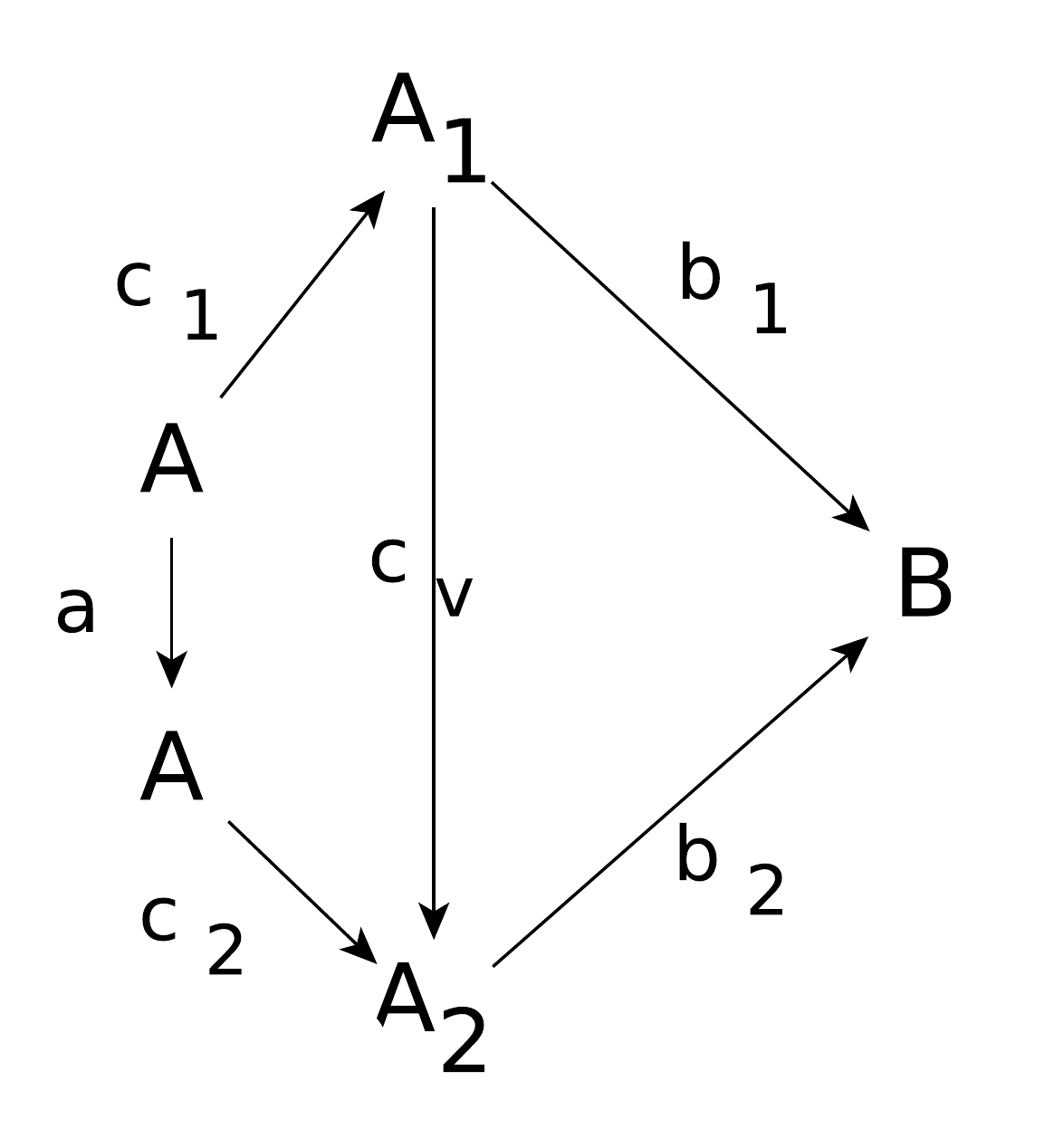}
 \caption{ Sketch of the proof of Proposition \ref{prop-iso}}
 \label{iso}
 \end{figure} 

 \end{proof}

\begin{definition}[Symmetry Condition]  
An \esrr\ $\lhs \rightarrow \rhs$ verifies the
\emph{\syc} \ iff 
$\forall  h^{a} \in H(\lhs), \; \exists \; h'^{a} \in H(\rhs), \;
\mbox{such that}  \;\forall x \in \rhs^{env}, h^{a}(x)=h'^{a}(x)$
\end{definition}

 The reader can check that the rule $R_T$ verifies the \syc.

\begin{definition}[Matches up to automorphism, $\sim^{\lhs}$]
  Let $\lhs \to \rhs$ be an \esrr\ satisfying the symmetry condition.
Let $\lhs_1 \to \rhs_1$ and $\lhs_2 \to \rhs_2$ be two different variants of
the rule $\lhs \to \rhs$. Let $v^{c_1}_1 : \lhs \to \lhs_1$ and $v^{c_2}_2 :
\lhs \to \lhs_2$ be the isomorphisms that reflect the variant status
of $\lhs_1$ and $\lhs_2$ of $\lhs$. Let $m^{b_1}_1: \lhs_1 \to g$ and $m^{b_2}_2:
\lhs_2 \to g$ be two matches such that $m^{b_1}_1(\lhs_1) = m^{b_2}_2(\lhs_2)$. We
say that matches $m^{b_1}_1$ and $m^{b_2}_2$ are equal up to ($l$-)automorphism  and
write $ m^{b_1}_1 \sim^{\lhs} m^{b_2}_2$
iff
there exists an automorphism $h^{a} : \lhs \to \lhs$ such that $m^{b_1}_1 = m^{b_2}_2 \circ v^{c_2}_2
\circ h^{a} \circ {v^{c_1}_1}^{-1}$. 
\end{definition}

\begin{definition}[Rewriting up to automorphisms]
  Let $\grs$ be a  \admissiblecsgrs\ whose rules satisfy the symmetry
  condition and $g$ a graph.  Let
  $M(\grs,g)^{auto} = \{m^{a_i}_i: \lhs_i \to g \; | \; \lhs_i \mbox{ is the
    left-hand side of a variant of a } \mbox{rule } \lhs \to \rhs \mbox{ in } \grs 
  \mbox{ and } m^{a_i}_i \\ \mbox{is a match up to automorphism} \}$.
  We define the rewrite relation $\ar$ which rewrites graph $g$ by
  considering only matches up to automorphisms. I.e., the set of matches $M$ of Definition~\ref{defparallel} is 
  $M(\grs,g)^{auto}$.


\end{definition}
\forgetlongv{
\begin{remark} For all two matches $m^{b_1}_1$ and $m^{b_2}_2$ in   $M(\grs,G)^{auto}$, $m^{b_1}_1
\not\sim^l m^{b_2}_2$. This means that the choice of matchings in
$M(\grs,G)^{auto}$ are not unique. From every equivalence class of a
match w.r.t. the equivalence relation $\sim^l$, only one
representative is considered. Therefore, one may wonder if the relation $\ar$ is
confluent. The answer is positive, that is to say, whatever the match 
representatives are chosen (up to automorphism), the relation $\ar$
rewrites a given graph to a same pregraph \emph{up to isomorphism}.
\end{remark}
}

\begin{theorem}\label{theo:automorphism}
Let $\grs$ be a  \admissiblecsgrs\ whose rules satisfy the symmetry
  condition. Then $\ar$ is deterministic. That is, for all graphs $g$,  
  ($g \ar g'_1$ and $g \ar g'_2$) implies that $g'_1$ and $g'_2$
  are isomorphic.

\end{theorem}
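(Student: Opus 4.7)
The plan is to mimic the argument used in the proof of Proposition \ref{func} (determinism of $\fpr$) but replace equality of matches up to the equivalence $\approx$ by equality up to the coarser equivalence $\sim^{\lhs}$, and then invoke Proposition \ref{prop-iso} locally to transport each ``twisted'' rewriting occurrence to a canonical one.

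First, I would fix two derivations $g \ar_{M_1} g'_1$ and $g \ar_{M_2} g'_2$ and argue that $M_1$ and $M_2$ are finite (by the same bound as in Proposition \ref{finitude}) and of the same cardinality. Since both $M_1$ and $M_2$ are maximal in $M(\grs,g)^{auto}$, every match in $M_1$ is $\sim^{\lhs}$-equivalent to exactly one match in $M_2$ (and vice-versa). Enumerate the matches so that $M_1 = \{m_i^{b_i}\}_{i=1}^{k}$, $M_2 = \{{m'_i}^{b'_i}\}_{i=1}^{k}$, with $m_i^{b_i}$ and ${m'_i}^{b'_i}$ instances of variants $l_i \to r_i$ and $l'_i \to r'_i$ of the same rule $L_i \to R_i$ of $\grs$, and $m_i^{b_i} \sim^{L_i} {m'_i}^{b'_i}$. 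By definition of $\sim^{L_i}$, there is an automorphism $h_i^{a_i} : L_i \to L_i$ realising this equivalence, and by the symmetry condition there is a companion automorphism $h'^{a_i}_i : R_i \to R_i$ that agrees with $h_i^{a_i}$ on $R_i^{env}$.

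Next I would build a pregraph isomorphism $f^c : H_1 \to H_2$ between the intermediate pregraphs produced by the first step of Definition \ref{defparallel}. On the unmodified part of $g$, namely $(\nodes_g \cup \ports_g) \setminus \bigcup_i(\nodes^{cut}_{m_i^{b_i}(l_i)} \cup \ports^{cut}_{m_i^{b_i}(l_i)})$, which is the same set for $H_1$ and $H_2$ because the environment-sensitive rule variants have the same cut component up to automorphism, $f^c$ is defined to be the identity. On the new items $\nodes_{r_i}^{new} \cup \ports_{r_i}^{new}$ introduced by the $i$-th match in $M_1$, I set $f^c$ to be the composition of renamings dictated by the variants together with the automorphism $h'^{a_i}_i$ (and analogously ${b'_i}\circ{a'_i}\circ h'^{a_i}_i\circ a_i^{-1}\circ b_i^{-1}$ on attributes). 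The key verification that this yields a bona fide pregraph isomorphism is exactly the local argument sketched in Proposition \ref{prop-iso}: because $h^{a_i}_i$ and $h'^{a_i}_i$ coincide on $R_i^{env}$, the connections between a new item introduced on the $i$-th match and a port or node of the unmodified part of $g$ (which live in $m_i^{b_i}(R_i^{env}) = {m'_i}^{b'_i}(R_i^{env})$) are mapped correctly.

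Finally I would apply Proposition \ref{iso-pregraph}: the two pregraph isomorphisms $f^c : H_1 \to H_2$ lift to isomorphisms $\overline{f^c} : \overline{H_1} \to \overline{H_2}$, which is exactly the desired isomorphism $g'_1 \simeq g'_2$. The main obstacle I anticipate is verifying the global consistency of $f^c$: even though each rule application is handled locally by Proposition \ref{prop-iso}, different matches in $M_1$ and $M_2$ may share environmental items (this is the whole point of allowing overlapping rules), and one has to check that the automorphism-induced renamings agree on shared environment elements. This is where the conflict-freeness of $\grs$ together with the defining constraint $\merge$ and the symmetry condition must be combined: the environment parts $m_i^{b_i}(l_i^{env})$ and ${m'_i}^{b'_i}({l'_i}^{env})$ coincide pointwise in $g$ by construction of $\sim^{L_i}$, so the identity on the unmodified part glues with the local automorphism-twisted isomorphisms without contradiction.
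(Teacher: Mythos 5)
Your proposal is correct and follows essentially the same route as the paper's own proof: pair the two maximal sets of matches via $\sim^{\lhs}$, use the symmetry condition to obtain companion automorphisms of the right-hand sides agreeing on the environment, build an explicit pregraph isomorphism between the intermediate pregraphs (identity off the rewritten part, automorphism-twisted renaming on the new items) exactly as in Proposition~\ref{prop-iso}, and conclude via Proposition~\ref{iso-pregraph}. Your explicit attention to the gluing of the local isomorphisms on shared environment items is a welcome addition that the paper's sketch leaves implicit.
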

\forgetlongv{
\begin{proof}[Sketch]
 Let $M_1$ (resp. $M_2$) be the set of matches used in the rewrite step
 $g \ar g'_1$ (resp. $g \ar g'_2$). Let us assume that $M_1 \not= M_2$.
 By definition of sets $M_1$ and $M_2$, for all matches $m_i^{b_i}: l_i \to g$
 in $M_1$, there exits a match $m_i'^{b_i'}: l'_i \to g$ in $M_2$ such that $l_i$ and
 $l_i'$ are the left-hand sides of two variants $l_i \to r_i$ and  $l'_i
 \to r'_i$ of a rule $l \to r$ in
 $\grs$ such that $m_i^{b_i}(l_i) = m_i'^{b_i'}(l'_i)$. That is to say, there exist four isomorphisms reflecting the
   variant status of these two rules, say $v_i^{c_i}: l \to l_i$,
   $v_i'^{c_i}: r \to r_i$, $w_i^{d_i}: l \to l'_i$ and $w_i'^{d_i}: r \to r'_i$ such that
   $l_i = v_i^{c_i}(l)$, $r_i = v_i'^{c_i}(r)$, $l'_i = w_i^{d_i}(l)$, $r'_i = w_i'^{d_i}(r)$, 
   $\forall x \in r_i^{env}, \,   v_i^{c_i}(x) = v_i'^{c_i}(x)$ and
   $\forall x \in r_i^{'env}, \,  w_i^{d_i}(x) = w_i'^{d_i}(x)$.

   From the hypotheses, there exist two automorphisms $h_i^{a_i}: l_i \to l_i$
   and $h_i'^{a_i}: r_i \to r_i$ and two isomorphisms ${h_v}_i^{e_i}: l_i \to l'_i$
   and ${h'_v}_i^{e_i}: r_i \to r'_i$ such that
   ${h_v}_i^{e_i} = w_i^{d_i} \circ h_i^{a_i} \circ (v_i^{c_i})^{-1}$ and
   ${h'_v}_i^{e_i} = w_i'^{e_i} \circ h_i'^{a_i} \circ (v_i^{'c_i})^{-1}$.

 By following the same reasoning as in Proposition~\ref{prop-iso}, we
 can build $h''^f: g_1 \to g_2$ defined as follows, where $\mu_i^{t_i} : r_i \to
 g_1$ and $\mu_i'^{t_i} : r'_i \to g_2$ are induced by definition of
 rewrite steps ($\mu_i^{t_i}$ and $\mu_i'^{t_i}$ play the same role, for every two
 rules, as $m_1'^{b_1}$ and $m_2'^{b_2}$ in the proof of Proposition~\ref{prop-iso}).

 for $n\in \nodes_{g_1}, h''^f(n)= \left \{ \begin{array}{l}  \mu_i'^{t_i}(
                                            {h'_v}_i^{e_i}((\mu_i^{t_i})^{-1} (n)))
                                            \; if \; n \in \mu_i^{t_i}(r_i)\\
 n \; otherwise \end{array} \right.$

 for $p\in \ports_{g_1}, h''^f(p)= \left \{ \begin{array}{l}  \mu_i'^{t_i}(
                                            {h'_v}_i^{e_i}((\mu_i^{t_i})^{-1} (p)))
                                            \; if \; p \in \mu_i^{t_i}(r_i)\\
 p \; otherwise \end{array} \right.$

 for $(p,n)\in \nodes_{g_1}, h''^f(p,n)=(h''^f(p),h''^f(n))$

 for $(p,p')\in \nodes_{g_1}, h''^f(p,p')=(h''^f(p),h''^f(p'))$

 Clearly $h''^f$ is an isomorphism between pregraphs $g_1$ and $g_2$. Therefore, by
 Proposition~\ref{iso-pregraph}, $g'_1$ (which equals $\overline g_1$) is isomorphic to
 $g'_2$ (which equal $\overline g_2$). 

 \end{proof}
}

\section{Examples}
\label{sect:5}

\forgetshortv{
We illustrate the proposed framework through two examples.  We provide simple confluent rewrite systems encoding  the Koch snowflake and the mesh refinement.

}

\forgetlongv{
We illustrate the proposed framework through three examples borrowed
from different fields. We particularly provide simple confluent rewrite systems encoding cellular automata, the koch snowflake and the mesh refinement.

\subsection{Cellular automata (CA)}

A cellular automaton is based on a fixed grid composed of cells. 
 Each cell computes its new state synchronously. At instant
 $t+1$, the value of a state $k$,  denoted $x_k(t+1)$ may depend
  on the valuations at instant $t$ of
  the state $k$ itself, $x_k(t)$, and the states  $x_n(t)$ such
  that $n$ is a neighbor of $k$. Such a formula is of the following
  shape, where $f$ is a given function and  $\nu(k)$ is the set of the
  neighbors of cell $k$: $x_k(t+1)=f(x_k(t),x_n(t), n \in \nu(k))$
In the case of a graph $g$, the neighbors of a cell (node) $k$,
$\nu(k)$, is defined by :  
 $l \in \nu(k)$ iff $\exists p_1, \exists p_2, (p_1, k) \in \pn_g \wedge (p_2,l) \in {\pn_g} \wedge (p_1,p_2) \in {\pp_g}$.
Usually, the grid is oriented such that any cell of $\nu(k)$ has a unique relative position with respect to the cell $k$.
This orientation is easily modeled by distinguishing attributes on ports.
For instance, one can consider  Moore's neighborhood~\cite{Moore69} on a 2-dimensional grid. This neighborhood of radius 1 is composed of 8 neighbors. The distinguishing attributes on ports belong to the set $\Att=\{e,w,n,s,ne,se,nw,sw \}$
which defines the 8 directions where e = east, w = west, n = north, s =
 south etc. 

The grid is defined by a graph 
 $g=( {\N_g}, {\ports_g}, {\pn_g}, \pp_g, \Att_g, \att_g)$ such that :
\begin{itemize}
\item $\N_g=\{m_{i,j}\}_{i \in I, j \in J} $, where intervals $I$ and
  $J$ are defined as $I=[- N,N]\cap \mathbb{Z} $ and $J=[- N',N']\cap
  \mathbb{Z} $ for some natural numbers $N$ and $N'$.

\item ${\ports_g}=\{ e_{i,j} , w_{i,j},s_{i,j},n_{i,j} , ne_{i,j} , nw_{i,j},$
$se_{i,j},sw_{i,j} |\  i \in I, j \in J\} $,

\item ${\pn_g}=\{
(e_{i,j}, m_{i,j}  ) , (w_{i,j}, m_{i,j}),(s_{i,j},m_{i,j}),   
(n_{i,j},m_{i,j}), (ne_{i,j},m_{i,j}),
 (nw_{i,j},m_{i,j}), (se_{i,j},
m_{i,j}),$
$(sw_{i,j},m_{i,j}) | i \in I, j \in J \}$,

\item ${\pp_g}=\{
(e_{i,j}, w_{i,j+1}), (w_{i,j}, e_{i,j-1}), (n_{i,j}, $
$s_{i-1,j}), (s_{i,j}, n_{i+1,j}),
 (ne_{i,j}, sw_{i-1,j+1}), $
$(se_{i,j},\\ nw_{i+1,j+1}), (nw_{i,j}, se_{i-1,j-1}), (sw_{i,j}, $
$ne_{i+1,j-1}) | \ i \in I, j \in J  \}$,

\item $\forall i \in I, \forall j \in J $,  ${\att_g}(m_{i,j}) \subseteq \Att_{g}$,
\item $\forall i \in I, \forall j \in J $, ${\att_g}(e_{i,j})= \{e\}$, ${\att_g}(w_{i,j})=\{w\}$,${\att_g}(s_{i,j})=\{s\}$, ${\att_g}(n_{i,j})=\{n\}$, ${\att_g}(ne_{i,j})=\{ne$, ${\att_g}(nw_{i,j})=\{nw\}$, ${\att_g}(se_{i,j})=\{se\}$, ${\att_g}(sw_{i,j})=\{sw\}$.
\end{itemize}

The attributes of the nodes correspond to states of the cells. They
belong to a set $\Att$. To implement the dynamics of the
automaton one needs only one rewrite rule $\{\rho = \lhs \to \rhs\}$ which
corresponds to the function $f$. The rule does not modify the
structure of the grid but modifies the attributes of nodes. Thus a
left-hand side has a structure of a star with one central node (see Figure~\ref{game}), for
which the rule at hand expresses its dynamics, surrounded by its
neighbors. Nodes, ports and edges of the left-hand side belong to the
environment part of the rule. Only the attribute of the central node
belongs to the cut part since this attribute is modified by the rule. In the left-hand-side, the attributes of
nodes are variables to which values are assigned during the matches.
The right-hand-side is reduced to a single node named $i$. Its
attribute corresponds to the new part of the right-hand side. 

Figure~\ref{game} illustrates such rules by implementing the well
known \emph{game of life}. It is defined using
Moore's neighborhood and the dynamics of the game is defined on a
graph $g$ such that attributes of nodes are in $\{0,1 \}$ and

$x_i(t+1)= ((\sum_{l \in \nu(i)}x_l(t))=?=3) + ((x_i(t)=?=1) $
$\times (\sum_{l \in \sigma(i)}x_l(t))=?=2)) $

where $(x=?=y) \Leftrightarrow \left \{ \begin{array}{l}  1 \; if \; x=y\\0 \; otherwise
\end{array} \right .$

The neighborhood of a node $i$ and its dynamics verify the symmetry
condition, thus there is no need to define attributes on ports. 
The rewriting relation $\ar$ is applied 
 on the rewrite system ${\cal R}=\{\rho=\lhs \to  \rhs \}$ reduced to
 one rule depicted in Figure~\ref{game}. More precisely the graphs of
 the rule as defined as follows: 

$\lhs=(\nodes_l,\ports_l, \pn_l, \pp_l, \Att_l, \att_l)$ with
\begin{itemize}
\item
$\nodes_l=\nodes_l^{env}=\{i,a,b,c,d,e,f,g,h\}$,
\item
$\ports_l = \ports_l^{env}= \{i_1,i_2,i_3,i_4,i_5,i_6, i_7,i_8,$
$a_1,b_1,c_1,d_1,e_1,f_1,g_1,h_1\}$,
\item $\pn_l=\pn_l^{env}=\{(a_1,a),(b_1,b),(c_1,c),(d_1,d),$
$(e_1,e),(f_1,f),(g_1,g),(h_1,h),(i_1,i)(i_2,i),(i_3,i),$
$(i_4,i),(i_5,i),(i_6,i),(i_7,i),(i_8,i) \}$,
\item
$\pp_l=\pp_l^{env}=\{(i_1,a_1)(i_2,b_1),(i_3,c_1),(i_4,d_1),$
$(i_5,e_1),(i_6,f_1),(i_7,g_1),(i_8,h_1) \}$.
\item $\Att_l = \{0,1, x_i\} \cup \{y_q \; | \; q \in \{a,b,c,d,e,f,g,h\}\}$ and $\att_l= \att_l^{env} \cup \att_l^{cut}$ with
$\att_l^{cut}: \{i\} \to \Att_l$ such that $\att_l^{cut}(i) =\{x_i\}$
; and
$\att_l^{env}:  \{a,b,c,d,e,f,g,h \} \to \Att_l $ such that
$\att_l^{env}(q) = \{y_q\}$
\end{itemize}

$\rhs=(\nodes_r,\ports_r, \pn_r, \pp_r, \Att_r, \att_r)$ with
\begin{itemize}
\item
$\nodes_r=\nodes_r^{env}=\{i\}$,
\item
$\ports_r=\emptyset$, $\pn_r=\emptyset$, $\pp_r=\emptyset$.
\item $\Att_r = \Att_l$
\item Moreover, on nodes,  $\att_r=  \att_r^{new}$  ($\att_r^{env}$ being
  empty) with
$\att_r^{new}: \{i\} \to Att_r $ and $ \att_r^{new}(i)= \{ ((y_a+y_b+y_c+y_d+y_e+y_f+y_g+y_h)=?=3) +((x_i=?=1) \times  (( y_a+y_b+y_c+ y_d+y_e\\+y_f+y_g+y_h)=?=2))  \}$.

\end{itemize}

\begin{figure}[t] \centering
\includegraphics[scale=0.3]{./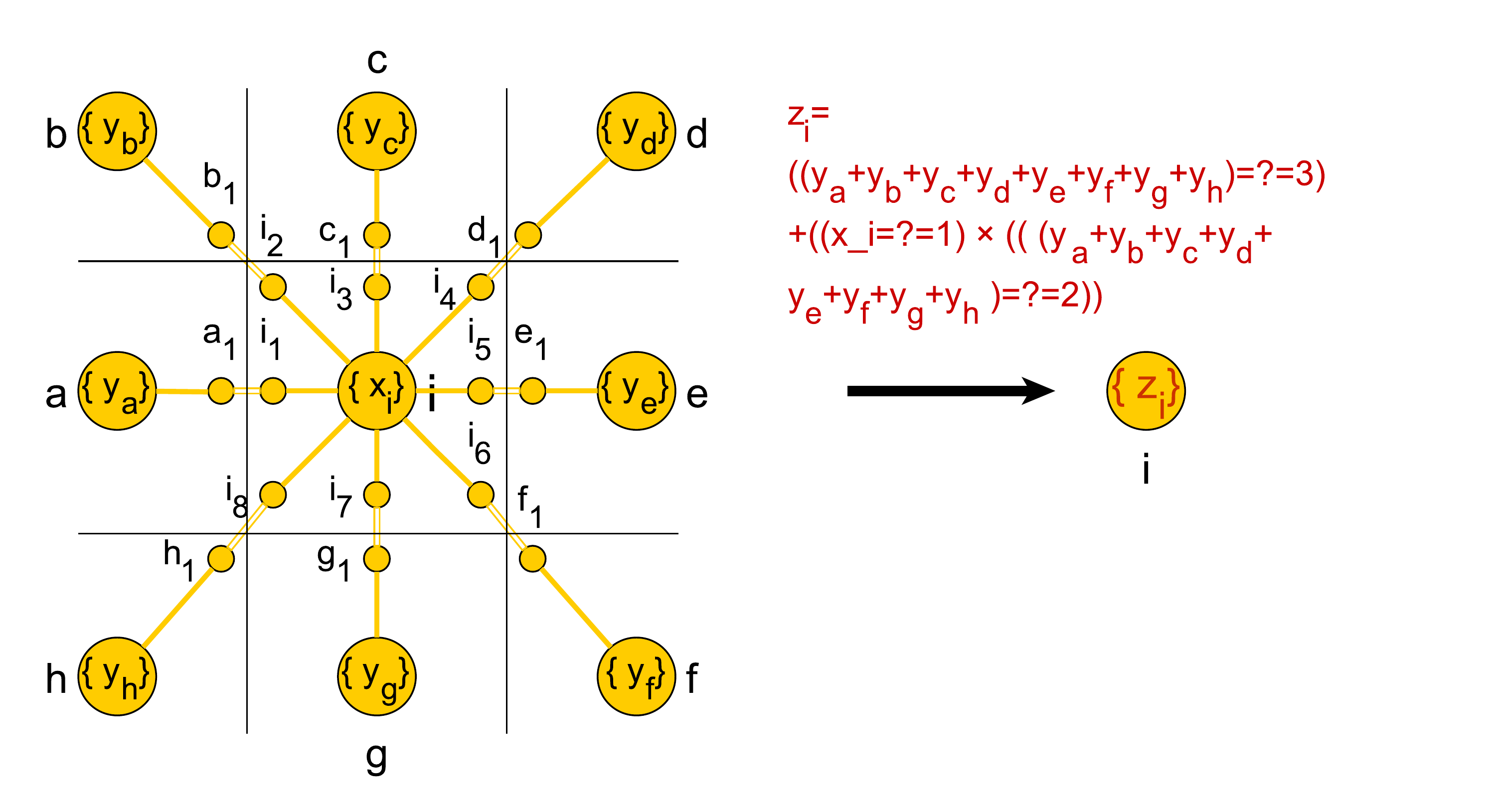}

\caption{game of life  rule }
\label{game}
\end{figure}
In the classical formulation of cellular automata, a cell contains one and only one value.
The model we propose can deal with cells  with one or several values.
For instance, the initial state of the game of life can be a grid containing $\{0\}$'s  except for $4$ cells describing a square (see Figure~\ref{grid1}(a)).
\begin{figure}[t] \centering
\tiny
$\begin{array}{c|c|c|c|c|c}
&&&&&\\
\hline
\, &\{0\}&\{0\}&\{0\}&\{0\}& \,\\
\hline
&\{0\}&\{1\}&\{1\}&\{0\}&\\
\hline
&\{0\}&\{0,1\}&\{1\}&\{0\}&\\
\hline
\, &\{0\}&\{0\}&\{0\}&\{0\}& \,\\
\hline
&&&&&\\
\end{array}
\quad
\begin{array}{c|c|c|c|c|c}
&&&&&\\
\hline
\, &\{0\}&\{0\}&\{0\}&\{0\}& \,\\
\hline
&\{0\}&\{1\}&\{1\}&\{0\}&\\
\hline
&\{0\}&\{1\}&\{1\}&\{0\}&\\
\hline
\, &\{0\}&\{0\}&\{0\}&\{0\}& \,\\
\hline
&&&&&\\
\end{array}
$

(a) \hspace{4cm} (b)
\caption{(a) initial grid; (b) fixed point }
\label{grid1}
\end{figure}

In this configuration one cell have 2 values which means, on the example, that the cell is dead or alive or we don't have any information on the state of the cell.
The behavior of all possible trajectories is computed in parallel and 
the fixed point is reached.
The initial state Figure~\ref{grid2}(a) yields Figure~\ref{grid2}(b) as a fixed point. Here we observe that the indeterminacy concerns at most 4 cells
over time.

\begin{figure}[t]
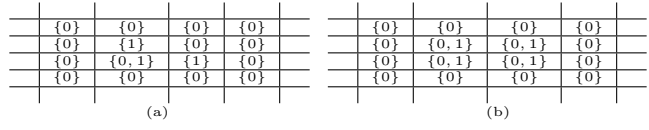
 \centering
\tiny

$\begin{array}{c|c|c|c|c|c}
&&&&&\\
\hline
\, &\{0\}&\{0\}&\{0\}&\{0\}& \,\\
\hline
&\{0\}&\{1\}&\{0\}&\{0\}&\\
\hline
&\{0\}&\{0,1\}&\{1\}&\{0\}&\\
\hline
\, &\{0\}&\{0\}&\{0\}&\{0\}& \,\\
\hline
&&&&&\\
\end{array}
\quad \begin{array}{c|c|c|c|c|c}
&&&&&\\
\hline
\, &\{0\}&\{0\}&\{0\}&\{0\}& \,\\
\hline
&\{0\}&\{0,1\}&\{0,1\}&\{0\}&\\
\hline
&\{0\}&\{0,1\}&\{0,1\}&\{0\}&\\
\hline
\, &\{0\}&\{0\}&\{0\}&\{0\}& \, \\
\hline
&&&&&\\
\end{array}
$

(a) \hspace{4cm} (b)
\caption{(a) initial grid; (b) fixed point }
\label{grid2}
\end{figure}

 }



\subsection{The Koch snowflake}
\label{Koch1D}
The well-known Koch snowflake is based on segment divisions (variants exist on surfaces,  both can be modeled by our formalism).
Each segment is recursively  divided into three segments of equal length as described in the following picture :

\includegraphics[scale=0.28]{./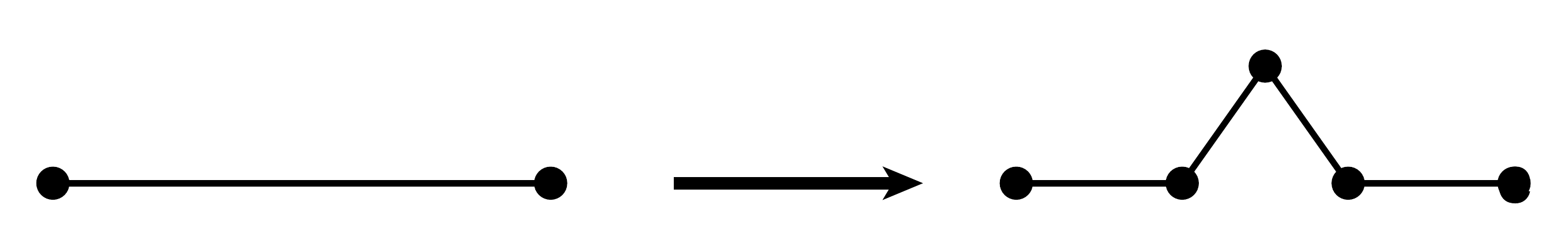}

Let us consider the following  triangle $g$ as an initial state.

 $g=( {\N}_g, \ports_g, \pn_g, \pp_g, \Att_g, \att_g)$ with
$\nodes_g=\{1,2,3 \}$ , $\ports_g=\{p_1,q_1,p_2,q_2,p_3,q_3 \}$, $\pn_g=\{ (p_1,1), (q_1,1),(p_2,2),(q_2,2),(p_3,3),(q_3,3) \}$ , $\pp_g=\{(p_1,q_2),(p_2, q_3),(p_3,q_1) \}$.

 $  \att_g(1)=\left ( \begin{array}{c} -1\\0 \end{array} \right )$, $  \att_g(2)=\left ( \begin{array}{c} 0\\ \sqrt 2\end{array} \right )$, $  \att_g(3)=\left ( \begin{array}{c} 1\\ 0\end{array} \right )$, $ \att_g(p_1) =\att_g(p_2) =\att_g(p_3) =\{-\}$, $\att_g(q_1) =\att_g(q_2) =\att_g(q_3) =\{+\}$.
 
 The attributes of ports are distinguishing attributes. The attributes
 of nodes are the $\mathbb{R}^2$ positions of the
 nodes. Every node got one attribute in $\mathbb{R}^2$, thus by abuse of
 notation, we get rid of the set notation of attributes and use a
 functional one. The implementation of both relations  $\fpr$ and $\ar$ using the rule depicted in Figure~\ref{flocon} provide the expected pictures of flakes as in Figures~\ref{flocon2}.
\begin{figure}[t] \centering
\includegraphics[scale=0.25]{./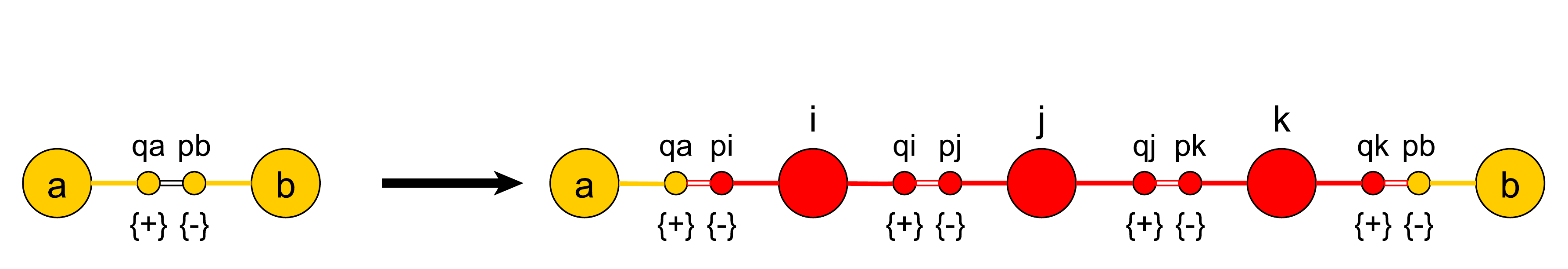}

\caption{Koch Snowflake rule $l  \to r$ with the node attribute computation  
$ \att_r(i)=\frac{2}{3}  \att_l(a)+\frac{1}{3}  \att_l(b)$,
 $ \att_r(j)=\frac{1}{2} ( \att_l(a)+  \att_l(b))+\frac{\sqrt 3}{6}(-  \att_l(a)^{T}+ \att_l(b)^T)$ ,
 $ \att_r(k)=\frac{1}{3}  \att_l(a)+\frac{2}{3}  \att_l(b)$}
\label{flocon}
\end{figure}

Let us denote
$  \att_l(a)=\left ( \begin{array}{c} x_a\\ y_a\end{array} \right )$
and
$  \att_l(b)=\left ( \begin{array}{c} x_b\\ y_b \end{array} \right
)$.
In this example, the attributes of nodes $i, j$ and
$k$ are defined as follows:
$  \att_r(i)=\frac{2}{3}  \att_l(a)+\frac{1}{3}$ \;\;\;\;\; $\att_l(b)= \left
  ( \begin{array}{c} \frac{2}{3} x_a+\frac{1}{3} x_b \\ \frac{2}{3}
    y_a+\frac{1}{3} y_b\end{array} \right ) $,

$ \att_r(j)= \frac{1}{2} ( \att_l(a)+ \att_l(b))+\frac{\sqrt
  3}{6}( \att_l(a)^{T}+ \att_l(b)^T)=\left ( \begin{array}{c}
    \frac{1}{2} (x_a+x_b)+\frac{\sqrt
                                                 3}{6} (y_a-y_b ) \\
                                                 \frac{1}{2}
                                                 (y_a+y_b)+\frac{\sqrt
                                                 3}{6} (-x_a+x_b
                                                 )\end{array} \right )
 $, and $  \att_r(k)=\frac{1}{3}  \att_l(a)+\frac{2}{3}  \att_l(b)= \left ( \begin{array}{c} \frac{1}{3} x_a+\frac{2}{3} x_b \\ \frac{1}{3} y_a+\frac{2}{3} y_b\end{array} \right ) $

\begin{figure}[t] \centering
\hspace{-0.5cm}
\includegraphics[scale=0.3]{./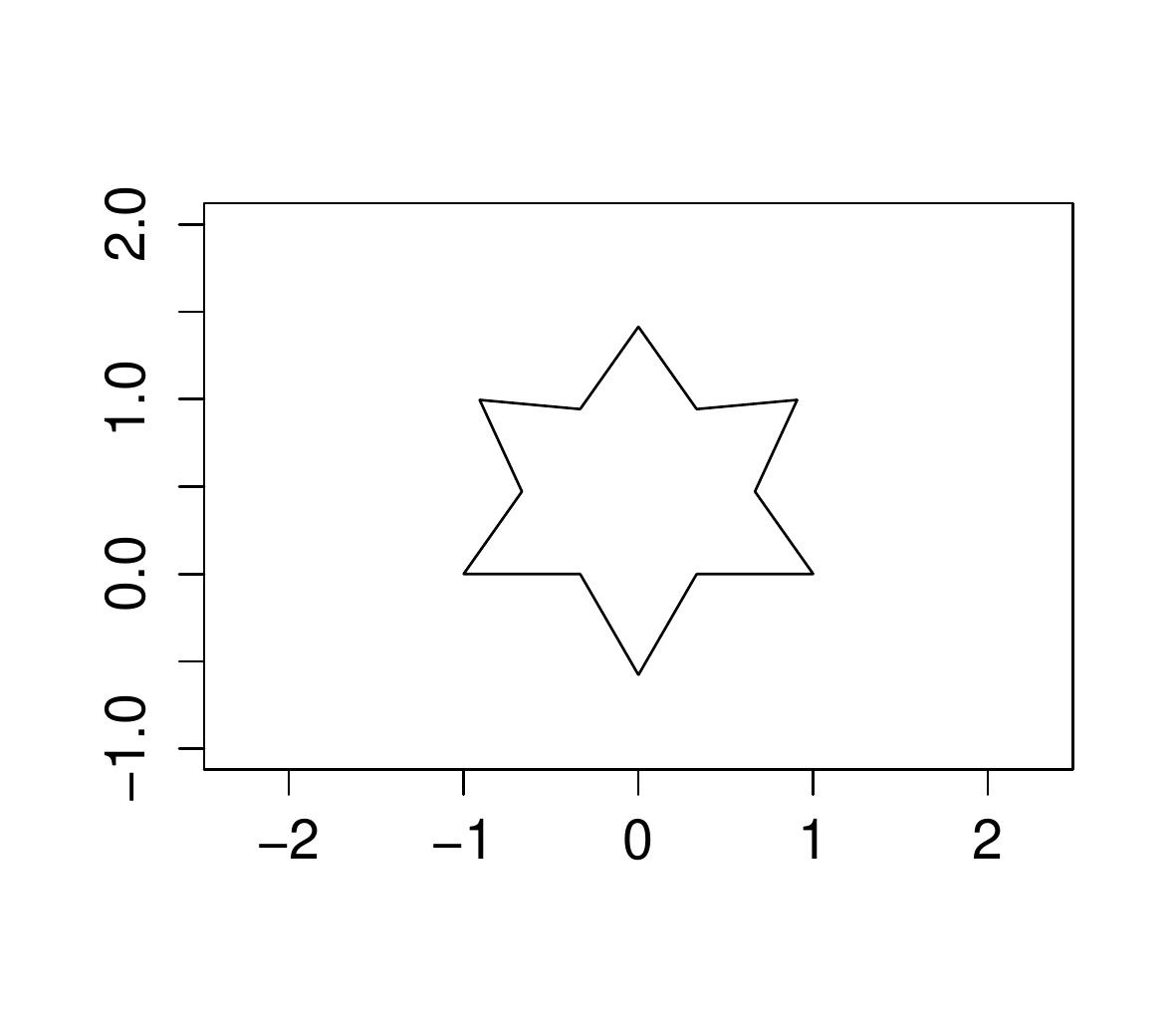} \hspace{-0.3cm}
\includegraphics[scale=0.3]{./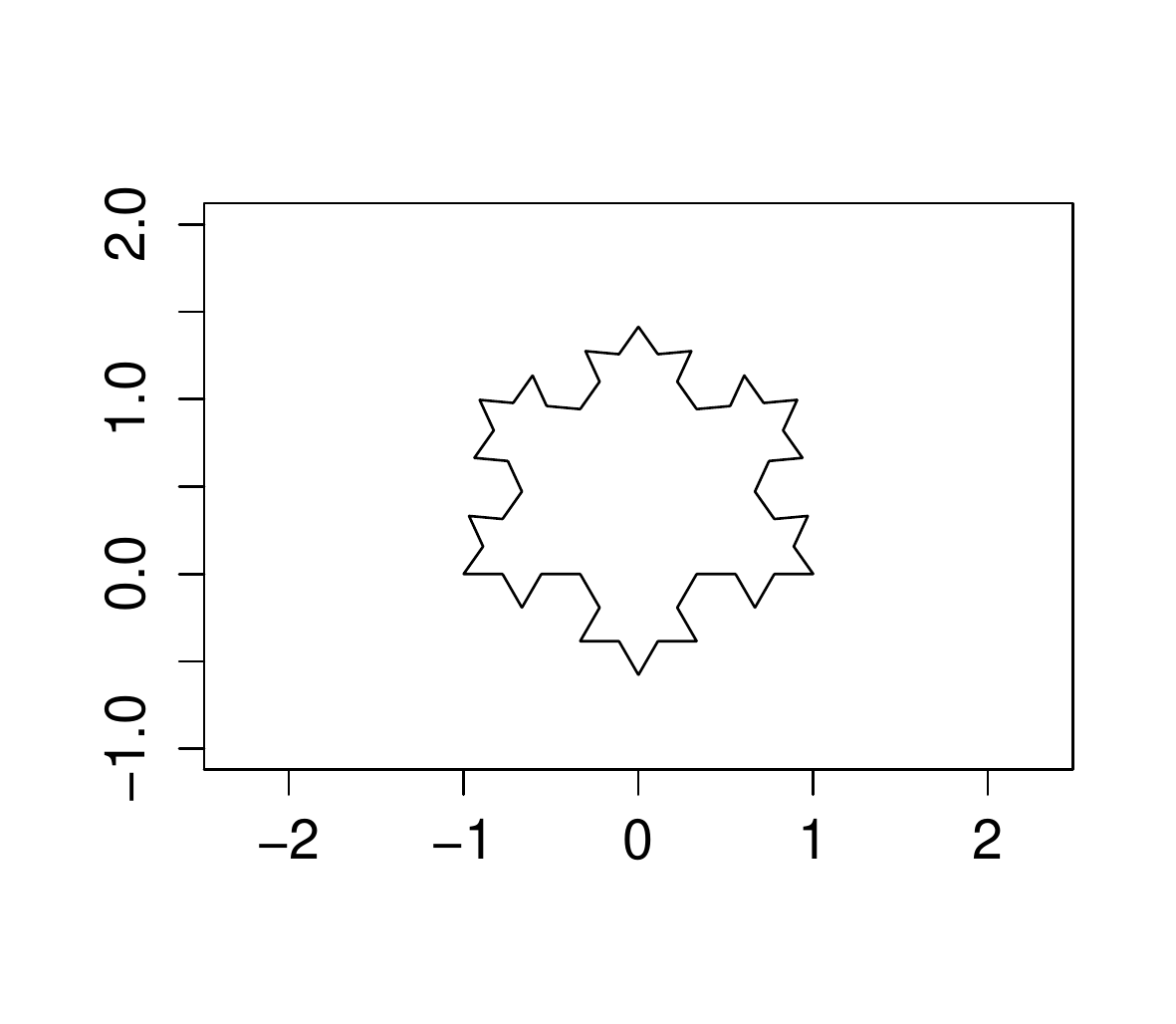} \hspace{-0.3cm}
\includegraphics[scale=0.3]{./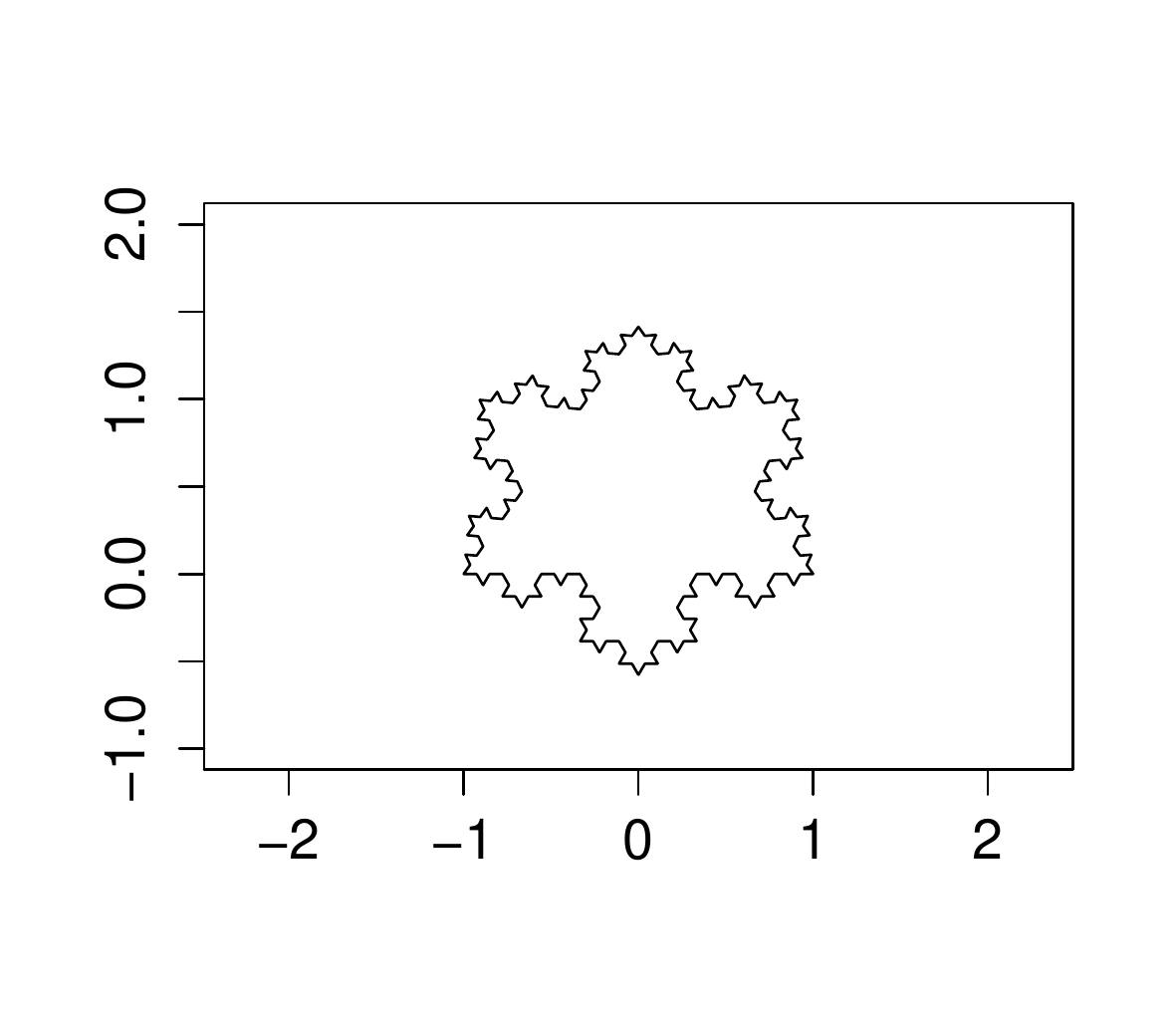} \hspace{-0.3cm}
\includegraphics[scale=0.3]{./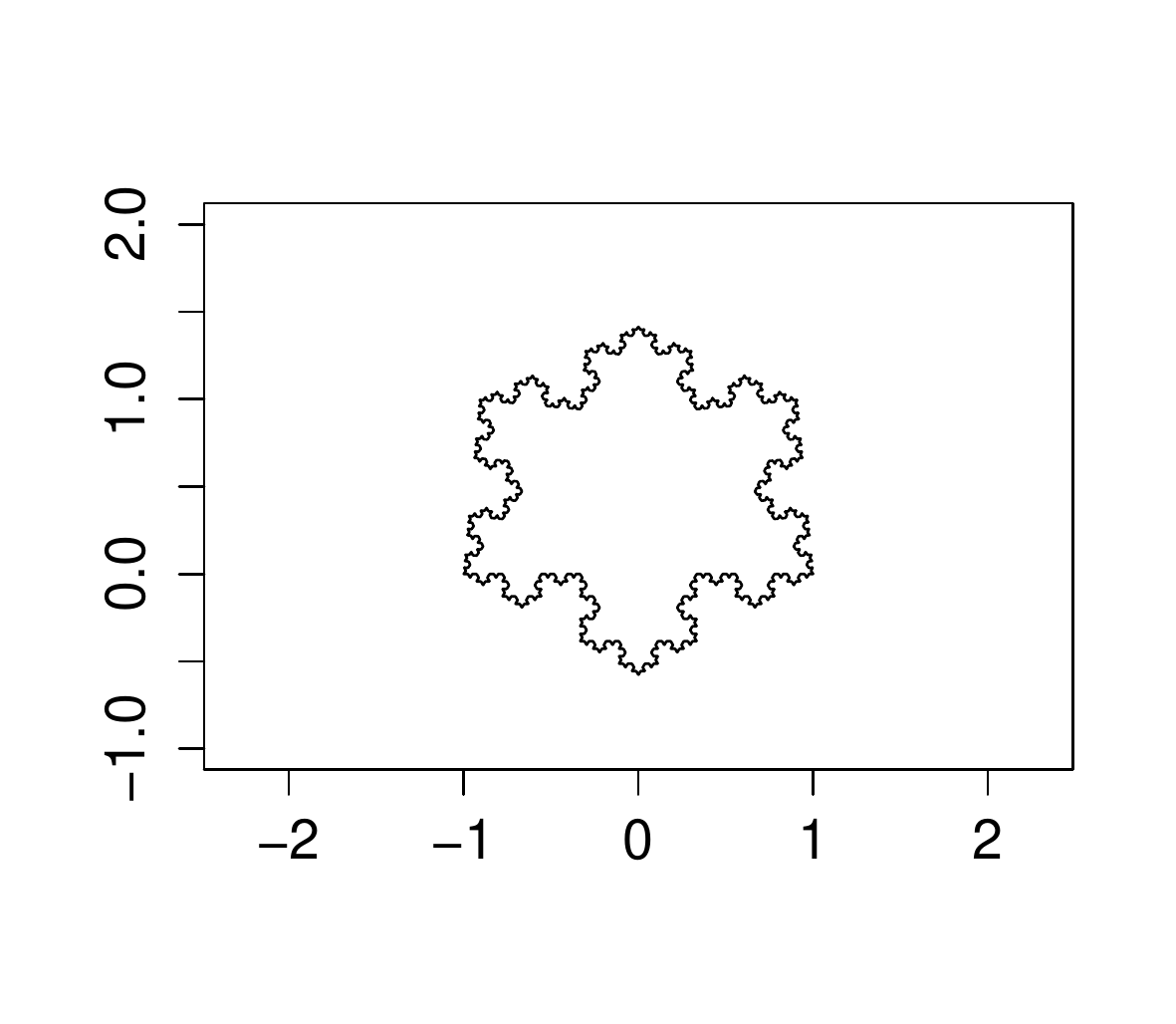}

\caption{Flake results : flake at the different time steps 1,2,3 and 4}
\label{flocon2}
\end{figure}

\subsection{Mesh refinement}

Mesh refinement  consists in creating iteratively new partitions of the considered space.
The initial mesh $g$ we consider is depicted
Figure~\ref{result}. Distinguishing attributes are given on
ports. Attributes on nodes are omitted but we can easily consider
coordinates.
Triangle refinements are given in Figure~\ref{refinement}. The three
rules verify the \syc\ and we apply the $\ar$ relation on  $g$ to
obtain the graph $g'$ described in Figure~\ref{result}.
Iteratively, the rewrite system can be applied again on $g'$ and so forth.

\begin{figure}[t] \centering
\hspace*{-0.6cm} \includegraphics[scale=0.2]{./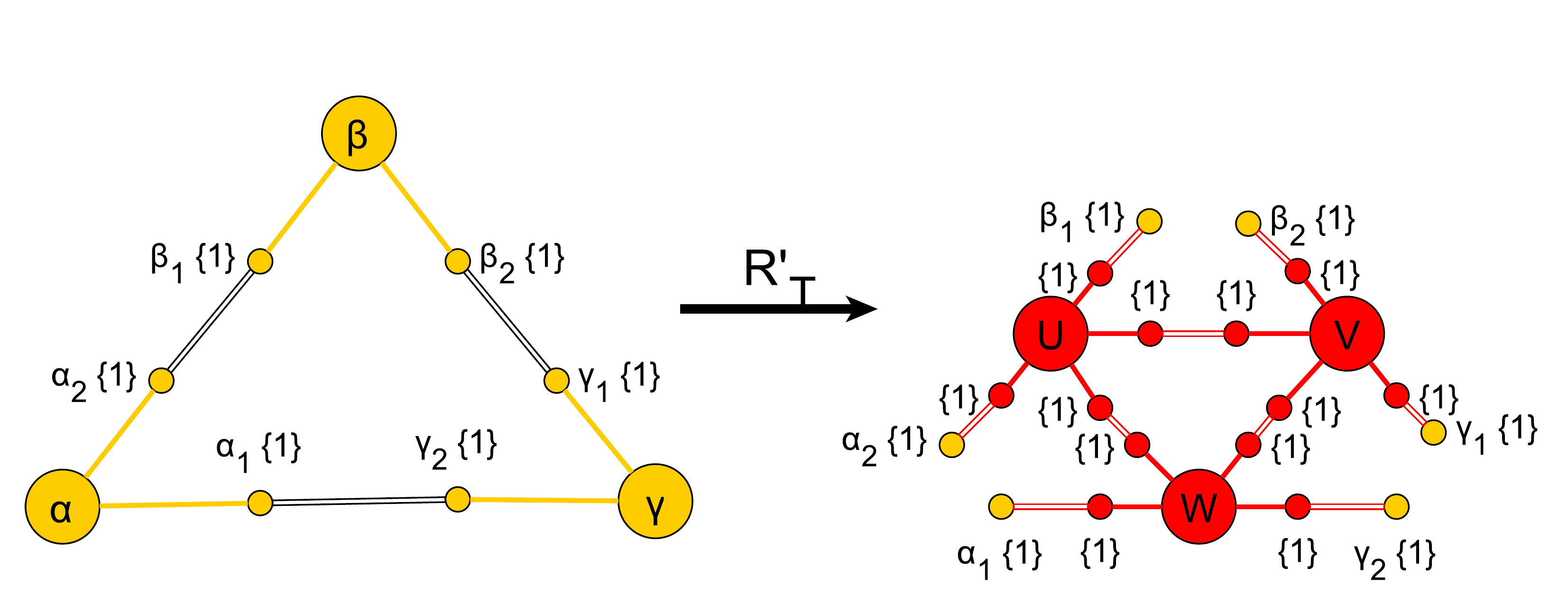}
\includegraphics[scale=0.2]{./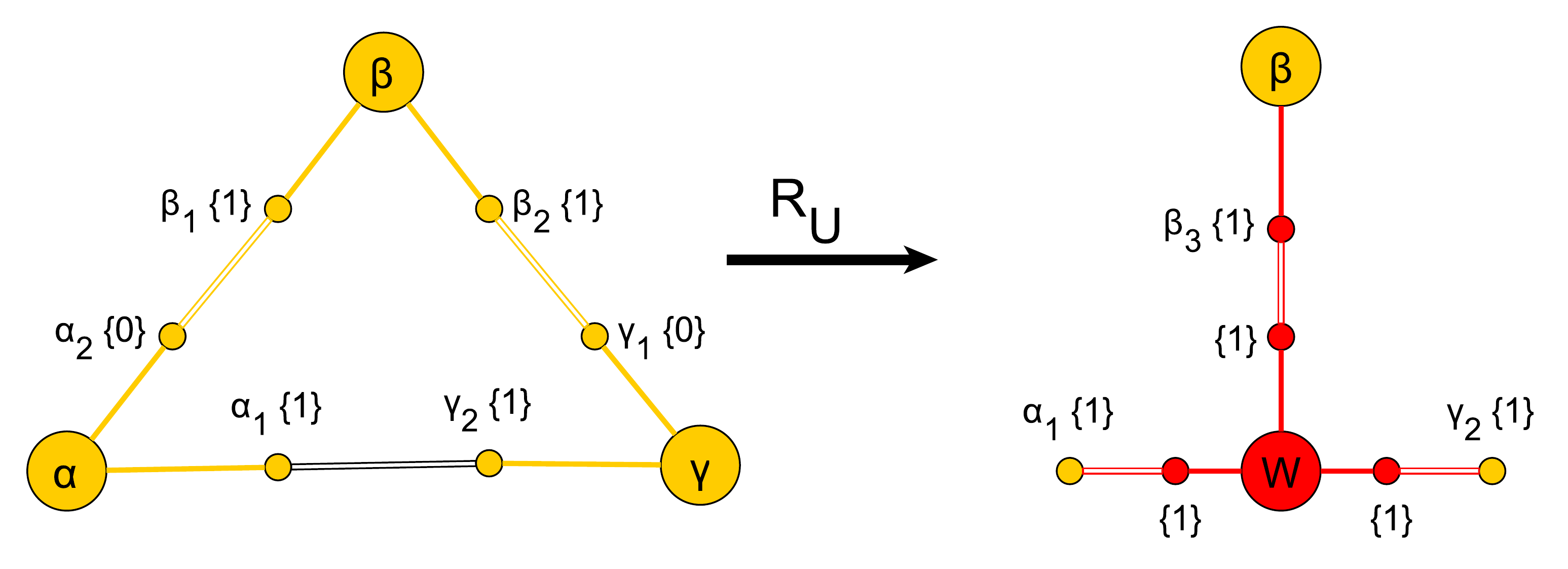}
\includegraphics[scale=0.2]{./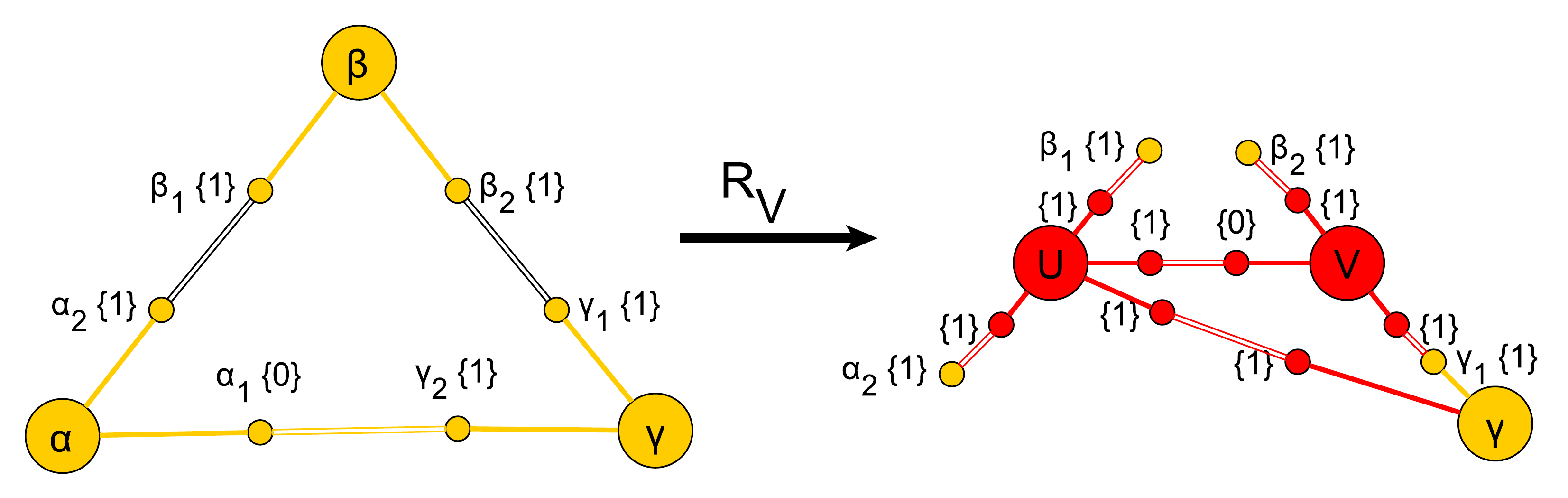}
\caption{The rules $R'_T$, $R_U$, $R_V$ are refinement rules defined
  e.g. in \cite{Bank83}}
\label{refinement}
\end{figure}

\begin{figure}[t] \centering
\includegraphics[width=90mm,height=58mm]{./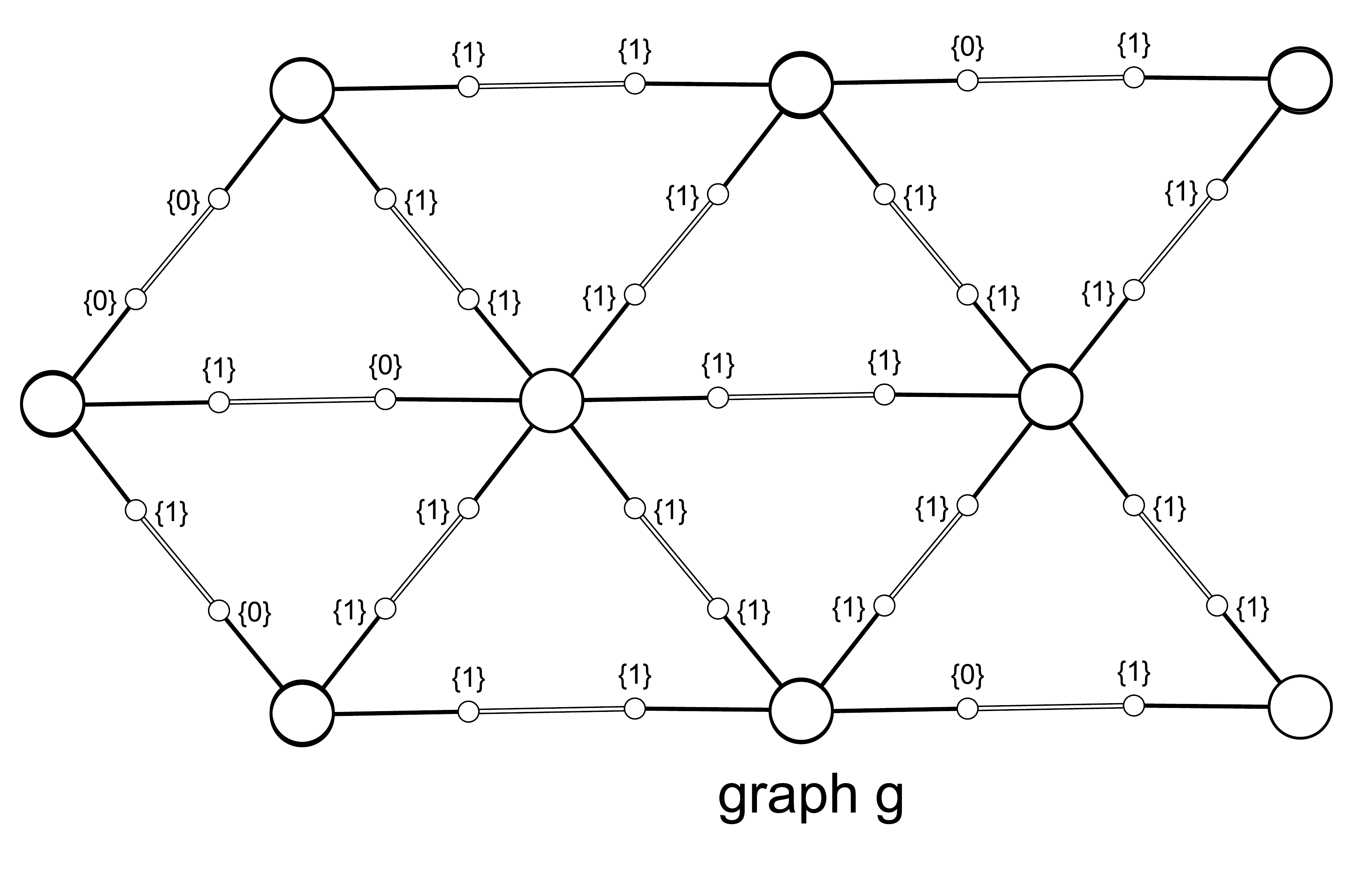}
\includegraphics[width=90mm,height=58mm]{./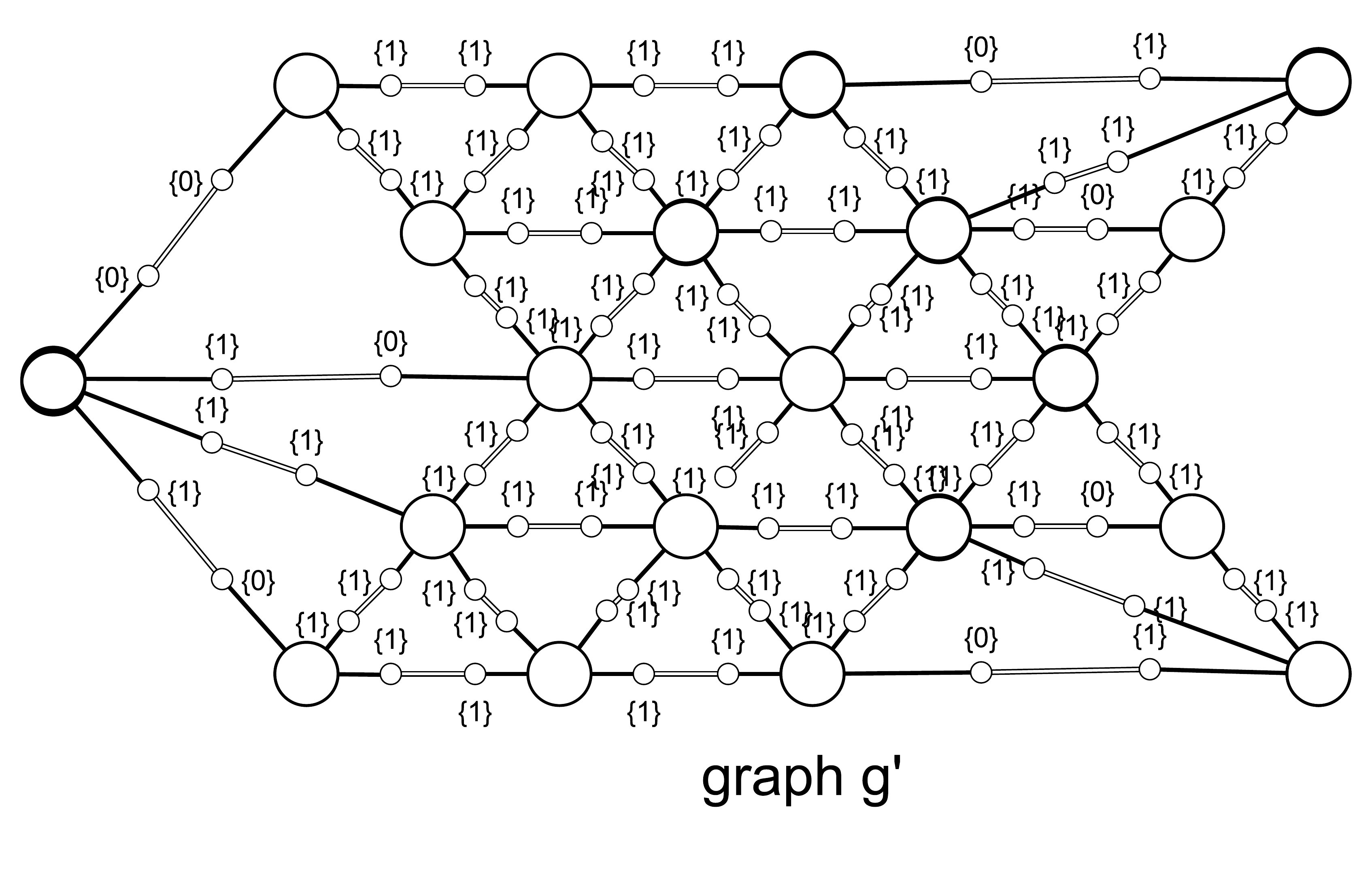}
\caption{$g \ar g'$}
\label{result}
\end{figure}

\section{Conclusion and Related Work}
\label{sect:6}

Parallel rewriting technique is a tough issue when it has to deal with
overlapping reducible expressions. In this paper, we have proposed a
framework, based on the notion of rewriting modulo, to deal with graph transformation where parallel reductions may
overlap some parts of the transformed graph. In general, these
transformations do no lead to graphs but to a structure we call
pregraphs. We proposed sufficient conditions which ensure that graphs are
closed under parallel transformations. We also defined two parallel
transformations: (i) one that fires all possible rules in parallel
(full parallel) and (ii) a second rewrite relation which takes
advantage of the possible symmetries that may occur in the rules by
reducing the possible number of matches that one has to consider.  The
two proposed parallel rewrite relations are confluent (up to
isomorphisms).

Our proposal subsumes some existing formalisms where simultaneous
transformations are required such as cellular automata
\cite{Wolfram02} or (extensions of) L-systems
\cite{lindenmayer96}. Indeed, one can easily write graph rewriting
systems which define classical cellular automata, with possibly
evolving structures (grids) and where the content of a cell, say $C$,
may depend on cells not necessary adjacent to $C$. As for L-systems,
they could be seen as formal (context sensitive) grammars which fire
their productions in parallel over a string. Our approach here
generalizes L-systems at least in two directions: first by considering
graphs instead of strings and second by considering overlapping graph
rewrite rules instead of context sensitive (or often context free)
rewrite rules. Some graph transformation approaches could also be
considered as extension of L-systems such as star-grammars
\cite{Nagl79} or hyperedge replacement \cite{Habel92}. These
approaches do not consider overlapping matches but act as context free
grammars. However, in \cite{EK76} parallel graph grammars with
overlapping matches have been considered. In that work, overlapping
subgraphs remain unchanged after reductions, contrary to our framework
which does not require such restrictions. The idea behind parallel
graph grammars has been lifted to general replacement systems in
\cite{Taen96}.  Amalgamation, see e.g.\cite{Lowe93}, aims at
investigating how the parallel application of two rules can be
decomposed into a common part followed by the remainder of the two
considered parallel rules. Amalgamation does not consider full
parallel rewriting as investigated in this paper.
Another approach based on complex transformation has been introduced
in \cite{MS15}. This approach can handle overlapping matches but
requires from the user to specify the transformation of these common
parts. This requires to provide detailed rules. For instance, the two
first cases of the triangle mesh refinement example requires about
sixteen rules including local transformations and inclusions, instead
of two rules in our framework.

The strength of our approach lies in using an equivalence relation on
the resulting pregraph. This equivalence plays an important role in
making graphs closed under rewriting. Other relations may also be
candidate to equate pregraphs into graphs. we plan to investigate such
kind of relations in order to widen the class of rewrite systems that
may be applied in parallel on graph structures in presence of
overlaps. We also plan to investigate other issues such as stochastic
rewriting and conditional rewriting which would be a plus in modeling some
natural phenomena. Analysis of the proposed systems remains to be
investigated further.


\end{document}